\tikzstyle{vertex}=[circle, draw, inner sep=0pt, minimum size=6pt]
\newtheorem{theorem}{Theorem}
\newtheorem{lemma}{Lemma}
\newtheorem{definition}{Definition}
\newtheorem{corollary}{Corollary}
\newtheorem{claim}{Claim}
\newtheorem*{theorem*}{Theorem}
\newtheorem*{lemma*}{Lemma}
\newcommand{\hf}{\hat{f}}
\newcommand{\tf}{\tilde{f}}
\newcommand{\hg}{\hat{g}}
\newcommand{\tg}{\tilde{g}}
\newcommand{\nphard}{NP-hard}
\newcommand{\mcD}{\mathcal{D}}
\newcommand{\mcU}{\mathcal{U}}
\newcommand{\mcC}{\mathcal{C}}
\newcommand{\mcF}{\mathcal{F}}
\newcommand{\ip}{\textsf{IP}}
\newcommand{\op}{\textsf{OP}}
\newcommand{\im}{\textsf{IM}}
\newcommand{\all}{\textsf{G}}
\DeclareMathOperator{\fix}{\textsf{fix}}
\DeclareMathOperator{\val}{\textsf{val}}
\DeclareMathOperator{\conv}{Conv}
\DeclareMathOperator{\poly}{poly}
\newtheorem{result}{Result}
\title{Partial Function Extension with Applications to Learning and Property Testing}
\author{Umang Bhaskar\thanks{Work supported in part by a Ramanujan fellowship. Email: \texttt{umang@tifr.res.in}} }
\author{Gunjan Kumar\thanks{Email: \texttt{gunjan.kumar@tifr.res.in}}}
\affil{Tata Institute of Fundamental Research, Mumbai}
\begin{document}
	\maketitle

	\begin{abstract}
		In \emph{partial function extension}, we are given a partial function consisting of $n$ points from a domain and a function value at each point. Our objective is to determine if this partial function can be extended to a function defined on the domain, that additionally satisfies a given property, such as convexity. This basic problem underlies research questions in many areas, such as learning, property testing, and game theory. We formally study the problem of extending partial functions to satisfy fundamental properties in combinatorial optimization, focusing on upper and lower bounds for extension and applications to learning and property testing.
		
		\begin{itemize}
			\item For subadditive functions, we show the extension problem is coNP-complete, and we give tight bounds on the approximability. We also give an improved lower bound of $\Omega(\sqrt{m})$ for learning subadditive functions. Previously, Balcan et al. (2012) gave a lower bound of $\Omega(\sqrt{m}/\log m)$ for this problem. We also give the first nontrivial testers for subadditive and XOS functions. 
			
			\item For submodular functions, we show that if a partial function can be extended to a submodular function on the lattice closure\footnote{The \emph{lattice closure} $LC(\mcD)$ of set of points $\mcD$ is the minimal set that contains $\mcD$ and is closed under union and intersection.} of the partial function, it can be extended to a submodular function on the entire domain. We obtain algorithms for determining extendibility in a number of cases, including if $n$ is a constant, or the points are nearly the same size. The result uses a combinatorial certificate for non-extendibility which we call a square certificate. Seshadhri and Vondrak (2014) previously  give a characterization in terms of path certificates. The complexity of extendibility is in general unresolved.
			
			\item Lastly, for convex functions in $\mathbb{R}^m$, we show an interesting juxtaposition: while we can determine the existence of an extension efficiently, computing the value of a widely-studied convex extension at a given point  is strongly NP-hard.
		\end{itemize}
	\end{abstract}

	\thispagestyle{empty}

	\clearpage

	\pagenumbering{arabic}
	\section{Introduction}

A \emph{partial function} consists of a set $\mcD$ of points from a domain, and a real value at each of the points. Given a property $P$, the \emph{partial function extension} problem is to determine if there exists a \emph{total} function $f$  ($f$ is defined on the entire domain) that \emph{extends} the partial function ($f$ equals the given value at each point in $\mcD$) and satisfies $P$. E.g., property $P$ could be linearity, and we are required to determine if there exists a linear function that extends the given partial function. In this paper, we study partial function extension when $\mcD$ is finite, to fundamental properties in combinatorial optimization.

The problem of partial function extension underlies research and techniques in a number of different areas, and is hence intensely studied. We mention three such areas.  Firstly, in \emph{property testing}, a function is given by an oracle, and the problem is to determine with high probability by querying the oracle whether the function satisfies a required property, or is far from it. The focus in property testing is on algorithms with optimal query-complexity. 
Typically a testing algorithm cleverly queries some sets and rejects  if the values at the queried sets cannot be extended to a function with the required property. Clearly characterizing when a partial function can be extended  plays an important role here. The problem of partial function extension, and its connection to property testing, is also explicitly raised by Seshadhri and Vondrak~\cite{submodularity}. 

Secondly, in \emph{learning theory}, the goal is to understand if a family of functions can be learned by random samples. That is, does there exist an efficient algorithm that for any target function in the family takes as input the function values at a set of sampled points, and returns a function that is ``close'' to the target function? Here, partial function extension can be used to give lower bounds on the learnability of various function classes (and has been used thus in previous papers, e.g., Balcan, Constantin, Iwata and Wang~\cite{BalcanCIW12}). 

Thirdly, in economics, given data from experiments regarding agent behaviour (such as the purchases made by an agent, or the bids of a (truthful) bidder in auctions), \emph{revealed preference theory} studies whether the data is rationalizable by utility functions with a particular property. That is, whether there exists a utility function with a particular property that is consistent with the observed data. A natural assumption for utility functions is ``diminishing marginal returns'', which translates to submodularity for indivisible goods, and concavity for divisible goods~\cite{LehmannLN06}. Other assumptions on utility functions are also common, e.g., subadditivity, XOS, etc~\cite{BalcanCIW12,BhawalkarR11}. The problem of deciding rationalizability by utility functions with these properties is exactly the problem we study.

In each of the above areas, a basic step towards a solution is often to determine if a given partial function can be extended to a total function. In this paper, rather than a means to an end, we study the problem of partial function extension itself. We focus on the complexity of deciding if a partial function can be extended to functions satisfying fundamental properties --- subadditivity, XOS, submodularity, and convexity. These represent perhaps the most commonly studied classes of functions in all of combinatorial optimization. In obtaining our results on partial function extension, we show that the structural lemmas can be used to obtain several results for property testing and learning, thus validating a direct study of partial function extension.

Formally, a \emph{partial function} is a set of duples	$H = \{ (T_1,f_1),$ $(T_2,f_2),$ $\dots,$ $(T_n,f_n)\}$, with $T_i$ in the domain $\{0,1\}^m$ or $\mathbb{R}^m$, and $f_i \in \mathbb{R}$ the observed function value at $T_i$. Additionally, we are given a property $P$. The \emph{$P$-Extension} problem is to determine if there exists a total function $f$ defined on the domain $\{0,1\}^m$ or $\mathbb{R}^m$ that satisfies property $P$ and extends the given partial function $H$, i.e., $f(T_i) = f_i$ for all $i \in \{1, \dots, n\}$.  We also consider the \emph{Approximate $P$-Extension} problem, where we want to determine the minimum multiplicative error for a given partial function to extend to a function that satisfies the given property. That is, in Approximate $P$-Extension, we want to find the minimum $\alpha \ge 1$ such that  a function $f$ satisfies property $P$ and additionally, $f_i \le f(T_i) \le \alpha f_i$ for all $i \in \{1, \dots, n\}$. 
If Approximate $P$-Extension is computationally hard, we are interested in approximating $\alpha$.

Note that in our case, our input is $H$. An algorithm is efficient if it runs in time polynomial in the size of $H$, which may be exponential in the dimension $m$. 

We note a basic difference between partial function extension on the one hand, and property testing and learning on the other. In partial function extension, there is no target function $f^*$; we are interested in determining if \emph{any} total function that extends the given partial function has the required property. In property testing and learning, there exists a target function $f^*$ which we access via an oracle (in property testing) or via samples from a distribution (in learning). 

A function $f:2^{[m]} \rightarrow \mathbb{R}_{\ge 0}$ is subadditive  if $f(A) + f(B) \ge f(A \cup B)$ for all sets $A$ and $B$.   A function $f:2^{[m]} \rightarrow \mathbb{R}_{\ge 0}$ is an XOS function if it can be expressed as the maximum of $k$ linear functions for some $k \ge 1$. XOS functions are a subclass of subadditive functions. A function $f$ is submodular if $f(A) + f(B) \ge f(A \cup B) + f(A \cap B)$ for all $A,B \subseteq [m]$. 

For notation, $\mcD := \{T_i\}_{i \in [n]}$ is the set of points in the given partial function $H$. These are called \emph{defined} points, and $\mcU := 2^{[m]} \setminus \mcD$ are \emph{undefined} points. Points on the hypercube $\{0,1\}^m$ are naturally subsets of $[m]$, and for $S \subseteq [m]$, $\chi(S) \in \{0,1\}^m$ is its characteristic vector. We frequently use this correspondence. All missing proofs are in the appendices.

\paragraph*{Our Contribution.} 

We show the following main results.

\begin{result}
Subadditive Extension is coNP-complete. There is an $O(\log m)$ approximation algorithm for Approximate Subadditive Extension, and if $P \neq NP$, this is tight.
\end{result}

The lower bounds in the theorem depend upon characterizations of partial functions that can be extended to subadditive functions. The upper bound uses the fact that for XOS functions, a well-studied subclass of subadditive functions, Approximate  Extension (and hence  Extension) can be solved in polynomial time. Further, any subadditive function can be approximated by an XOS function, by a factor of $O(\log m)$~\cite{BhawalkarR11,Dobzinski07}.

 Our characterization for subadditive functions, as well as known characterizations for XOS functions, can be used to give the following results for learning and property testing.

\begin{result}
Subadditive functions cannot be learned by a factor of $o(\sqrt{m})$.
\end{result}

This improves upon a previous lower bound of $\Omega(\sqrt{m}/\log m)$~\cite{BalcanCIW12}. We combine the characterization of subadditive functions with recent results on the size of combinatorial families of sets called $r$-cover free families~\cite{erdos1985families,d2014bounds} for the above result.

\begin{result}
Given $\epsilon > 0$, there are testers for subadditive and XOS functions that make $2^{m/2 + O(\sqrt{m \log (1/\epsilon)})}$ queries. Further, there is a tester for nonmonotone subadditive functions that makes $2^{O(\sqrt{m\log (1/\epsilon)}\log m)}$ queries.
\end{result}

We thus obtain the first nontrivial testers for subadditive and XOS functions. 

For submodular functions, we show the following main result. Given $\mcF \subseteq \{0,1\}^m$, we say a function $f$ is submodular in $\mcF$ if $f(A) + f(B) \ge f(A \cup B) + f(A \cap B)$ for all $A, B, A \cup B, A \cap B \in \mcF$.  

\begin{result}
For partial function $H$ with defined points $\mcD$, let $\mcF$ be the family of sets that are (i) both contained in and contained by some set in $\mcD$, and (ii) obtained by the union and intersection of sets in $\mcD$ (are in the lattice closure of $\mcD$). Then the partial function is extendible to a submodular function in $\{0,1\}^m$ iff it can be extended to a submodular function in $\mcF$.
\end{result}

Thus if $|\mcF|$ is $poly(m,|\mcD|)$ then Submodular Extension can be solved in polynomial time. This includes the case when $|\mcD|$ is a constant, when all points in $\mcD$ have size difference $O(\log m)$, and when $\mcD$ is an antichain. Further, if $\mcD$ is an antichain, then $\mcF = \mcD$ and 
  any assignment of values to the points in $\mcD$ is trivially submodular in $\mcD$. Hence if $\mcD$ is an antichain, it can always be extended to a submodular function. 
   Our results for submodular functions depend on a combinatorial characterization of nonextendibility, which we call a \emph{square certificate}. Our proof of the above result and development of square certificates forms our main technical contribution. Seshadhri and Vondrak~\cite{submodularity} also study property testing of submodular functions, and develop an alternative certificate called a \emph{path certificate}. We believe that square certificates are more natural, and may lead to improved testers for submodular functions. In general the problem of Submodular Extension remains open. However, we can use our result for the extendibility of antichains to obtain the following result for learning submodular functions 

\begin{result}
Submodular functions cannot be learned.\footnote{In contrast, the class of nonnegative, monotone submodular functions can be learned with approximation ratio $O(\sqrt{m})$~\cite{BalcanH11}.}
\end{result}

Lastly, we consider convex functions. The problem of Convex Extension has been studied before in convex analysis (e.g.,~\cite{DragomirescuI92,Yan12}). We however show an interesting juxtaposition of results: While it can be determined in polynomial time if a partial function is extendible to a convex function, determining the value of a natural and widely-studied extension at a given point is NP-hard.

\begin{result}
Approximate Convex Extension (and hence Convex Extension) is in $P$. However, determining the value of a canonical  extension at a given point is strongly NP-hard.
\end{result}

	\section{Subadditive and XOS Functions}
We now consider the problem of extending a given partial function $H$ to monotone subadditive and XOS functions. A function $f:2^{[m]} \rightarrow \mathbb{R}_{\ge 0}$ is subadditive if $f(A) + f(B) \ge f(A \cup B)$ for all sets $A$ and $B$, and monotone if  $f(A) \ge f(B)$ for all $A \supseteq B$. A function $f:2^{[m]} \rightarrow \mathbb{R}_{\ge 0}$ is an XOS function if it can be expressed as the maximum of $k$ linear functions for some $k \ge 1$, i.e., there exist vectors $w_{i} \in \mathbb{R}_{\ge 0}^m$ for $1 \le i \le k$  such that $f(S) = \max w_{i}^T  \chi(S)$ for every $S \subseteq [m]$. XOS functions are a subclass of subadditive functions and  are equivalent to fractionally subadditive functions \cite{feige2009maximizing}. A function $f:2^{[m]} \rightarrow \mathbb{R}_{\ge 0}$ is fractionally subadditive if $f(T) \le \sum_{S} \lambda_S f(S)$ for all $T$ such that $\lambda_S \ge 0$ and $\sum_{S:s \in S} \lambda_S \ge 1$ for each $s \in T$. 
Subadditive functions capture the important case of complement-free functions, for which no two subsets of the ground set $[m]$ ``complement'' each other. This is a natural assumption in many applications, and hence these functions and various subclasses, including XOS functions, are widely used in game theory~\cite{BalcanCIW12,BhawalkarR11,LehmannLN06}. 

For XOS functions, the following positive result follows from writing a linear program with the vectors $(w_i)_{i \le k}$ with $w_i \in \mathbb{R}^m_+$ as variables. While in general $k$ may be exponential, 
a partial function $H$ is extendible iff  the linear program is feasible for $k = n$.

 \begin{theorem}
 	\label{gen-ext-xos}
 Aproximate XOS Extension (and hence XOS Extension) can be solved efficiently.
 \end{theorem}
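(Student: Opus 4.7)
The plan is to show that the existence of an XOS extension (with a given approximation factor) can be captured by a polynomial-size linear program, using the observation that at most $n$ linear maximands are ever needed. I will then minimize the approximation factor by solving this LP.

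First I would prove the key reduction: if a partial function $H = \{(T_1, f_1), \ldots, (T_n, f_n)\}$ extends to an XOS function $f(S) = \max_{i \le k} w_i^T \chi(S)$ with approximation factor $\alpha$, then it extends with at most $n$ maximands and the same factor $\alpha$. For each defined point $T_i$, let $j_i \in [k]$ be an index achieving $\max_j w_j^T \chi(T_i)$. Define $f'(S) := \max_{i \in [n]} w_{j_i}^T \chi(S)$. Since $f'$ is the maximum over a subset of the maximands, $f'(S) \le f(S)$ for every $S$, so the upper bound $f'(T_i) \le f(T_i) \le \alpha f_i$ is preserved. The lower bound $f'(T_i) \ge w_{j_i}^T \chi(T_i) = f(T_i) \ge f_i$ follows by construction. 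Thus it is WLOG to search over XOS functions with exactly $n$ maximands, one indexed by each defined point.

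Next I would write the LP. The variables are $\alpha \in \mathbb{R}$ and $w_1, \ldots, w_n \in \mathbb{R}^m_{\ge 0}$ (nonnegativity is what makes these XOS, and also enforces monotonicity). The constraints are: for each $i \in [n]$,
\begin{equation*}
w_i^T \chi(T_i) \ge f_i,
\end{equation*}
ensuring $f(T_i) \ge f_i$; and for each pair $i, j \in [n]$,
\begin{equation*}
w_j^T \chi(T_i) \le \alpha\, f_i,
\end{equation*}
ensuring $f(T_i) = \max_j w_j^T \chi(T_i) \le \alpha f_i$. The objective is to minimize $\alpha$, with $\alpha \ge 1$. By the reduction above, the infimum of this LP equals the optimal value of Approximate XOS Extension; in particular, the LP is feasible with $\alpha = 1$ iff $H$ admits an XOS extension.

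Finally I would note that the LP has $nm + 1$ variables and $n^2 + n$ nontrivial constraints, all of size polynomial in the bit-length of $H$, so it can be solved in time polynomial in $|H|$ by any standard polynomial-time LP algorithm. The only subtle point is the truncation-to-$n$-maximands argument; once that is in place the rest is a direct LP formulation, so I do not expect any serious obstacle beyond ensuring the WLOG-relabeling ``maximand $w_i$ is the one that witnesses $T_i$'' is carefully justified.
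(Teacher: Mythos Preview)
Your proposal is correct and takes essentially the same approach as the paper: reduce to $n$ maximands (one witnessing each defined point) and then write a polynomial-size LP in the maximand coefficients and $\alpha$. The paper's LP differs only cosmetically, explicitly enforcing $w_i^T\chi(T_i)\ge w_j^T\chi(T_i)$ together with $f_i\le w_i^T\chi(T_i)\le\alpha f_i$, rather than your equivalent choice of bounding every $w_j^T\chi(T_i)\le\alpha f_i$ directly.
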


 We give the following characterization for subadditive functions, also  implicit in  Lemma 3.3 of \cite{badanidiyuru2012sketching}.
\begin{lemma}[\cite{badanidiyuru2012sketching}]
	\label{lemma-subadditive}
	Partial function $H$ is extendible to a subadditive function iff  $\sum_{i = 1}^{r} f(T_i) \ge f(T_{r+1})$ for all $T_1,\dots,T_r,T_{r+1} \in \mathcal{D}$ such that $\cup_{i = 1}^{r} T_i \supseteq T_{r+1}$.
\end{lemma}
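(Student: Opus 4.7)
First, the forward direction should be immediate from monotone subadditivity. Iterating the subadditive inequality yields $\sum_{i=1}^r f_i = \sum_{i=1}^r f(T_i) \ge f\bigl(\bigcup_{i=1}^r T_i\bigr)$, and monotonicity together with the containment $\bigcup_i T_i \supseteq T_{r+1}$ then gives $f\bigl(\bigcup_i T_i\bigr) \ge f(T_{r+1}) = f_{r+1}$. I expect this to follow from a single induction on $r$.

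For the converse I plan to construct an extension explicitly by the natural ``min-cover'' formula
\[
\hat{f}(S) := \min\left\{\sum_{i \in I} f_i : I \subseteq [n],\ \bigcup_{i \in I} T_i \supseteq S\right\},
\]
which is the pointwise largest monotone subadditive function consistent with the data. I would verify three things: that $\hat{f}$ is defined on all of $2^{[m]}$, that $\hat{f}(T_k) = f_k$ for every $T_k \in \mcD$, and that $\hat{f}$ is monotone and subadditive.

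The agreement property is exactly where the hypothesis enters. The singleton cover $I = \{k\}$ gives $\hat{f}(T_k) \le f_k$, and applying the hypothesis to any covering family of $T_k$ yields the matching lower bound $\hat{f}(T_k) \ge f_k$. Subadditivity will follow because the union of optimal covers of $A$ and $B$ is itself a cover of $A \cup B$, so $\hat{f}(A \cup B) \le \hat{f}(A) + \hat{f}(B)$; monotonicity will follow because any cover of a superset automatically covers each of its subsets.

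The one delicate point is that some $S \subseteq [m]$ may fail to be covered by \emph{any} subfamily of $\mcD$, leaving the minimum vacuous. My plan is to pad the partial function with a dummy value $(\{j\}, M)$ for every singleton $\{j\} \notin \mcD$, choosing $M \ge \max_i f_i$. Every set then becomes coverable, and the augmented partial function still satisfies the lemma's hypothesis: any cover of an original $T_k$ that uses a dummy already costs at least $M \ge f_k$, while covers avoiding dummies inherit the original inequality. I expect this padding step to be the only real technical subtlety; everything else is routine combinatorial bookkeeping.
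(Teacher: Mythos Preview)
Your proposal is correct and follows essentially the same route as the paper: both prove necessity by combining iterated subadditivity with monotonicity, and both prove sufficiency via the min-cover extension $\hat f(S)=\min\{\sum_{i\in I}f_i:\bigcup_{i\in I}T_i\supseteq S\}$, checking agreement, monotonicity, and subadditivity exactly as you outline. The only difference is the treatment of sets not covered by $\mcD$: the paper simply sets $\tilde f(S)=\hat f\bigl(S\cap\bigcup_i T_i\bigr)$, whereas you pad with high-valued singletons. One small wording issue: your augmented partial function need \emph{not} satisfy the lemma's hypothesis in full (a dummy singleton $\{j\}$ may be covered by some original $T_i$ with $f_i<M$), but this is harmless since you only need $\hat f(T_k)=f_k$ for the \emph{original} points, and your argument for that case is correct.
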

\begin{proof}	
	For the first direction, if there exist $T_1,\dots,T_r,T_{r+1} \in \mathcal{D}$ such that  $\cup_{i = 1}^{r} T_i \supseteq T_{r+1}$ and $\sum_{i = 1}^{r} f(T_i) < f(T_{r+1})$ then either $f(T_{r+1}) > f(\cup_{i = 1}^{r} T_i)$ or $f(\cup_{i = 1}^{r} T_i) > \sum_{i = 1}^{r} f(T_i)$, and hence either monotonicity or subadditivity  is violated. For the other direction, first assume that $\cup_{i = 1}^{n} T_i = [m]$. Then the function $\hat{f}(S) =  \min \{\sum_{i = 1}^{r} f(T_i) | S \subseteq \cup_{i = 1}^{r} T_i, \thinspace T_i \in \mathcal{D} \thinspace \forall i \in [r]\}$ can be seen to be  monotone  subadditive extension. If $\cup_{i = 1}^{n} T_i \subsetneq [m]$ then the function  $\tilde{f}$ is a monotone subadditive extension where  $\tilde{f}(S) = \hat{f}(S)$ for all $S  \subseteq \cup_{i = 1}^{n} T_i$, and otherwise  $\tilde{f}(S) = \hat{f}(S')$ where $S' = S \cap\cup_{i = 1}^{n}T_i$.
\end{proof}
We immediately obtain the following result.

\begin{corollary}
	 Subadditive Extension  is in \textsf{coNP}, and be solved in $poly(m,2^n)$ time.
\end{corollary}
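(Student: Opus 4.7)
The plan is to read off both claims directly from the characterization in Lemma~\ref{lemma-subadditive}: that lemma states $H$ is extendible iff for every sequence $T_1, \dots, T_r, T_{r+1} \in \mcD$ with $\bigcup_{i=1}^r T_i \supseteq T_{r+1}$ one has $\sum_{i=1}^r f(T_i) \ge f(T_{r+1})$. So a \emph{no}-instance has an explicit, finite combinatorial witness of non-extendibility, and the decision problem reduces to searching for such a witness.

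For \textsf{coNP} membership, the first step is to bound the size of a witness. Since the subadditive functions here take values in $\mathbb{R}_{\ge 0}$, repeating any $T_i$ in the tuple only increases the left-hand side, so I may assume WLOG that the $T_i$'s are pairwise distinct and hence $r \le n$. A witness therefore consists of at most $n+1$ distinct elements of $\mcD$, of total bit-size polynomial in the input. Verification amounts to checking the containment $\bigcup_i T_i \supseteq T_{r+1}$ and a single sum inequality, both doable in polynomial time. This certifies \textsf{coNP} membership.

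For the $\poly(m,2^n)$ algorithm, I would enumerate every subset $S \subseteq \mcD$ (there are $2^n$ of them) and every target $T_{r+1} \in \mcD$ (at most $n$), and for each such pair test whether $\bigcup_{T \in S} T \supseteq T_{r+1}$ holds and, if so, whether $\sum_{T \in S} f(T) \ge f(T_{r+1})$. Declare $H$ extendible iff every such pair passes. There are $O(n \cdot 2^n)$ pairs and each check runs in $\poly(m,n)$ time, giving total running time $\poly(m,2^n)$.

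There is no substantive obstacle beyond the preliminary reduction to distinct $T_i$'s, which nonnegativity of $f$ handles cleanly. In particular, no optimization or additional structural insight is required: enumeration over subsets of $\mcD$ suffices for both the certificate-size bound and the running time, and correctness of both parts is immediate from Lemma~\ref{lemma-subadditive}.
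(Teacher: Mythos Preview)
Your proposal is correct and matches the paper's intended (but unwritten) argument: the paper simply states that the corollary follows immediately from Lemma~\ref{lemma-subadditive}, and you have filled in exactly the natural details—bounding the certificate size by observing that nonnegativity of the $f_i$ lets one drop repeated sets, and getting the $\poly(m,2^n)$ algorithm by enumerating subsets of $\mcD$.
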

We use the characterization  to show that $\Theta(\log m)$ is a tight bound on the approximability of Approximate Subadditive Extension, unless $P = NP$ (and that the Extension problem is \textsf{coNP}-complete). For the lower bound, we give a reduction from \textsf{Set-Cover}.  For the upper bound, we use earlier results which show that any subadditive function can be $O(\log m)$-approximated by an XOS function~\cite{BhawalkarR11,Dobzinski07}. Since Approximate XOS Extension can be efficiently solved (Theorem \ref{gen-ext-xos}), this gives us our upper bound.

 \begin{theorem}
 		\label{gen-ext-subadditive-approx}
Subadditive Extension is \textsf{coNP}-complete. There is an $O(\log m)$ approximation algorithm for Approximate Subadditive Extension, and if $P \neq NP$, this is optimal.
 \end{theorem}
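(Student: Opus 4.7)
The plan has three parts, mirroring the theorem statement.

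For coNP-completeness, Lemma~\ref{lemma-subadditive} already yields coNP membership: a non-extendibility certificate is a tuple $(T_1,\dots,T_r,T_{r+1})$ of defined points with $\bigcup_{i \le r} T_i \supseteq T_{r+1}$ and $\sum_{i \le r} f_i < f_{r+1}$, which is polynomial-size and polynomial-time verifiable. For coNP-hardness I would reduce from Set Cover. Given universe $[m]$, sets $S_1,\dots,S_k$, and budget $B$, I take $H = \{(S_i, 1)\}_{i \le k} \cup \{([m], B+1)\}$. Tuples with $T_{r+1} = S_j$ are non-violating trivially (put $T_j$ itself on the left), so only tuples $(S_{i_1},\dots,S_{i_r},[m])$ matter, and these violate Lemma~\ref{lemma-subadditive} iff some $r \le B$ of the $S_i$'s cover $[m]$. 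Hence $H$ is non-extendible iff the Set Cover instance admits a cover of size at most $B$, making Subadditive Extension coNP-complete.

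For the $\Omega(\log m)$ inapproximability, I would parameterize the same reduction using the Dinur--Steurer gap hardness: for every $\epsilon > 0$, distinguishing $k^* \le K$ from $k^* > (1-\epsilon)(\ln m) K$ is NP-hard, where $k^*$ is the minimum cover size. Setting $V := (1-\epsilon)(\ln m) K$ and $H := \{(S_i, 1)\}_{i \le k} \cup \{([m], V)\}$, a direct check using Lemma~\ref{lemma-subadditive} shows that the uniform assignment $f(S_i) = \alpha$, $f([m]) = V$ satisfies every tuple constraint precisely when $\alpha \ge V/k^*$, since the tightest constraint is the minimum-cover tuple $\alpha k^* \ge V$ and all others (including the monotonicity constraints $([m], S_j)$) are automatically met. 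Hence $\alpha^* = \max\{1, V/k^*\}$. Consequently, in the YES case $\alpha^* \ge (1-\epsilon)\ln m$, while in the NO case $V < k^*$ and $\alpha^* = 1$, a gap of $(1-\epsilon)\ln m$. Thus no $\gamma$-approximation for Approximate Subadditive Extension with $\gamma < (1-\epsilon)\ln m$ is possible unless $P = NP$.

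For the matching $O(\log m)$ upper bound, I would invoke the subadditive-to-XOS approximation of \cite{BhawalkarR11,Dobzinski07}: every nonnegative monotone subadditive $f$ on $2^{[m]}$ admits an XOS function $g$ with $g \le f \le c(\log m)\, g$ for a universal constant $c$. Let $\alpha^*$ and $\beta^*$ be the optima of Approximate Subadditive and Approximate XOS Extension on $H$, respectively. Since every XOS function is subadditive, $\alpha^* \le \beta^*$. Conversely, given an optimal subadditive $\alpha^*$-extension $f$ of $H$ and its XOS approximant $g$, the rescaling $g' := (c\log m)\, g$ is still XOS and satisfies $f_i \le g'(T_i) \le (c\log m)\alpha^* f_i$, so $\beta^* \le (c\log m)\alpha^*$. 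By Theorem~\ref{gen-ext-xos}, $\beta^*$ is polynomial-time computable, and returning it yields an $O(\log m)$-approximation.

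The main obstacle I expect is the tight identification $\alpha^* = V/k^*$ in the lower bound step, since one must verify simultaneously across every tuple that the uniform assignment is optimal. All remaining ingredients (Lemma~\ref{lemma-subadditive}, Theorem~\ref{gen-ext-xos}, and the cited Dinur--Steurer and Bhawalkar--Roughgarden/Dobzinski results) are essentially off-the-shelf once the reduction is in place.
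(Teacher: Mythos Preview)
Your proposal is correct and follows essentially the same approach as the paper: the same Set-Cover reduction for coNP-hardness, and the same XOS-based $O(\log m)$ algorithm for the upper bound. For the inapproximability lower bound you take a slightly cleaner route, invoking the Dinur--Steurer gap instance directly and computing $\alpha^* = \max\{1,V/k^*\}$ in closed form, whereas the paper argues by contradiction that any $\rho$-approximation for Approximate Subadditive Extension yields a $\rho$-approximation for Set Cover; the two arguments are equivalent in content. One wording nit: when you write ``put $T_j$ itself on the left'' for the $T_{r+1}=S_j$ case, what you actually need (and what works) is that every such tuple satisfies $\sum_i f(T_i) \ge 1 = f(S_j)$ because all defined values are at least $1$ and $r \ge 1$; the phrasing as stated suggests you are choosing the tuple rather than verifying all of them.
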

 \begin{proof}
 	Recall the \textsf{Set-Cover} problem. An instance of \textsf{Set-Cover} is  a universe $[m]$, family of sets $V = \{S_1,\dots,S_n\}$ such that $S_i \subseteq [m]$ and an integer $k$. We need to determine if there exists a cover of universe $[m]$ of size at most $k$.
 	
 	First we prove that the Subadditive Extension is CoNP-hard by reduction from \textsf{Set-Cover}.  Construct a partial function that is defined on each set in  $V$ and $[m]$. The value at each set  $S_i \in V$ is $1$, and the value at $[m]$ is $k+1$.   If this partial function can be extended then every cover of $[m]$ must have size at least $k+1$. On the other hand, if the partial function can not be extended then there must exist a cover of size at most $k$. Both of the above facts easily follow from Lemma \ref{lemma-subadditive}.
 	
 	 The lower bound of $\Omega(\log m)$ for Approximate  Extension, as before, the partial function is defined on  sets $V \cup [m]$,  and value at each set in $V$ is $1$, and at $[m]$ is $m$. Suppose we have a $\rho$ approximation algorithm for Approximate Extension, which for this instance returns value $\beta$. Then note that $\alpha^* \ge \beta/\rho$, where $\alpha^*$ is the optimal value of $\alpha$ for the Approximate  Extension.
 	
 	Since the algorithm returns value $\beta$, so there exists an extension $f$ such that $1 \le f(S_i) \le \beta$ for all $S_i \in V$ and $f([m]) \ge m$. Therefore, by Lemma \ref{lemma-subadditive},    every cover of $[m]$ has size at least $m/\beta$. 
 	Now we claim that there must exist a cover with size at most $ m \rho/\beta$. If not, then  all covers of $[m]$ have size at least $\gamma > m \rho/\beta$,  and it is easy to see that the partial function $\{(S_1,m/\gamma),\dots,(S_n,m/\gamma),([m],m)\}$ is extendible by Lemma \ref{lemma-subadditive}. This implies $m/\gamma \ge \alpha^* \ge \beta/\rho$ which is a contradiction.
 	This then gives an $\rho$-approximation algorithm for Set Cover, and since Set Cover cannot be approximated by a factor better than $(1 - \epsilon) \log m$~\cite{DinurS14}, this is true of Approximate Extension also. 
 	
 	The upper bound for Approximate Extension is shown in Appendix \ref{sec:subadditiveappendix}
 \end{proof}
 \paragraph*{A lower bound on learning subadditive functions.} We now show that subadditive functions cannot be learned by a factor of $o(\sqrt{m})$ in the PMAC model of learning. The PMAC (Probably Mostly Approximate Correct) model seeks to determine for a family $\mathcal{F}$ of functions, if it is possible to efficiently obtain a function $f$ ``close to'' a target function $f^* \in \mathcal{F}$, given samples from some distribution over $2^{[m]}$ and the value of $f^*$ at the sampled points. Formally, let $\mathcal{F} \subseteq \{f | f :2^{[m]} \rightarrow \mathbb{R}\}$ be a family of set functions (e.g., subadditive functions).
 \begin{definition}[\cite{balcan2011learning}]
 	An Algorithm $\mathcal{A}$ PMAC-learns a family of functions $\mathcal{F} $ with approximation factor $\alpha$, if for \emph{any} distribution $\mu$ (on $2^{[m]}$) and \emph{any} target function $f^* \in \mathcal{F}$,  and  for \emph{any} sufficiently small $\epsilon,\delta >0$:
 	\begin{itemize}
 		\item $\mathcal{A}$ takes the sequence  $\{(S_i,f^*(S_i))\}_{1 \le i \le l}$  as input where  $l$ is $poly(m,1/\delta,1/\epsilon)$ and the sequence $\{S_i\}_{1 \le i \le l}$ is drawn i.i.d. from the distribution $\mu$, 
 		\item $\mathcal{A}$  runs in $poly(m,1/\delta,1/\epsilon)$ time,
 		\item $\mathcal{A}$  returns a function $f:2^{[m]} \rightarrow \mathbb{R}$ such that 
 		\[
 		Pr_{S_1,\dots,S_l \sim D^*} \big[Pr_{S \sim D^*} [f^*(S) \le f(S) \le \alpha f^*(S)] \ge 1 - \epsilon \big] \ge 1 - \delta.
 		\]
 	\end{itemize}
 \end{definition}
 That is, with at least $1 - \delta$ probability (over examples drawn from $\mu$), the value of the returned function $f$  should be within an $\alpha$ factor of the target function $f^*$ for at least $1 - \epsilon$ fraction of the probability mass according to $\mu$. 
 The following lemma makes explicit the connection between PMAC-learning and extending partial functions (we use it later in showing lower bounds for learning submodular functions as well). The lemma has been implicitly used earlier to obtain lower bounds  on learning  subadditive and submodular functions~\cite{balcan2012learning,balcan2011learning}.
 \begin{lemma}
 	\label{learning-application}
 	Suppose there exists a family $\mathcal{D} = \{T_1,\dots,T_n\}$  of subsets of $[m]$ such that $n$ is superpolynomial in $m$, and the partial function  $H = \{(T_1,f_1),\dots,(T_n,f_n)\}$ is extendible to a function in $\mathcal{F}$ for any values of $f_i \in [1,r]$ (where $r \ge 1$), $i \in [n]$. Then the family of functions $\mathcal{F}$ cannot be learned by any factor $< r $.
 \end{lemma}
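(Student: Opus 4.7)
The plan is a standard adversarial argument exploiting the gap $\alpha < r$. Intuitively, the hypothesis produces a super-polynomially large family of ``indistinguishable'' targets in $\mathcal{F}$: after seeing $l = \poly(m,1/\epsilon,1/\delta)$ samples, the learner has essentially no information about the target's values on the unsampled portion of $\mathcal{D}$, yet an adversary may still pick those values freely in $\{1,r\} \subseteq [1,r]$ and stay inside $\mathcal{F}$.

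First I would take $\mu$ to be uniform on $\mathcal{D}$ and define a random target by drawing $f^*(T_i) \in \{1,r\}$ independently and uniformly for each $i$; by hypothesis every such assignment admits an extension in $\mathcal{F}$. Fix an algorithm $\mathcal{A}$ allegedly PMAC-learning $\mathcal{F}$ with factor $\alpha < r$, and pick small constants (e.g.\ $\epsilon = \delta = 1/10$). Let $S_1,\dots,S_l \sim \mu$ be the samples and $U := \mathcal{D} \setminus \{S_1,\dots,S_l\}$; since $n$ is superpolynomial in $m$ while $l$ is polynomial, $|U|/n \ge 1 - o(1)$ as $m \to \infty$. The algorithm's output $f$ is determined by the input $\{(S_j,f^*(S_j))\}_j$, and conditionally on that input the values $\{f^*(T_i)\}_{T_i \in U}$ remain i.i.d.\ uniform on $\{1,r\}$ (they were drawn independently), while $f(T_i)$ is deterministic.

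The crux is that the constraint $f^*(T_i) \le f(T_i) \le \alpha f^*(T_i)$ forces $f(T_i) \in [1,\alpha]$ when $f^*(T_i)=1$ and $f(T_i) \in [r,\alpha r]$ when $f^*(T_i)=r$, and these intervals are disjoint because $\alpha < r$. So no single value of $f(T_i)$ can satisfy the constraint for both coin outcomes, and it fails with probability at least $1/2$ over the coin at $T_i$. Letting $B := \{S \in \mathcal{D} : f(S) \notin [f^*(S),\alpha f^*(S)]\}$, this gives
\[ \mathbb{E}_{f^*,\,\text{samples}}[\mu(B)] \;\ge\; \frac{|U|}{2n} \;\ge\; \tfrac{1}{2} - o(1). \]
Averaging over the finite set of $\{1,r\}$-assignments then yields a deterministic $f^* \in \mathcal{F}$ with $\mathbb{E}_{\text{samples}}[\mu(B)] \ge 1/2 - o(1)$, and the reverse Markov inequality $\Pr[X \ge a] \ge (\mathbb{E}[X]-a)/(1-a)$ (valid for $X \in [0,1]$ and $a < \mathbb{E}[X]$) upgrades this to $\Pr_{\text{samples}}[\mu(B) > \epsilon] > \delta$ once $m$ is sufficiently large, contradicting the PMAC guarantee.

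The main obstacle is the two-layer probability structure of the PMAC definition: the inner condition ``$\mu(\text{good}) \ge 1-\epsilon$'' must itself hold with outer probability at least $1-\delta$ over the samples. The reverse-Markov step bridges expectation and probability, but the argument only works because the sampling slack $l/n$ is strictly smaller than the $1/2$ coin-flip mass, which is precisely why the hypothesis requires $n$ to be superpolynomial in $m$ rather than merely large.
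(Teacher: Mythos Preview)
Your proof is correct and follows essentially the same approach as the paper: take $\mu$ uniform on $\mathcal{D}$, use that $[1,\alpha]$ and $[r,\alpha r]$ are disjoint when $\alpha<r$ so no single output value can be correct for both possible target values at an unsampled point, and exploit that only $\poly(m)$ of the $n$ points are ever sampled. The paper phrases this as a direct adversarial argument (pick the target after the samples so the output is wrong on every unsampled point), while you randomize the target over $\{1,r\}$ and then average and apply reverse Markov; this is the standard way to make that adversarial sketch rigorous, not a genuinely different route.
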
 
  The above bound holds even if the algorithm knows the distribution $\mu$, is allowed unbounded computation and chooses samples adaptively. 
  
   Balcan et. al.~\cite{balcan2012learning} proved an upper bound of $O(\sqrt{m}\log m)$ and an $\Omega(\sqrt{m}/\log m)$ lower bound for learning subadditive functions. Using Lemmas~\ref{lemma-subadditive},~\ref{learning-application} and a known result for cover-free families~\cite{erdos1985families,d2014bounds}, we show an improved lower bound of $\Omega(\sqrt{m})$.
  
  A family of sets $\mathcal{F} \subseteq 2^{[m]}$ is called an \emph{$r$-cover free} family \cite{erdos1985families,d2014bounds} if for all distinct sets $A_1,\dots,A_r,A_{r+1} \in \mathcal{F}$ we have $A_{r+1} \not \subseteq \cup_{i =1}^{r} A_i$. Let $f_r(m)$ be the  cardinality of the largest $r$-cover free family.
  
  \begin{theorem}[\cite{d2014bounds}]
  	\label{Erdos}
  	$f_r(m) = 2^{\Theta\left(\frac{m \log r}{r^2}\right)}$.
  \end{theorem}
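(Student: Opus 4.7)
The plan is to prove matching lower and upper bounds on $f_r(m)$. For the lower bound $f_r(m) \ge 2^{\Omega(m \log r / r^2)}$, I would use the Kautz--Singleton construction. Take a Reed--Solomon code $C \subseteq \mathbb{F}_q^n$ of dimension $k$, whose minimum distance is $n-k+1$, and identify each codeword $c = (c_1,\dots,c_n)$ with the subset $A_c = \{(i, c_i) : i \in [n]\} \subseteq [n] \times \mathbb{F}_q$ of the ground set of size $m = nq$. Any two codewords agree in at most $k-1$ coordinates, so for any $r+1$ distinct codewords $c_0, c_1, \dots, c_r$,
\[
\Bigl| A_{c_0} \setminus \bigcup_{j=1}^r A_{c_j} \Bigr| \;\ge\; n - r(k-1),
\]
which is positive whenever $k \le n/r$. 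Setting $q$ to be a prime power near $r$ and $k = \lfloor n/r \rfloor$ produces an $r$-cover-free family of size $q^k \approx r^{n/r}$ on a ground set of size $m = nq \approx nr$, giving $\log_2 |\mathcal{F}| = \Theta\bigl((m/r^2) \log_2 r\bigr)$ as required.

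For the upper bound $f_r(m) \le 2^{O(m \log r / r^2)}$ (originally due to D'yachkov--Rykov), I would use an entropy / double-counting argument. View the family as the columns of an $m \times |\mathcal{F}|$ binary matrix, where row $i$ records which sets contain element $i$. The $r$-cover-free property says that for every ordered $(r+1)$-tuple of distinct columns, there is a row carrying a $1$ in the first column and $0$s in the other $r$. Sampling a random such $(r+1)$-tuple uniformly and examining the joint distribution of the rows at these columns, the existence of a separating row for \emph{every} tuple forces a quantitative entropy inequality relating $m$ and $\log|\mathcal{F}|$, which after optimization over the row-marginals yields the desired bound.

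The main obstacle is the upper bound, specifically extracting the $\log r$ factor. A naive count of ``private elements'' for each set, or a straightforward union-bound argument, only yields $f_r(m) \le 2^{O(m/r^2)}$, missing the $\log r$ improvement. Obtaining it requires the refined entropy computation of D'yachkov--Rykov, which tracks how rows jointly distinguish a fixed column from all $\binom{N-1}{r}$ choices of $r$ other columns simultaneously, rather than element-by-element or tuple-by-tuple. Since the statement is quoted from \cite{erdos1985families, d2014bounds}, I would ultimately defer to those sources for the remaining technical computations; the two ingredients above---a Reed--Solomon construction for the lower bound and an entropy-based matrix argument for the upper bound---constitute the standard skeleton.
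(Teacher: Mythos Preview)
The paper does not prove this theorem at all: it is stated as a cited result from \cite{d2014bounds} (with the concept attributed to \cite{erdos1985families}) and used as a black box in the proof of Theorem~\ref{lb-subadditive}. So there is no ``paper's own proof'' to compare against. Your sketch---Kautz--Singleton/Reed--Solomon for the lower bound and the D'yachkov--Rykov entropy argument for the upper bound---is indeed the standard route to this result, and your identification of the $\log r$ factor in the upper bound as the delicate part is accurate. Since you yourself note that you would defer the technical computations to the cited sources, your proposal is essentially in agreement with how the paper treats the statement: as an imported fact.
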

  \begin{lemma}
  	\label{coverfree-partialfunction}
  	If $\mathcal{D} = \{T_1,\dots,T_n\}$ is an $r$-cover free family then the partial function $\{(T_1,f_1),\dots,(T_n,f_n)\}$ is extendible to a subadditive function for any value of $f_i \in [1,r+1], i \in [n]$.
  \end{lemma}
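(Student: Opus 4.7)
The plan is to apply the characterization of subadditive-extendibility from Lemma~\ref{lemma-subadditive}: the partial function $\{(T_1,f_1),\dots,(T_n,f_n)\}$ extends to a subadditive function iff for every choice of $T_{i_1},\dots,T_{i_s},T_{i_{s+1}} \in \mathcal{D}$ satisfying $\bigcup_{j=1}^{s} T_{i_j} \supseteq T_{i_{s+1}}$, we have $\sum_{j=1}^{s} f_{i_j} \ge f_{i_{s+1}}$. So I would fix an arbitrary such cover and verify this inequality using only the $r$-cover-free structure of $\mathcal{D}$ together with the value constraint $f_i \in [1,r+1]$.

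Next, I would reduce to the ``clean'' case where the covering sets are pairwise distinct and none of them equals $T_{i_{s+1}}$. If some $T_{i_j}$ equals $T_{i_{s+1}}$, then the inequality is immediate since the sum already contains $f_{i_{s+1}}$ and all values are nonnegative. If some $T_{i_j}$ repeats, dropping duplicates only decreases the sum and preserves the covering $\bigcup T_{i_j} \supseteq T_{i_{s+1}}$, so it suffices to prove the inequality in the reduced case.

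Now in this reduced case we have $s+1$ distinct members of $\mathcal{D}$ with $T_{i_{s+1}} \subseteq \bigcup_{j=1}^{s} T_{i_j}$. By the definition of an $r$-cover-free family, no set in $\mathcal{D}$ can be covered by $r$ or fewer of the other sets, so we must have $s \ge r+1$. Combining the lower bound $f_{i_j} \ge 1$ for each $j$ with the upper bound $f_{i_{s+1}} \le r+1$ yields
\[
\sum_{j=1}^{s} f_{i_j} \;\ge\; s \;\ge\; r+1 \;\ge\; f_{i_{s+1}},
\]
which is exactly the condition required by Lemma~\ref{lemma-subadditive}. The argument is short and structural; there is no real obstacle beyond being careful about the bookkeeping around repeated sets and the degenerate case $T_{i_{s+1}} \in \{T_{i_1},\dots,T_{i_s}\}$, since the $r$-cover-free property is defined on \emph{distinct} tuples.
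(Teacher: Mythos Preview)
Your proposal is correct and follows essentially the same route as the paper: both invoke the characterization in Lemma~\ref{lemma-subadditive}, use the bound $f_i \ge 1$ to convert the covering inequality into a bound on the number of covering sets, and then appeal to the $r$-cover-free property to obtain a contradiction (the paper argues by contradiction, you argue directly). Your handling of repeated sets and of the degenerate case $T_{i_{s+1}} \in \{T_{i_1},\dots,T_{i_s}\}$ is in fact more careful than the paper, which tacitly assumes distinctness.
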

  \begin{proof}
  	Suppose the partial function is not extendible.  Therefore, by Lemma \ref{lemma-subadditive}, there exists sets $T_1,\dots,T_k,T_{k+1}$ for some $k \ge 1$ such that $T_{k+1} \subseteq \cup_{i = 1}^{k} T_i$ and $f_{k+1} > \sum_{i = 1}^{k} f_i$.  Therefore, we have  $r+1 \ge f_{k+1} > k  $ which is a contradiction as  $\mathcal{D}$ is an $r$-cover free family.
  \end{proof}
  \begin{theorem}
  	\label{lb-subadditive}
  	In the PMAC model, subadditive functions cannot be learned by any $o(\sqrt{m})$ factor. 
  \end{theorem}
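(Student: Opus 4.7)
The plan is to combine Lemma \ref{learning-application}, Lemma \ref{coverfree-partialfunction} and the explicit bound in Theorem \ref{Erdos}. It suffices to show that for every $\alpha(m) = o(\sqrt{m})$ there is an integer $r=r(m)$ with $r+1 \ge \alpha(m)$ such that the largest $r$-cover-free family on $[m]$ has cardinality superpolynomial in $m$. Given such $r$, pick an $r$-cover-free family $\mcD$ of superpolynomial size; by Lemma \ref{coverfree-partialfunction} every assignment of values in $[1,r+1]$ to the sets of $\mcD$ is extendible to a subadditive function, and Lemma \ref{learning-application} then precludes PMAC-learning subadditive functions by any factor strictly less than $r+1$, and in particular by $\alpha(m)$.

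The only real work is the parameter choice. Given $\alpha(m) = o(\sqrt{m})$, write $\alpha(m) = \sqrt{m}/h(m)$ for some $h(m) \to \infty$ (the case $\alpha(m) = O(1)$ is handled by taking any slowly growing $r$, which is easy). Set $r := \lceil \alpha(m) \rceil$, so $r+1 \ge \alpha(m)$. Assuming without loss of generality that $h(m) \le m^{1/4}$ (if $h$ grows faster, then $r$ is so small that the estimate below is only easier), we have $r^2 = \Theta(m/h(m)^2)$ and $\log r = \Theta(\log m)$. Substituting into Theorem \ref{Erdos},
\[
f_r(m) \;=\; 2^{\Theta(m \log r / r^2)} \;=\; 2^{\Theta(h(m)^2 \log m)},
\]
whose exponent is $\omega(\log m)$ because $h(m) \to \infty$. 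Hence $f_r(m)$ is superpolynomial in $m$, producing the desired family $\mcD$.

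The main obstacle is genuinely just this parameter calculation: everything else is a direct appeal to the preceding lemmas. This is also where the improvement over the $\Omega(\sqrt{m}/\log m)$ lower bound of Balcan et al.~\cite{balcan2012learning} comes from, since Theorem \ref{Erdos} keeps $r$-cover-free families superpolynomially large for $r$ as large as $\sqrt{m}/\omega(1)$, rather than only $\sqrt{m}/\mathrm{polylog}(m)$ as one would obtain from more elementary constructions.
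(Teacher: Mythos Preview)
Your proposal is correct and follows essentially the same route as the paper: both arguments combine Lemma~\ref{learning-application}, Lemma~\ref{coverfree-partialfunction}, and Theorem~\ref{Erdos}, and the only work is verifying that $f_r(m)$ is superpolynomial when $r = o(\sqrt{m})$. The paper does this by rewriting the exponent as $\tfrac{cm}{r^2}\bigl(\tfrac12 - \tfrac{\log(\sqrt{m}/r)}{\log m}\bigr)\log m$ and splitting at $r=m^{1/4}$, while you parametrize $r=\sqrt{m}/h(m)$ with $h(m)\to\infty$ and obtain the exponent $\Theta(h(m)^2\log m)$; these are the same computation in slightly different dress, including the handling of the small-$r$ regime ($r\le m^{1/4}$, equivalently $h(m)\ge m^{1/4}$) as the easy case.
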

  \begin{proof}
  	We have $f_r(m) \ge 2^{\frac{c m \log r}{r^2}} = m^{\frac{cm}{r^2}(\frac{1}{2} - \frac{\log(\sqrt{m}/r)}{\log m})}$ for some constant $c$. For $r \le m^{1/4}$, $f_r(m)$ is clearly superpolynomial in $m$. Also, for $r \ge m^{1/4}$, $\frac{1}{2} - \frac{\log(\sqrt{m}/r)}{\log m} \ge \frac{1}{4}$. Hence, $2^{\frac{cm \log r}{r^2}}$ is superpolynomial for $r = o(\sqrt{m})$. Therefore, for any such $r$, by Theorem \ref{Erdos} there exists a $r$-cover free family $\mathcal{D} = \{T_1,\dots,T_n\}$ such that $n$ is superpolynomial. The theorem is directly implied by  Lemmas  \ref{learning-application} and \ref{coverfree-partialfunction}.	
  \end{proof}

 \paragraph*{Testers for  subadditive and XOS functions.} We now describe testers for  subadditive and XOS functions that make $2^{m/2 + O(\sqrt{m \log (1/\epsilon)})}$ queries; these are the first non-trivial testers for either of these functions. For definitions, we focus on subadditive functions, but the corresponding definitions for XOS functions are obvious. 
 A function $f :2^{[m]} \rightarrow \mathbb{R}_{\ge 0}$ is $\epsilon$-far from subadditive if for any subadditive function $g:2^{[m]} \rightarrow \mathbb{R}_{\ge 0}$,  we have $|S \subseteq [m]: f(S) \neq g(S)| \ge  \epsilon 2^m$. A tester for subadditive functions is a randomized algorithm that takes distance parameter $\epsilon$ and oracle access to a function  $f :2^{[m]} \rightarrow \mathbb{R}_{\ge 0}$ as inputs, and accepts if $f$ is subadditive, and rejects with constant probability if $f$ is $\epsilon$-far from subadditive.
 
 We describe the testers here. 
  Let $\lambda = \sqrt{\log (4/\epsilon)}$, and define $M_\lambda = \{S \subseteq [m] |  |S| \le m/2 + \lambda \sqrt{m}\}$. 
   The tester repeats the following steps $1/\epsilon$ times.
  \begin{itemize}
  	\item Randomly pick a set $T \in M_\lambda$ and query  the sets  $Q(T) = \{S | S \subseteq T \} $.
  	\item (Subadditive tester) Reject if EITHER there exists $T' \in Q(T)$ such that $f(T)  < f(T')$ OR there exist $T_1,\dots,T_r \in  Q(T)$ for some $r \ge 1$ such that $T = \cup_{i =1}^{r} T_i$ and $f(T) > \sum_{i =1}^{r} f(T_i)$.
  	\item (XOS tester) Reject if EITHER there exists $T' \in Q(T)$ such that $f(T)  < f(T')$ OR there exist $T_1,\dots,T_r \in  Q(T)$ and $\alpha_1, \dots, \alpha_r \in \mathbb{R}_+$ for some $r \ge 1$ such that for all elements $s \in T$, $\sum_{j:s \in T_j} \alpha_j \ge 1$ and $f(T) > \sum_{j =1}^{r} \alpha_j f(T_j)$.
  \end{itemize}

 It is clear that tester makes $|Q(T)|/\epsilon$ queries and $|Q(T)| \le 2^{m/2 + \lambda \sqrt{m}} \le 2^{m/2 + O(\sqrt{m \log (1/\epsilon)})}$, which is also a bound on the number of queries by the tester.
 
   Recall that XOS functions are  fractionally subadditive functions, i.e.,  a function $f:2^{[m]} \rightarrow \mathbb{R}_{\ge 0}$ is XOS iff for all  $T$  and $\alpha_S \ge 0$ such that  $\sum_{S:s \in S} \alpha_S \ge 1$ for each $s \in T$,  $f(T) \le \sum_{S} \alpha_S f(S)$. Clearly if the function $f$ is  subadditive or XOS then the tester accepts.   Now we need to show if the function $f$ is $\epsilon$-far from  subadditive or XOS then tester rejects with high probability. Before that
   we show the following characterization for extension of a partial function to an XOS function. 
    \begin{lemma}
  	\label{lemma-XOS}
  	 Partial function $H$ is extendible to a XOS function  iff for all $T \in \mathcal{D}$ and  all $\alpha_S \ge 0$ for all $S \in \mathcal{D}$ such that $\sum_{S \in \mathcal{D}: s \in S} \alpha_S \ge 1$ for each $s \in T$,  we have  $f(T) \le \sum_{S \in \mathcal{D}} \alpha_S f(S)$.
  \end{lemma}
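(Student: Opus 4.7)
The plan is to mirror the construction used for subadditive functions in Lemma~\ref{lemma-subadditive}: verify the easy direction directly and, for the converse, exhibit an explicit fractionally subadditive (equivalently, XOS) extension.

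For the forward direction, if $f$ is an XOS function extending $H$, then $f$ is fractionally subadditive on all of $2^{[m]}$, so instantiating the inequality with the given cover $\{\alpha_S\}_{S \in \mathcal{D}}$ of $T$ (and zero weight on subsets outside $\mathcal{D}$) yields $f(T) \le \sum_{S \in \mathcal{D}} \alpha_S f(S)$, which is exactly the stated condition.

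For the converse, let $U := \bigcup_{T \in \mathcal{D}} T$. I would define
\[
\hat{f}(S) := \min\Bigl\{\textstyle\sum_{T \in \mathcal{D}} \alpha_T f(T) \;:\; \alpha_T \ge 0,\ \sum_{T \in \mathcal{D} : s \in T} \alpha_T \ge 1\ \text{for each } s \in S\Bigr\}
\]
for $S \subseteq U$, and $\hat{f}(S) := \hat{f}(S \cap U)$ for $S \not\subseteq U$. The LP defining $\hat{f}(S)$ is feasible whenever $S \subseteq U$ (take $\alpha_T = 1$ for some $T \in \mathcal{D}$ containing each $s \in S$), so $\hat{f}$ is well-defined. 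I would then verify two claims: (i) $\hat{f}$ extends $H$, since for $T \in \mathcal{D}$ the choice $\alpha_T = 1$ gives $\hat{f}(T) \le f(T)$, while the hypothesis forces $\hat{f}(T) \ge f(T)$; (ii) $\hat{f}$ is fractionally subadditive on $2^{[m]}$, hence XOS.

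The main work is in (ii). Given $S \subseteq [m]$ and a fractional cover $\{\lambda_R\}_{R \subseteq [m]}$ of $S$, for each $R$ with $\lambda_R > 0$ I pick an optimal witness $\{\alpha^R_T\}_{T \in \mathcal{D}}$ for $\hat{f}(R)$ and set $\gamma_T := \sum_R \lambda_R \alpha^R_T$. For each $s \in S \subseteq U$, the chain
\[
\sum_{T \in \mathcal{D} : s \in T} \gamma_T \;=\; \sum_R \lambda_R \sum_{T \in \mathcal{D} : s \in T} \alpha^R_T \;\ge\; \sum_{R : s \in R} \lambda_R \;\ge\; 1
\]
shows $\gamma$ is a valid fractional cover of $S$ by sets of $\mathcal{D}$, and swapping sums yields $\sum_T \gamma_T f(T) = \sum_R \lambda_R \hat{f}(R)$, so $\hat{f}(S) \le \sum_R \lambda_R \hat{f}(R)$ as required. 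The only subtlety is when $U \subsetneq [m]$: for $R$ with $s \in R \setminus U$ the cover $\alpha^R$ only covers $R \cap U$, but since every $s$ of interest lies in $S \subseteq U$, the inner inequality still goes through. The composition of fractional covers is the hard part; the remaining verifications are routine bookkeeping.
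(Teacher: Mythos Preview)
Your proposal is correct and follows essentially the same approach as the paper: both directions use that XOS equals fractionally subadditive, and the converse constructs the identical LP-based extension $\hat f$ and verifies fractional subadditivity by composing fractional covers. Your treatment is in fact slightly more careful than the paper's, since you explicitly handle the case $U=\bigcup_{T\in\mathcal D}T\subsetneq[m]$ via $\hat f(S):=\hat f(S\cap U)$, a point the paper glosses over.
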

  \begin{proof}
  	Obviously if there is a XOS extension then the above property must hold as XOS functions are also fractionally subadditive.  For the other direction, consider $ \hat{f}(T) =  \min \{\sum_{S \in \mathcal{D}} \alpha_S f(S) |\\ \sum_{S \in \mathcal{D}} \alpha_S \chi(S) \ge \chi(T) , \thinspace \alpha_S \ge 0 \}$. We claim that  $ \hat{f}(T)$ is a XOS extension of the partial function. Let $T \in \mathcal{D}$.  By the assumption, we have $f(T) \le \hat{f}(T)$. Also, if we set $\alpha_T  = 1$ and $\alpha_S  = 0$ for rest of $S$ then we get $\hat{f}(T) \le f(T)$. Therefore,  $\hat{f}(T)$ is an extension. Now suppose for some $T,T_1,\dots,T_n \subseteq [m]$ and $\beta_1,\dots,\beta_n \ge 0$, we have  $\sum_{j=1}^{n} \beta_j \chi(T_j) \ge \chi(T)$. We will show that $\hat{f}(T) \le \sum_{j =1}^{n} \beta_j \hat{f}(T_j)$ which will complete the proof. Let $\hat{f}(T_i) = \sum_{S \in \mathcal{D}} \alpha^i_S f(S)$ for all $i \in [n]$. Here, $\{\alpha^i_S\}_{S \in \mathcal{D}}$ are optimal values as in definition of $\hat{f}$ and we have $\sum_{S \in \mathcal{D}} \alpha^i_S \chi(S) \ge \chi(T_i)$. Therefore, we need to show that $\hat{f}(T) \le \sum_{j =1}^{n} \beta_j \sum_{S \in \mathcal{D}} \alpha^j_S f(S) = \sum_{S \in \mathcal{D}}  (\sum_{j = 1}^{n} \beta_j \alpha^j_S) f(S)$. Note that $\chi(T) \le \sum_{j=1}^{n} \beta_j \chi(T_j) \le \sum_{j=1}^{n} \beta_j \sum_{S \in \mathcal{D}} \alpha^i_S \chi(S) = \sum_{S \in \mathcal{D}}  (\sum_{j = 1}^{n} \beta_j \alpha^j_S) \chi(S)$. Therefore, by definition of $\hat{f}$, we have  $\hat{f}(T) \le  \sum_{S \in \mathcal{D}}  (\sum_{j = 1}^{n} \beta_j \alpha^j_S) f(S)$.
  \end{proof}
 
 A set $T \in M_\lambda$ is called \textsf{bad} if it causes the tester to reject.  For subadditive functions,   the  set of bad sets $\mathcal{B}$ consists of $T \in M_\lambda$ such that either there exists  $T' \subseteq T$  such that   $f(T) < f(T')$ or there exists $T_1,\dots,T_r$ for some $r \ge 1$ such that $T = \cup_{i =1}^{r} T_i$ and $f(T) > \sum_{i =1}^{r} f(T_i)$. Similarly, for XOS functions, $T \in M_\lambda$ is in $\mathcal{B}$   if there exists $T' \subseteq T$ such that    $f(T) < f(T')$ or  there exists, for some $r \ge 1$,  $T_1,\dots,T_r \subseteq T$ such that $T = \cup_{i =1}^{r} T_i$ and $\alpha_1,\dots,\alpha_r \ge 0$   such that for all elements $s \in T$,  $\sum_{T_j:s \in T_j} \alpha_j \ge 1$ and   $ f(T) > \sum_{j =1}^{r} \alpha_j f(T_j)$. 
 
 We show that removing all sets not in $M_\lambda$, as well as the bad sets, gives us a partial function that can be extended to subadditive (or XOS). Since the function is $\epsilon$-far and $M_\lambda$ is large by our choice of $\lambda$, therefore there must be many bad sets.
 \begin{lemma}
 	The partial function $H = \{(S,f(S))| S \in M_\lambda \quad \text{and} \quad S \not \in \mathcal{B} \}$ is extendible to a  subadditive (XOS) function.
 \end{lemma}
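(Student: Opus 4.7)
The plan is to apply the characterizations for subadditive and XOS extensions (Lemmas \ref{lemma-subadditive} and \ref{lemma-XOS}) directly to the partial function on $\mathcal{D} := M_\lambda \setminus \mathcal{B}$. For subadditive extension I must verify two things on $\mathcal{D}$: monotonicity, and that for every $T_1, \dots, T_r, T_{r+1} \in \mathcal{D}$ with $\bigcup_{i=1}^r T_i \supseteq T_{r+1}$ we have $\sum_{i=1}^r f(T_i) \ge f(T_{r+1})$; the XOS case asks for the analogous fractional inequality from Lemma \ref{lemma-XOS}.

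The central idea is a simple intersection trick that reduces an arbitrary cover to one by subsets of $T_{r+1}$, at which point the definition of ``not bad'' can be invoked directly. Set $T'_i := T_i \cap T_{r+1}$ for each $i$. Then $\bigcup_i T'_i = T_{r+1}$, every $T'_i$ lies in $Q(T_{r+1})$ (since $T'_i \subseteq T_{r+1}$), and every $T'_i$ lies in $Q(T_i)$ (since $T'_i \subseteq T_i$). Two applications of the non-rejection conditions then give what is needed. First, $T_i \notin \mathcal{B}$ rules out the first rejection clause for the pair $(T_i, T'_i)$, yielding $f(T_i) \ge f(T'_i)$. Second, $T_{r+1} \notin \mathcal{B}$ rules out the second rejection clause for $T_{r+1}$ relative to the cover $\{T'_i\}$, yielding $f(T_{r+1}) \le \sum_i f(T'_i)$. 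Chaining these gives $f(T_{r+1}) \le \sum_i f(T'_i) \le \sum_i f(T_i)$, and monotonicity on $\mathcal{D}$ follows from the first clause alone. For the XOS case I use the same restriction $T'_i := T_i \cap T$ together with the given fractional weights $\alpha_{T_i}$; the covering condition $\sum_{i : s \in T_i} \alpha_{T_i} \ge 1$ for $s \in T$ is preserved verbatim after intersecting with $T$, so the XOS non-rejection at $T$ gives $f(T) \le \sum_i \alpha_{T_i} f(T'_i)$, and the first clause at each $T_i$ then upgrades this to $f(T) \le \sum_i \alpha_{T_i} f(T_i)$.

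There is essentially no technical obstacle; the only point worth stressing is that the intermediate sets $T'_i$ need not lie in $\mathcal{D}$ (they may well be bad, though they do satisfy $|T'_i| \le |T_{r+1}|$ and hence lie in $M_\lambda$). This is harmless: the characterizations of Lemmas \ref{lemma-subadditive} and \ref{lemma-XOS} are only invoked on elements of $\mathcal{D}$, while the values $f(T'_i)$ we use along the way are simply oracle values of the input function, defined on all of $2^{[m]}$ and available to the tester's rejection clauses.
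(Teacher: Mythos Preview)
Your proof is correct and follows essentially the same approach as the paper: both use the intersection trick $T'_i = T_i \cap T_{r+1}$ to reduce the cover to subsets of $T_{r+1}$, then invoke the two rejection clauses at $T_{r+1}$ and at each $T_i$ respectively. The paper phrases it as a proof by contradiction (if the covering inequality fails then one of the sets must be bad), whereas you give the contrapositive direct argument, but the content is identical.
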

 \begin{proof}
 	Suppose the partial function is not extendible, and let $\mathcal{D} =\{S| S \in M_\lambda \quad \text{and} \quad S \not \in \mathcal{B} \}$ be the defined points in $H$. Then for subadditive functions, by Lemma \ref{lemma-subadditive},  there  exist  $T_1,\dots,T_r,T  \in \mathcal{D}$ such that  $T \subseteq \cup_{i = 1}^{r} T_i$ and $f(T) > \sum_{i = 1}^{r} f(T_i)$. Then either $f(T) > \sum_{i = 1}^{r} f(T \cap T_i) $ or for some $j \in [r]$, $f(T_j) < f(T \cap T_j)$. 
 	Note that since $T,T_j \in M_\lambda$, so is $T \cap T_j$. Further, $T = \cup_{i \in [r]} T \cap T_i$. 
 	Thus, in the first case, $T$ is in $\mathcal{B}$ while in the second case, $T_j \in \mathcal{B}$, giving a contradiction.
 	
 	For XOS functions, by Lemma \ref{lemma-XOS} there exists  $T_1,\dots,T_r$ and $\alpha_1,\dots,\alpha_r \ge 0$ for some $r \ge 1$,  such that  $\sum_{T_j:s \in T_j} \alpha_j \ge 1$ for each $s \in T$ and $ f(T) > \sum_{j =1}^{r} \alpha_j f(T_j)$. Like subadditive functions, either $f(T) >  \sum_{j =1}^{r} \alpha_j f(T \cap T_j)$  or for some $j \in [r]$, $f(T_j) < f(T \cap T_j)$.  Again, in the first case, $T$ is in $\mathcal{B}$, while in the second case $T_j \in \mathcal{B}$. 
 \end{proof}
 \begin{theorem}
 	If $f$ is $\epsilon$-far from  subadditive (XOS) functions then the above tester rejects with constant probability.
 \end{theorem}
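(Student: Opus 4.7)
The plan is to show that when $f$ is $\epsilon$-far from the target class, the bad set $\mathcal{B}$ occupies at least an $\Omega(\epsilon)$ fraction of $M_\lambda$, so the $1/\epsilon$ independent samples the tester draws from $M_\lambda$ catch a bad set with constant probability. The heavy lifting has already been done by the preceding lemma: it hands us a subadditive (respectively XOS) function $g$ defined on all of $2^{[m]}$ that agrees with $f$ on every $S \in M_\lambda \setminus \mathcal{B}$. Hence the disagreement set $\{S : f(S) \neq g(S)\}$ is contained in $\mathcal{B} \cup (2^{[m]} \setminus M_\lambda)$, and since $f$ is $\epsilon$-far from every subadditive (XOS) function, this disagreement set has cardinality at least $\epsilon \cdot 2^m$.

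Next I would bound $|2^{[m]} \setminus M_\lambda|$ by concentration. The size of a uniformly random subset of $[m]$ is a sum of $m$ independent Bernoulli$(1/2)$ variables, so a one-sided Hoeffding bound gives
\[
\Pr[|S| > m/2 + \lambda\sqrt{m}] \le \exp(-2\lambda^2) = \epsilon^2/16,
\]
using the prescribed $\lambda = \sqrt{\log(4/\epsilon)}$. Combining this with the inclusion from the previous paragraph yields
\[
|\mathcal{B}| \ge \epsilon \cdot 2^m - (\epsilon^2/16)\cdot 2^m \ge (\epsilon/2)\cdot 2^m,
\]
so a uniformly random $T \in M_\lambda$ is bad with probability $|\mathcal{B}|/|M_\lambda| \ge \epsilon/2$. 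Across the $1/\epsilon$ independent iterations, the tester therefore rejects with probability at least $1 - (1 - \epsilon/2)^{1/\epsilon} \ge 1 - e^{-1/2}$, a positive constant.

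The main conceptual point is recognizing that $\epsilon$-farness is a for-all statement over subadditive (XOS) functions, so exhibiting a \emph{single} well-chosen extension $g$ is enough, and such a $g$ is precisely what the preceding lemma delivers; beyond this the argument is routine, relying only on a one-sided Hoeffding bound and the standard inequality $(1-x)^{1/x} \le e^{-1}$. The only case-specific element is that the preceding lemma invokes Lemma \ref{lemma-XOS} for XOS in place of Lemma \ref{lemma-subadditive} for subadditive, while the counting step written above is identical in the two cases.
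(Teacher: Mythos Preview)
Your proof is correct and follows essentially the same approach as the paper: both use the preceding lemma to infer $|\mathcal{B}\cup(2^{[m]}\setminus M_\lambda)|\ge \epsilon 2^m$, apply a tail bound to show $|2^{[m]}\setminus M_\lambda|$ is a small fraction of $\epsilon 2^m$, and conclude that a random $T\in M_\lambda$ is bad with probability $\Omega(\epsilon)$. The only differences are cosmetic constants (you use Hoeffding with exponent $-2\lambda^2$ giving $\epsilon^2/16$, the paper uses the weaker $e^{-\lambda^2}=\epsilon/4$), and you spell out the final $1-(1-\epsilon/2)^{1/\epsilon}\ge 1-e^{-1/2}$ explicitly.
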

 \begin{proof}
 	Let $\mathcal{D} = \{S| S \in M_\lambda \quad \text{and} \quad S \not \in \mathcal{B}  \}$  and $\mathcal{U} = 2^{[m]}\setminus \mathcal{D}$. Since the partial function $\{(S,f(S))|S \in \mathcal{D}\}$ is extendible so $|\mathcal{U}| \ge \epsilon 2^m$ (since $f$ is $\epsilon$-far). 
 	Note that $|\mathcal{U}| = |\mathcal{B}| + \sum_{i = m/2 + \lambda \sqrt{m}}^{m} \binom{m}{i} = |\mathcal{B}| + \sum_{i = 1}^{m/2 - \lambda \sqrt{m}} \binom{m}{i}$. By Chernoff bound,  $\sum_{i = 1}^{m/2 - \lambda \sqrt{m}} \binom{m}{i} = 2^m  Pr(X \le m/2 - \lambda \sqrt{m}) \le 2^m e^{-\lambda^2}$ where  $X$ is a  binomial random variable $Bi(m,1/2)$. By the choice of $\lambda$,  $\sum_{i = 1}^{m/2 - \lambda \sqrt{m}} \binom{m}{i} \le \epsilon 2^m/4$. Hence we have $ |\mathcal{B}| \ge 3 \epsilon  2^m/4$.
 	Therefore in a single iteration our tester will pick a bad set with  probability at least $\frac{3}{4} \epsilon$. Hence after $1/\epsilon$ iterations, the tester will pick a bad set with constant probability. 
 	 
 \end{proof}

 \paragraph*{A subexponential tester for nonmonotone subadditive functions.} We now describe a property testing algorithm for general (nonmonotone) subadditive functions that makes $2^{O(\sqrt{m\log (1/\epsilon)}\log m)}$ queries; in this subsection, subadditive refers to nonmonotone subadditive functions.

 Let $\lambda = \sqrt{\ln (4/\epsilon)}$, and define $M_\lambda = \{S \subseteq [m] | m/2 - \lambda \sqrt{m} \le |S| \le m/2 + \lambda \sqrt{m}\}$.
  The tester repeats the following steps $1/\epsilon$ times:
 \begin{itemize}
 	\item Randomly pick a set $T \in M_\lambda$ and query the sets  $Q(T) = \{S \in M_\lambda | S \subseteq T \} $.
 	\item If there exists $T_1,\dots,T_r \in  Q(T)$ for some $r \ge 1$ such that $T = \cup_{i =1}^{r} T_i$ and $f(T) > \sum_{i =1}^{r} f(T_i)$ then reject.
 \end{itemize}
The tester makes $|Q|/\epsilon$ queries, where $|Q| \le O(\binom{m/2 + \lambda \sqrt{m}}{2\lambda \sqrt{m}})$, and hence $|Q| = 2^{O(\sqrt{m \log (1/\epsilon)} \log m)}$, which is also a bound on the number of queries by the tester.

 Obviously if the function $f$ is subadditive then the tester accepts.  Now we will show if the function $f$ is $\epsilon$-far from subadditive then tester rejects with high probability. 

We first give the characterization for partial function extension similar to claim \ref{lemma-subadditive} for general subadditive functions.
\begin{lemma}
	\label{lemma-general-subadditive}
	The partial function $H$ is extendible to a subadditive function (not necessarily monotone) iff  $\sum_{i = 1}^{r} f(T_i) \ge f(\cup_{i = 1}^{r} T_i)$ for all $T_1,\dots,T_r \in \mathcal{D}$ such that $\cup_{i = 1}^{r} T_i \in \mathcal{D}$.
\end{lemma}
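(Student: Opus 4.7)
My plan is to mirror the proof of Lemma~\ref{lemma-subadditive}, with the key adjustment that without monotonicity we must be careful how we assign values to sets that are not expressible as unions of defined points.

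For the easy (necessity) direction, assume $f$ is a subadditive extension and $T_1, \dots, T_r \in \mathcal{D}$ with $T := \cup_{i=1}^r T_i \in \mathcal{D}$. By repeated application of subadditivity, $f(T) = f(\cup_i T_i) \le \sum_i f(T_i)$, and since $f$ extends $H$, we obtain $\sum_i f(T_i) \ge f(T)$.

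For the sufficiency direction, call a set $S \subseteq [m]$ \emph{unionable} if $S = \cup_{i=1}^r T_i$ for some $T_1, \dots, T_r \in \mathcal{D}$. Define
\[
\hat{f}(S) = \begin{cases} \min\{\sum_{i=1}^r f(T_i) \,:\, S = \cup_{i=1}^r T_i,\ T_i \in \mathcal{D}\} & \text{if } S \text{ is unionable,} \\ M & \text{otherwise,} \end{cases}
\]
where $M$ is chosen large enough (say $M := \sum_{T \in \mathcal{D}} f(T)$, which upper-bounds $\hat{f}(S)$ on all unionable $S$ since any unionable $S$ admits a representation using a subset of $\mathcal{D}$). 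To verify $\hat{f}$ extends $H$, note that for $T \in \mathcal{D}$ the singleton representation gives $\hat{f}(T) \le f(T)$, while the hypothesis applied to any representation $T = \cup_i T_i$ (with $T \in \mathcal{D}$) yields $\sum_i f(T_i) \ge f(T)$, so $\hat{f}(T) = f(T)$.

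The main (and only nontrivial) step is verifying subadditivity: $\hat{f}(A) + \hat{f}(B) \ge \hat{f}(A \cup B)$ for all $A, B$. The crucial observation is that unionable sets are closed under union: if $A = \cup_i T_i$ and $B = \cup_j T_j'$ then $A \cup B$ is the union of all these $T_i, T_j'$. So if $A \cup B$ is \emph{not} unionable, then neither is $A$ or $B$ (at least one of them), forcing $\hat{f}(A) + \hat{f}(B) \ge M \ge \hat{f}(A \cup B)$. If $A \cup B$ is unionable and both $A, B$ are unionable, concatenating optimal representations of $A$ and $B$ gives a representation of $A \cup B$ of total weight $\hat{f}(A) + \hat{f}(B)$, so $\hat{f}(A \cup B) \le \hat{f}(A) + \hat{f}(B)$. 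If $A \cup B$ is unionable but (say) $A$ is not, then $\hat{f}(A) = M \ge \hat{f}(A \cup B)$ by the choice of $M$, and again the inequality holds since $\hat{f}(B) \ge 0$. The principal subtlety to be careful about is this casework around non-unionable sets, which is why picking $M$ as a global upper bound on $\hat{f}$ over unionable sets is essential; the choice above, together with nonnegativity of $f$, suffices.
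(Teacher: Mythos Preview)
Your proof is correct and follows essentially the same approach as the paper's: define $\hat f$ on the union-closure of $\mathcal D$ via the minimum-cost covering, extend by a constant $M$ elsewhere, and verify subadditivity by casework using that the union-closure is closed under union. The only cosmetic differences are that the paper sets $\hat f(S)=f(S)$ directly for $S\in\mathcal D$ (rather than deriving it from the hypothesis as you do) and takes $M$ to be the exact maximum of $\hat f$ over the union-closure rather than your explicit upper bound $\sum_{T\in\mathcal D} f(T)$; both choices work for the same reason, namely nonnegativity of the $f_i$.
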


A set $T \in M_\lambda$ is called \textsf{bad} if it causes the tester to reject.  The  set of bad sets $\mathcal{B}$ consists of $T \in M_\lambda$ such that there exists $T_1,\dots,T_r \in  M_\lambda$ for some $r \ge 1$ such that $T = \cup_{i =1}^{r} T_i$ and $f(T) > \sum_{i =1}^{r} f(T_i)$.  

 We show that removing all sets not in $M_\lambda$, as well as the bad sets, gives us a partial function that can be extended to subadditive function. Since the function is $\epsilon$-far and $M_\lambda$ is large by our choice of $\lambda$, therefore there must be many bad sets.
\begin{lemma}
	The partial function $H = \{(S,f(S))| S \in M_\lambda \quad \text{and} \quad S \not \in \mathcal{B} \}$ is extendible to a subadditive function.
\end{lemma}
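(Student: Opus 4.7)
The plan is to apply Lemma~\ref{lemma-general-subadditive} (the characterization for extendibility to a general, possibly nonmonotone, subadditive function) as a contrapositive and show that any obstruction to extendibility would force some defined point to lie in $\mathcal{B}$, contradicting its presence in the partial function.

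Formally, let $\mathcal{D} = \{S \in M_\lambda : S \notin \mathcal{B}\}$ denote the set of defined points of $H$. Suppose, for contradiction, that $H$ does not extend to a subadditive function. Then by Lemma~\ref{lemma-general-subadditive} there exist $T_1, \dots, T_r \in \mathcal{D}$ such that the union $T := \bigcup_{i=1}^r T_i$ is also in $\mathcal{D}$ and $f(T) > \sum_{i=1}^r f(T_i)$. Since each $T_i$ lies in $\mathcal{D} \subseteq M_\lambda$, and $T$ is their union with $f(T) > \sum_i f(T_i)$, the definition of a bad set applies verbatim to $T$: there exist $T_1,\dots,T_r \in M_\lambda$ with $T = \bigcup_i T_i$ and $f(T) > \sum_i f(T_i)$, so $T \in \mathcal{B}$. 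This contradicts $T \in \mathcal{D}$, which required $T \notin \mathcal{B}$.

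Unlike the monotone subadditive and XOS cases proved earlier, no intersection-with-$T$ trick is needed here. In those cases the witnesses $T_i$ in the Lemma~\ref{lemma-subadditive}/Lemma~\ref{lemma-XOS} characterization only satisfied $T \subseteq \bigcup_i T_i$, so one had to replace $T_i$ by $T \cap T_i$ to get subsets of $T$ and then argue by cases (either monotonicity fails at some $T_j$ or subadditivity fails at $T$). For nonmonotone subadditive, Lemma~\ref{lemma-general-subadditive} already gives the union $\bigcup_i T_i$ \emph{equal} to a defined point, so the $T_i$'s are automatically in $M_\lambda$ and the witness plugs directly into the bad-set definition.

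The main (and only) delicate step is checking that the quantifiers in the definition of $\mathcal{B}$ match those produced by Lemma~\ref{lemma-general-subadditive}; both ask for $T_1,\dots,T_r \in M_\lambda$ whose union is exactly $T$, with the strict inequality $f(T) > \sum_i f(T_i)$, so the match is direct. Hence no $T \in \mathcal{D}$ can be the union of other elements of $\mathcal{D}$ with smaller total $f$-value, and the characterization in Lemma~\ref{lemma-general-subadditive} is satisfied, completing the proof.
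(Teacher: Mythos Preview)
Your proof is correct and follows essentially the same approach as the paper: both argue by contradiction via Lemma~\ref{lemma-general-subadditive}, obtain $T_1,\dots,T_r,T\in\mathcal{D}$ with $T=\bigcup_i T_i$ and $f(T)>\sum_i f(T_i)$, and conclude $T\in\mathcal{B}$, a contradiction. Your additional commentary comparing this case to the monotone/XOS cases is accurate and helpful, but the core argument is identical to the paper's.
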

\begin{proof}
	Suppose the partial function is not extendible. Let $\mathcal{D} =\{S| S \in M_\lambda \quad \text{and} \quad S \not \in \mathcal{B} \}$ be the defined sets in $H$. Then by Lemma \ref{lemma-general-subadditive}, there will exist  $T_1,\dots,T_r,T  \in \mathcal{D}$ such that  $T = \cup_{i = 1}^{r} T_i$ and $\sum_{i = 1}^{r} f(T_i) < f(T)$. This implies $T \in \mathcal{B}$ which is a contradiction. 
\end{proof}
\begin{theorem}
	If $f$ is $\epsilon$-far from subadditive functions then the above tester rejects with constant probability.
\end{theorem}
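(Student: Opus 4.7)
The plan is to mirror the argument used for the monotone subadditive/XOS tester, adapting it to the two-sided window $M_\lambda = \{S : m/2 - \lambda\sqrt{m} \le |S| \le m/2 + \lambda\sqrt{m}\}$. Let $\mathcal{D} = M_\lambda \setminus \mathcal{B}$, and let $\mathcal{U} = 2^{[m]} \setminus \mathcal{D}$ be the undefined sets of the extendible partial function from the preceding lemma. Since that lemma gives a subadditive extension on $\mathcal{D}$, and since $f$ is $\epsilon$-far from every subadditive function, any extension must disagree with $f$ on at least $\epsilon 2^m$ points, all of which must lie in $\mathcal{U}$; hence $|\mathcal{U}| \ge \epsilon 2^m$.

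Next, I would split $|\mathcal{U}| = |\mathcal{B}| + |2^{[m]} \setminus M_\lambda|$ and bound the second term using a Chernoff/Hoeffding tail estimate on $X \sim \mathrm{Bi}(m,1/2)$. By symmetry of the binomial around $m/2$,
\[
|2^{[m]} \setminus M_\lambda| = 2 \sum_{i=0}^{m/2 - \lambda\sqrt{m}} \binom{m}{i} = 2 \cdot 2^m \Pr(X \le m/2 - \lambda\sqrt{m}) \le 2 \cdot 2^m e^{-\lambda^2}.
\]
With $\lambda = \sqrt{\ln(4/\epsilon)}$, this is at most $\epsilon 2^m / 2$, so $|\mathcal{B}| \ge \epsilon 2^m / 2$.

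Finally, a single iteration picks $T$ uniformly from $M_\lambda$, and since $|M_\lambda| \le 2^m$, the probability that $T \in \mathcal{B}$ is at least $|\mathcal{B}|/|M_\lambda| \ge \epsilon/2$. Repeating $1/\epsilon$ independent iterations, the probability that none of the chosen sets is bad is at most $(1 - \epsilon/2)^{1/\epsilon} \le e^{-1/2}$, so the tester rejects with constant probability, completing the proof.

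The only non-routine piece is verifying that the factor of $2$ loss from the two-sided tail (compared with the one-sided tail in the monotone case) is absorbed by the choice of $\lambda$; everything else is a direct translation of the earlier argument, supported by Lemma~\ref{lemma-general-subadditive} and the preceding lemma characterizing $\mathcal{B}$.
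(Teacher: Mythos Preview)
Your proposal is correct and follows essentially the same approach as the paper: define $\mathcal{D}=M_\lambda\setminus\mathcal{B}$, use the preceding lemma plus $\epsilon$-farness to get $|\mathcal{U}|\ge\epsilon 2^m$, subtract the two-sided binomial tail bounded via Chernoff with $\lambda=\sqrt{\ln(4/\epsilon)}$ to conclude $|\mathcal{B}|\ge\epsilon 2^m/2$, and finish with the $1/\epsilon$ independent repetitions. The only differences are cosmetic (you spell out the bound $(1-\epsilon/2)^{1/\epsilon}\le e^{-1/2}$ and the inequality $|M_\lambda|\le 2^m$, which the paper leaves implicit).
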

\begin{proof}
	Let $\mathcal{D} = \{S| S \in M_\lambda \quad \text{and} \quad S \not \in \mathcal{B} \}$  and $\mathcal{U} = 2^{[m]}\setminus \mathcal{D}$. Since the partial function $\{(S,f(S))|S \in \mathcal{D}\}$ is extendible,  $|\mathcal{U}| \ge \epsilon 2^m$ (since $f$ is $\epsilon$-far).
	Note that $|\mathcal{U}| = |\mathcal{B}| +2 \sum_{i = 1}^{m/2 - \lambda \sqrt{m}} \binom{m}{i}$. Hence again using Chernoff bound and the value of $\lambda$, we have $ |\mathcal{B}| \ge \epsilon 2^m/2$.  Therefore in a single iteration our tester will pick a bad set with $\epsilon/2$ probability. Hence after $1/\epsilon$ iterations, the tester will pick a bad set with constant probability.
\end{proof} 
		\section{Submodular functions}

Submodular functions are perhaps the the most important functions in combinatorial optimization. They capture diminishing marginal returns for set functions, which is satisfied in many practical applications as well as theoretical constructs, and can be viewed as the discrete analog of concave functions. Coverage functions, matroid rank functions,  and many others are special cases of submodular functions. A function $f: 2^{[m]} \rightarrow \mathbb{R}$ is submodular if $f(A) + f(B) \ge f(A \cup B) + f(A \cap B)$ for all $A,B \subseteq [m]$. 
As before, $\mcD := \{T_i\}_{i \in [n]}$ is the set of points in the given partial functions $H$, and $\mcU := 2^{[m]}$. These are called \emph{defined} and \emph{undefined} points respectively. The \emph{lattice closure} $LC(\mcD)$ is the minimal set that contains $\mcD$ and is closed under union and intersection. For a family of set $\mcF \subseteq 2^{[m]}$, we say a function $f$ is submodular on $\mcF$ if for all sets $A$, $B \in \mcF$ so that $A \cup B$, $A \cap B$ are also in $\mcF$, $f(A) + f(B) \ge f(A \cup B) + f(A \cap B)$.

		Seshadri and Vondrak \cite{submodularity} study  property testing of submodular functions and give the first subexponential time non-adaptive tester for submodularity. They also introduce the problem of Submodular Extension --- the problem we study --- to a submodular function, and note its usefulness in analyzing  property testing algorithms. They give a partial function $H$ that is defined on an exponential number of points and is not extendible to a submodular function. However, if any  point is removed from $H$, then it can be extended to a submodular function, indicating the difficulty in designing a tester for submodularity. 

Seshadri and Vondrak introduce and use a combinatorial structure called a \emph{path certificate}, the existence of which certifies that a given partial function cannot be extended to a submodular function. Our results are instead based on a structure called a \emph{square certificate}. A square certificate is multiset of tuples $(\{A,B\},A \cup B,A \cap B)$ with $A, B \subseteq [m]$, with additional constraints on the multiset. Since the union and intersection of sets can be seen as bitwise AND and OR of the characteristic vectors of the sets, our analysis proceeds by viewing a square certificate as special monotone Boolean circuit. Since submodular functions are defined using union and intersection, we believe that square certificates are more natural than path certificates. It is easy to show that a square certificate exists iff the given partial function is not extendible. Our main technical result is that if a partial function is not extendible, then there exists a square certificate where the sets in every square $(\{A,B\},A \cup B,A \cap B)$  are in the lattice closure of $\mcD$. We use this to obtain a number of results on the extendibility of a partial function.

	\begin{theorem}
		\label{submodular-main1}
		\begin{enumerate}
			\item If the sets in $\mcD$ form an antichain,\footnote{A family of sets is an antichain if no set in the family is contained in another set.} then the partial function can always be extended to a submodular function. Hence, submodular functions cannot be PMAC-learned.
			\item Let $\mathcal{F} :=  LC(\mcD) \cap \{S \, : \, \exists T_i, T_j \in \mcD \mbox{ s.t. } T_i \supseteq S \supseteq T_j \}$ be the sets obtained by the union and intersection of sets in $\mcD$, that are also both contained in and contained by sets in $\mcD$. If the partial function can be extended to a submodular function on $\mathcal{F}$, then it can be extended to a submodular function on $2^{[m]}$. Hence, Submodular Approximate Extension (and thus Extension) can be solved in $O(\poly(|\mathcal{F}|,m,n))$ time.
		\end{enumerate}
	\end{theorem}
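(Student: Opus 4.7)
The plan is to establish both parts of Theorem~\ref{submodular-main1} from a single structural lemma based on \emph{square certificates}---combinatorial witnesses of non-extendibility. A square certificate is a signed multiset of quadruples $(A, B, A\cup B, A\cap B)$ with $A,B \subseteq [m]$ such that summing the submodularity inequalities $f(A)+f(B) \ge f(A\cup B)+f(A\cap B)$ with the prescribed signs yields a linear inequality that is violated by the values on $\mcD$, while every set outside $\mcD$ has net coefficient zero. By LP duality (Farkas) applied to the feasibility LP whose variables are $\{x_S : S \in \mathcal{G}\}$ and whose constraints are the submodular inequalities plus $x_{T_i} = f_i$, the partial function $H$ fails to extend to a submodular function on any $\cup$-, $\cap$-closed family $\mathcal{G} \supseteq \mcD$ if and only if a square certificate exists using only sets from $\mathcal{G}$. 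The easy direction (extension implies no certificate) is immediate; the converse is Farkas.

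The core step is the structural lemma: \emph{if $H$ is not extendible to a submodular function on $2^{[m]}$, then there is a square certificate using only sets in $\mathcal{F}$}. I would prove this by starting from an arbitrary square certificate and iteratively eliminating any set $S$ appearing in some quadruple but not in $\mathcal{F}$. Two cases arise. If $S \notin LC(\mcD)$, view each square as a two-input/two-output ``gate'' ($A,B$ in, $A\cup B, A\cap B$ out); the net-zero coefficient condition forces each occurrence of $S$ as an output to be matched with an occurrence of $S$ as an input, producing a chain of squares through $S$. This chain can be rewritten, using lattice distributivity $(A\cup(B\cap C) \subseteq (A\cup B)\cap(A\cup C))$ and the dual, into an equivalent composition whose intermediate sets are all obtained by $\cup$ and $\cap$ of sets already present, and are therefore in $LC(\mcD)$. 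If instead $S \in LC(\mcD)$ but $S$ fails the sandwich condition (no $T_i, T_j \in \mcD$ with $T_j \subseteq S \subseteq T_i$), then $S$ is either strictly larger than every defined set or strictly smaller than every defined set, and in either case a monotone-circuit parity argument forces $S$'s contribution to vanish, so the quadruples containing $S$ can be cancelled pairwise.

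The main obstacle will be making the rewriting precise while preserving the violated inequality on $\mcD$. I would maintain as an invariant that for every $S \notin \mcD$ the total input-multiplicity equals the total output-multiplicity across the multiset, so that any local rewrite preserves the summed inequality on the defined points. Termination of the rewriting process follows by placing a well-founded order on multisets of ``bad'' sets (say, lex order on cardinalities of non-$\mathcal{F}$ sets) and showing each rewrite strictly decreases it.

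Granted the structural lemma, part~2 is immediate: a submodular extension on $\mathcal{F}$ rules out square certificates inside $\mathcal{F}$, hence by the lemma rules out all square certificates, hence by LP duality a submodular extension on $2^{[m]}$ exists. The algorithm solves the LP enforcing submodularity on all quadruples inside $\mathcal{F}$ together with $f_i \le x_{T_i} \le \alpha f_i$, requiring $\poly(|\mathcal{F}|, m, n)$ variables and constraints for both Extension and Approximate Extension. For part~1, if $\mcD$ is an antichain then every $S \in \mathcal{F}$ satisfies $T_j \subseteq S \subseteq T_i$ for some $T_i, T_j \in \mcD$, and antichain-ness forces $T_i = T_j = S$; hence $\mathcal{F} = \mcD$ and no nontrivial submodular constraint survives (if $A, B, A\cup B, A\cap B$ all lie in an antichain, then $A = B$). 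Thus arbitrary values on $\mcD$ extend submodularly. Taking $\mcD$ to be the middle layer $\binom{[m]}{\lfloor m/2\rfloor}$, which is an antichain of size $2^{\Theta(m)}$, and applying Lemma~\ref{learning-application} with arbitrary $r$, yields that submodular functions are not PMAC-learnable at any finite approximation factor.
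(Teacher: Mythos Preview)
Your overall architecture---square certificates via Farkas, a structural lemma placing all involved sets inside $\mathcal{F}$, and then deducing both parts of the theorem from that lemma---is exactly the paper's. The derivation of parts~1 and~2 from the structural lemma is correct and essentially identical to the paper's (including the use of the middle layer and Lemma~\ref{learning-application} for the learning lower bound, and the LP over $\mathcal{F}$ for the algorithm).

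The gap is in your proof sketch of the structural lemma itself. Two points:

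\textbf{Sandwich condition.} Your Case~2 claim that a non-sandwiched $S$ is ``strictly larger than every defined set or strictly smaller than every defined set'' is false (take $\mathcal{D}=\{\{1,2\},\{3,4\}\}$, $S=\{1\}$), and no rewrite is actually needed here. The paper observes that in \emph{any} square certificate, every involved set is automatically sandwiched: an undefined involved $S$ must have $m(S)=tb(S)>0$, so $S$ is a middle point of some square $(\{S,S'\},S\cup S',S\cap S')$; following $S\cup S'$ upward (and $S\cap S'$ downward), one reaches a defined set because the chain cannot ascend forever. So part~(i) of the structural lemma holds for the original certificate without modification.

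\textbf{Lattice-closure condition.} Your Case~1 plan---match each output occurrence of $S\notin LC(\mathcal{D})$ to an input occurrence, form a chain, and rewrite via distributivity---does not address the key obstruction the paper identifies: these input/output matchings can close up into \emph{cycles}, so there need be no acyclic ``chain'' to rewrite along, and it is unclear that any local rewrite strictly decreases a well-founded potential. The paper's route is different and global. It converts the square certificate into a (possibly cyclic) monotone boolean circuit with paired $\land/\lor$ gates (a ``boolean certificate''), proves via a careful \emph{fixing algorithm} that in such circuits every output gate is nonetheless determined by the input assignment, and deduces that each gate computes a monotone boolean function of the input gates. Replacing the creator function at intermediate gates by these monotone functions yields a new boolean certificate---hence a new square certificate---in which every involved set is a $\cup/\cap$ combination of defined sets, i.e., lies in $LC(\mathcal{D})$. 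This circumvents the cycle issue entirely rather than trying to break cycles by local rewrites.
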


Let $r$ be the maximum difference in the size of two sets $T_i, T_j \in \mcD$. Then an upper bound on the size of $\mcF$ is $O(\poly(n,2^r))$. Thus if all sets are roughly of the same size, Approximate Extension can be solved in polynomial time. Further, if $n$ is a constant then $|LC(\mcD)|$ is a constant and hence Approximate Extension can be solved in polynomial time. For the proof of the theorem, the second result is in particular very non-trivial and require insights into the structure of square certificates. We start by formally defining square certificates.

\vspace{0.1in}
\noindent \textbf{Square Certificates.} A \emph{square tuple} is a triple $(\{A,B\}, A \cup B,  A \cap B)$ where $A,B \subseteq [m]$. The sets $A$ and $B$ are called \emph{middle} points,  $A \cup B$ is the \emph{top} point and $A \cap B$ is the \emph{bottom} point of this square tuple. Sets $A,B,A \cup B, A \cap B$ are said to be \emph{part} of this square tuple. Given a multiset of square tuples, for a set $S \subseteq [m]$, define $m(S)$ to be the number of square tuples with $S$ as a middle point, and $tb(S)$ to be the number of square tuples with $S$ as a top or bottom point. We say a set $S$ is an \emph{input set} if $m(S) > tb(S)$, an \emph{intermediate set} if $m(S) = tb(S)$, and an \emph{output set} if $m(S) < tb(S)$. A \emph{square certificate} is a multiset of square tuples with the following properties:

\begin{enumerate}[(P1)]
	\item If $S$ is an input or an output set, i.e., $m(S) \neq tb(S)$then $S$ must be in $\mcD$.
	\item $\sum_{i \in [n]} f_i \left(tb(T_i) - m(T_i)\right) > 0$.
\end{enumerate}

\noindent  A set $S$ is \emph{involved} if it is a part of some square triple in the square certificate. By definition, a partial function can be extended iff the following linear program (with variables $w_A$ for all $A \subseteq [m]$) is feasible: 
 
 \[
 w_A + w_B \ge w_{A \cup B} + w_{A \cap B} \quad \forall A,B \subseteq [m], \quad w_A = f_A \quad \forall A \in \mathcal{D} \,.
 \]

\noindent A square certificate can be seen as a dual solution obtained from Farkas' lemma. 
\begin{lemma}
	\label{certificate-defn}
	A partial function is extendible iff there does not exist a square certificate.
\end{lemma}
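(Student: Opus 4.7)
The approach is linear programming duality. By definition, the partial function extends to a submodular function iff the LP
\[
w_A + w_B \ge w_{A \cup B} + w_{A \cap B} \ \ \forall A, B \subseteq [m], \qquad w_A = f_A \ \ \forall A \in \mcD
\]
is feasible. I would invoke Farkas' alternative: this LP is infeasible iff there exist nonnegative multipliers $y_{A,B}$ on the submodularity inequalities and free multipliers $z_A$ on the equalities such that (a) the induced linear combination has coefficient zero on every variable $w_S$, and (b) the resulting scalar $\sum_{A \in \mcD} z_A f_A$ is strictly positive.

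To read off the combinatorial content of these multipliers, I would form the multiset in which each unordered pair $\{A,B\}$ contributes the square tuple $(\{A,B\}, A \cup B, A \cap B)$ with multiplicity $y_{A,B}$. The coefficient of $w_S$ in the combined inequality is then exactly $m(S) - tb(S)$, augmented by $z_S$ when $S \in \mcD$. Requiring vanishing for every $S \notin \mcD$ is precisely property (P1); for $S \in \mcD$ it forces $z_S = tb(S) - m(S)$, and condition (b) becomes $\sum_{i \in [n]} f_i\bigl(tb(T_i) - m(T_i)\bigr) > 0$, which is property (P2). Hence a square certificate is exactly a Farkas infeasibility certificate for the extension LP. The converse is the same computation run in reverse: summing the submodularity inequalities with weights $y_{A,B}$ gives $\sum_S (m(S) - tb(S)) w_S \ge 0$, and substituting $w_A = f_A$ for $A \in \mcD$ while using (P1) to kill the undefined terms reduces this to $\sum_i f_i(m(T_i) - tb(T_i)) \ge 0$, contradicting (P2).

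The one wrinkle is that Farkas produces real-valued multipliers, whereas a square certificate demands integer multiplicities. This is handled by the integrality of the constraint matrix: an infeasibility certificate can be chosen with rational entries, and clearing a common denominator yields integers; both (P1) and (P2) are preserved under positive scaling, since (P1) is homogeneous and the strict inequality in (P2) survives. Only finitely many $y_{A,B}$ are nonzero in any basic certificate, so the resulting multiset is finite. I expect this integrality/scaling step, together with the routine bookkeeping around unordered pairs $\{A,B\}$ (including the degenerate case $A = B$, whose submodularity constraint is trivial and contributes nothing), to be the main, though essentially mechanical, obstacle.
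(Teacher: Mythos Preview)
Your proposal is correct and follows essentially the same approach as the paper: apply Farkas' lemma to the feasibility LP, identify the dual multipliers $y_{\{A,B\}}$ with square-tuple multiplicities so that the vanishing condition on undefined $w_S$ yields (P1) and the strict inequality yields (P2), and pass from rational to integer multipliers by clearing denominators. The paper's converse is phrased as constructing a feasible dual solution from a square certificate, but this is equivalent to your direct summation argument.
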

\begin{proof}
	 A partial function can be extended iff the following linear program (with variables $w_S$ for all $S \subseteq [m]$) is feasible: 
	 
	 \[
	 w_A + w_B \ge w_{A \cup B} + w_{A \cap B} \quad \forall A,B \subseteq [m], \quad w_A = f_A \quad \forall A \in \mathcal{D} \,.
	 \]
	 
	 \noindent Using Farkas' lemma, the following linear program with variables $y_{\{A,B\}}$ for all $A,B \subseteq [m]$ must then be infeasible: 
	 \begin{align}
	 \sum_{B} y_{\{A,B\}} - \sum_{P,Q: P \cup Q = A} y_{\{P,Q\}} - \sum_{P,Q: P \cap Q = A} y_{\{P,Q\}} & ~ = ~ 0 \quad \forall \thinspace A \not \in \mathcal{D} \\
	 \sum_{A \in \mathcal{D}}f_A \Big( \sum_{P,Q: P \cup Q = A} y_{\{P,Q\}} + \sum_{P,Q: P \cap Q = A} y_{\{P,Q\}} - \sum_{B} y_{\{A,B\}} \Big)& ~>~ 0 \\
	 y_{\{A,B\}} & ~\ge~ 0 \quad \forall A,B \subseteq [m]
	 \end{align}
	 
	 \noindent We will show that there exists a square certificate iff this dual linear program is feasible. Firstly, note that if the dual is feasible then it has a rational solution, and any rational solution to the dual can be converted into an integral solution by multiplying by the product of the denominators. Now given a square certificate, we identify each variable $y_{\{A,B\}}$ in the dual with the square tuple $(\{A,B\},A \cup B, A \cap B)$, and set $y_{\{A,B\}}= k$ if this square tuple appears $k$ times in the square certificate. Then for a set $A$, $m(A) = \sum_B y_{\{A,B\}}$, and $tb(A)$ $= \sum_{P,Q: P \cup Q = A} y_{\{P,Q\}}$ $+ \sum_{P,Q: P \cap Q = A} y_{\{P,Q\}}$. The first constraint then says that if $A$ is not defined, then $A$ must appear an equal number of times as a middle point, and as a top or bottom point. That is, $A$ must be an intermediate set. The second constraint in the linear program corresponds exactly to property (P2). Thus, a square certificate gives us a feasible dual solution. The converse can be proved with the same construction: given an integral dual solution, we create a square certificate by including a square tuple $(\{A,B\}, A \cup B, A \cap B)$ $y_{\{A,B\}}$ times. The first and second constraints in the linear program give us properties (P1) and (P2) of the square certificate exactly.
	\end{proof}

\begin{lemma}
	\label{vs-in-sc}
If there is  a square certificate for a partial function, then for any extension $f(\cdot)$ to the hypercube, there is a square tuple $(A,B,A\cup B, A \cap B)$ in the square certificate with $ f(A)  + f(B) <  f(A \cup B) + f(A \cap B)$.
\end{lemma}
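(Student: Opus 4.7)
The plan is to argue by contradiction and rewrite the summed submodularity inequalities over all square tuples as a single inequality whose coefficients are controlled by properties (P1) and (P2).

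Suppose $f$ is an extension that violates the claim, i.e.\ for every square tuple $(\{A,B\}, A\cup B, A\cap B)$ in the certificate we have $f(A)+f(B) \ge f(A\cup B)+f(A\cap B)$. First I would sum this inequality over all square tuples in the certificate (counting multiplicities for a multiset). The key observation is bookkeeping: the left-hand side of the summed inequality equals $\sum_{S \subseteq [m]} m(S)\, f(S)$, since each set $S$ is counted exactly $m(S)$ times as a middle point; similarly the right-hand side equals $\sum_{S \subseteq [m]} tb(S)\, f(S)$. Rearranging yields
\[
\sum_{S \subseteq [m]} \bigl(tb(S) - m(S)\bigr)\, f(S) \;\le\; 0.
\]

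Next I would use property (P1) to collapse this sum. By definition, any set $S$ with $tb(S) \ne m(S)$ is either an input or output set, and (P1) forces such $S$ to lie in $\mcD$; intermediate sets contribute $0$ to the sum. Hence the sum reduces to
\[
\sum_{i \in [n]} \bigl(tb(T_i) - m(T_i)\bigr)\, f(T_i) \;\le\; 0.
\]
Since $f$ extends the partial function, $f(T_i) = f_i$ for all $i \in [n]$, so this becomes $\sum_{i \in [n]} f_i\,(tb(T_i) - m(T_i)) \le 0$, which directly contradicts property (P2) of a square certificate. Therefore some square tuple must violate the submodularity inequality, completing the proof.

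I do not expect any step to be a genuine obstacle; the proof is essentially a dual-pairing argument that mirrors the Farkas derivation used in Lemma~\ref{certificate-defn}. The only care needed is to handle multiplicities correctly (the certificate is a multiset) and to note that $f$ takes the prescribed values exactly on $\mcD$, which is precisely where the nonzero coefficients $tb(S) - m(S)$ are concentrated by (P1).
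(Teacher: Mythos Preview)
Your proposal is correct and follows essentially the same approach as the paper: assume for contradiction that every square tuple satisfies the submodular inequality, sum these inequalities over the multiset, observe via (P1) that intermediate sets cancel, and obtain $\sum_{i} f_i\,(tb(T_i)-m(T_i)) \le 0$, contradicting (P2). Your writeup is in fact slightly more explicit about the $m(S)$/$tb(S)$ bookkeeping than the paper's own proof.
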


Our main technical result is the following lemma.

\begin{lemma}
If there is a square certificate for a partial function, then there is a square certificate where, if $S$ is an involved set, then (i) there exist $T_i$, $T_j \in \mcD$ so that $T_i \supseteq S \supseteq T_j$, and (ii) $S \in LC(\mcD)$.
\label{submodular-lemma}
\end{lemma}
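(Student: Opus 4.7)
The plan is to establish the lemma via its contrapositive, using Lemma \ref{certificate-defn}: it suffices to show that if the partial function extends to a submodular function on $\mcF := LC(\mcD) \cap \{S : \exists T_i, T_j \in \mcD,\, T_i \supseteq S \supseteq T_j\}$, then it extends to a submodular function on all of $2^{[m]}$. Given such a $g : \mcF \to \mathbb{R}$, I would extend one set at a time in an order respecting the lattice generated by $\mcF$ (e.g., by increasing ``construction depth'' of the new set as a $\cup/\cap$ expression over $\mcF$). At each step, adding a new set $S$ imposes upper bounds $f(S) \leq f(A) + f(B) - f(A \cap B)$ (from $A \cup B = S$) and lower bounds $f(S) \geq f(A \cup S) + f(A \cap S) - f(A)$ (from pairs $(A, S)$ with $A \cup S, A \cap S$ already in the domain). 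Consistency reduces to verifying, for appropriate $A, B, X$ in the current domain, the auxiliary inequality $f(A) + f(B) + f(X) \geq f(A \cup B \cup X) + f(X \cap (A \cup B)) + f(A \cap B)$.

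The key combinatorial identity is that this auxiliary inequality follows from four submodularity inequalities on pairs avoiding $S = A \cup B$, namely $(A, X)$, $(B, A \cup X)$, $(A \cap X, B \cap X)$, and $(A \cap B, B \cap X)$. Summing these inequalities yields the target inequality plus a non-negative excess equal to twice the submodularity inequality on $(A \cap B, B \cap X)$. The role of the lattice closure and sandwich structure of $\mcF$ is to guarantee that the auxiliary sets $A \cup X$, $A \cap X$, $B \cap X$, $(A \cap B) \cup (B \cap X)$, and $A \cap B \cap X$ all lie in the current domain, so that each individual submodularity inequality in the sum is available. The sandwich condition (i) would be ensured in parallel by a global trimming $X \mapsto (X \cap U) \cup I$ with $U := \bigcup_{T_i \in \mcD} T_i$ and $I := \bigcap_{T_i \in \mcD} T_i$; this operation commutes with $\cup$ and $\cap$ and fixes $\mcD$, so it preserves both (P1) and (P2) of any square certificate, and then a finer per-set refinement uses the lattice structure of $\mcD$ to enforce sandwiching by a specific pair $(T_i, T_j)$ rather than only the global interval $[I, U]$.

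The main obstacle is absorbing the excess term produced by the combinatorial identity: since the excess is itself non-negative, adding the identity's constituent inequalities as fresh constraints can in principle over-constrain $f(S)$. Handling this requires either a structural argument that the excess is always already implicit in the current domain's constraints (so it imposes no new restriction), or working at the level of LP duality --- viewing the submodular extension polytope on $\mcF$ and on $2^{[m]}$ and showing the latter projects onto the former, with square certificates serving as dual witnesses on either side. I expect the cleanest path to be the dual one: translate any square certificate on $2^{[m]}$, via a polyhedral projection matching the swap identity above, into a certificate supported on $\mcF$ while preserving the conditions (P1) and (P2). This delicate absorption argument, together with the careful choice of processing order that keeps the auxiliary sets in the current domain, is where the main technical content resides, and indeed where the authors place their main contribution.
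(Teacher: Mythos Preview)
Your high-level strategy---proving part~(ii) via its contrapositive by showing ``extendible on $\mcF$ $\Rightarrow$ extendible on $2^{[m]}$'' and then invoking a restricted-to-$\mcF$ version of Lemma~\ref{certificate-defn}---is logically sound, and it is a genuinely different route from the paper. But the proposal as written contains a real gap, and you yourself identify it.

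First, on part~(i): you do not need the global trimming $X \mapsto (X\cap U)\cup I$ at all, and that map does not yield the statement anyway (it only gives $I\subseteq S\subseteq U$, not $T_j\subseteq S\subseteq T_i$ for some $T_i,T_j\in\mcD$). The paper observes that property~(i) holds \emph{automatically} for every square certificate: any involved intermediate set $S$ has $m(S)=tb(S)>0$, so $S$ is a middle point of some square, and following top points upward (resp.\ bottom points downward) until one hits a set with $m\neq tb$ lands in $\mcD$ by (P1). No modification of the certificate is required.

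Second, on part~(ii): your four-inequality identity does not deliver the target. With $I_1,\dots,I_4$ the submodularity slacks on the pairs $(A,X)$, $(B,A\cup X)$, $(A\cap X,B\cap X)$, $(A\cap B,B\cap X)$ and $T$ the target slack, one checks $I_1+I_2+I_3+I_4 = T + 2I_4$, i.e.\ $T = I_1+I_2+I_3 - I_4$. Since $I_4\ge 0$, nonnegativity of the sum does \emph{not} force $T\ge 0$; the ``excess'' $2I_4$ sits on the wrong side. You acknowledge this obstacle and then defer to an unspecified LP-duality / polyhedral-projection argument, which is precisely the direction the paper takes---but with a concrete mechanism you do not supply. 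Moreover, the single case you treat (upper bound from $A\cup B=S$ against a lower bound from a pair $(X,S)$) is only one of several consistency configurations needed when adding a new set; the others (e.g.\ $S$ arising as an intersection, or two competing lower bounds forcing constraints back on already-defined sets) are not addressed, and the claim that the relevant auxiliary sets stay in the current domain leans on $\mcF$ being a lattice, which it is not in general.

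For comparison, the paper does not attempt a primal extension at all. It converts a square certificate into an equivalent \emph{boolean certificate}: a (possibly cyclic) monotone circuit of fan-in-2 AND/OR gates with a ``creator'' map $\mcC$ from gates to subsets, together with $m$ satisfying assignments $A_i(g)=(\mcC(g))_i$. The key technical work is showing that, despite cycles, every output gate is \emph{fixed} by any input assignment (via an explicit Fixing Algorithm), and that the resulting gate functions $f_g$ are monotone Boolean functions of the inputs. Replacing each coordinate's satisfying assignment by the one given by these $f_g$'s yields a new creator $\mcC'$ agreeing with $\mcC$ on input/output gates but with every $\mcC'(g)$ expressible via $\cup,\cap$ from the input sets $\mcC(X_j)\in\mcD$---hence in $LC(\mcD)$. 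Translating back gives the desired square certificate. This is the missing ``swap identity preserving (P1) and (P2)'' that your dual sketch gestures at; it is where all the difficulty lives, and your proposal does not provide a substitute for it.
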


We will develop the proof of the second part of the lemma in the remainder of the paper. The first part, however, is easily seen. Let $S$ be an involved set. Then either by property (P1) $S \in \mcD$, in which case $T_i = T_j = S$, or $S$ is undefined and hence $S$ is an intermediate set. In the latter case, $m(S) = tb(S) > 0$, and hence $S$ is a middle set in some square $(S,S', S \cup S', S \cap S')$. If $S \cup S'$ is defined, then $T_i = S \cup S'$. Otherwise, $S \cup S'$ is an intermediate set, and we can continue in this way until we find a defined set that contains $S$. Similarly, considering the bottom point $S \cap S'$ of the square, we can find a defined set that is contained by $S$.

\begin{proof}[Proof of Theorem~\ref{submodular-main1} (assuming Lemma \ref{submodular-lemma}).]
For the first part of the theorem, if the partial function cannot be extended, there must exist a square certificate. In particular, from (P2), there must exist $T_i \in \mcD$ with $tb(T_i) - m(T_i) > 0$. Thus, $T_i$ must be part of at least one square tuple $(A,B, A\cup B, A \cap B)$ that is not trivial, i.e., $A \neq B$ (otherwise all four sets are just equal to $T_i$, and this square contributes to $tb(T_i)$ and $m(T_i)$ equally). Then $A \cap B \neq A \cup B$, and either $T_i \subsetneq A \cup B$, or $T_i \supsetneq A \cap B$ (or both). In the former case, by Lemma~\ref{submodular-lemma}, there must exist $T_j \supseteq A \cup B \neq T_i$ with $T_j \in \mcD$, or in the latter case, there must exist $T_j \subseteq A \cap B \neq T_i$ with $T_j \in \mcD$. However, since $\mcD$ is an antichain, no such sets $T_i$, $T_j$ can exist.

To prove that submodular functions cannot be learned, consider the family $\mathcal{F} = {[m] \choose \frac{m}{2}}$ of sets of size $m/2$, and note that $|\mathcal{F}| = \Theta(2^m)$.  Since any partial function with sets in $\mcF$ and arbitrary values for the sets is extendible to a submodular function,  submodular functions can not be learned by any factor (by Lemma~\ref{learning-application}).

For the second part of the theorem, let $f(\cdot)$ be a submodular extension of the partial function to $\mcF$. Then for any points $A, B \in \mcF$ so that $A\cup B, A\cap B$ are also in $\mcF$, $f(A) + f(B) \ge f(A \cup B) + f(A \cap B)$. Suppose for a contradiction that the partial function cannot be extended to a submodular function on the entire domain. Then by Lemma~\ref{certificate-defn} and Lemma~\ref{submodular-lemma}, there is a square certificate where all involved sets are in $\mcF$. But then we have a square certificate and an extension $f(\cdot)$ for which $f(A) + f(B) \ge f(A \cup B) + f(A \cap B)$ for all square tuples in the square certificate. This gives us a contradiction, since by Lemma~\ref{vs-in-sc} the two cannot coexist. Thus, it is necessary and sufficient to find a submodular extension to points in $\mcF$. Therefore, Approximate Extension can be solved by writing a linear program  with the variable $w_S$ for each set $S \in \mcF$ and objective of minimizing $\alpha \ge 1$ subject to  submodularity constraints on the sets $w_S + w_T \ge w_{S \cup T} + w_{S \cap T}$ for $S,T, S \cup T, S \cap T \in \mcF$, and the constraints $f_i \le w_{T_i} \le \alpha f_i$ for each set $T_i \in \mcD$. This gives an algorithm with the desired running time.
\end{proof}

 Given a square certificate, we need to construct another square certificate where all involved sets are in the lattice closure of $\mcD$, i.e., are obtained by the union and intersection of sets in $\mcD$. In particular, this means that all intermediate sets must be in $LC(\mcD)$, since by property (P1) input and output sets are already in $\mcD$. Our insight is to think of the union and intersection of two sets as the bitwise OR and AND of the characteristic vectors respectively. Given a square certificate, we will construct a boolean circuit with certain properties, which we call a \emph{boolean certificate}. Similar to square certificates, boolean certificates are also necessary and sufficient conditions for a partial function to not be extendible to a submodular function. 

As background for defining boolean certificates, we first define boolean circuits. Formally, a \emph{boolean circuit} is a directed graph with three types of nodes --- \emph{input} nodes $\ip$ with no incoming edges, \emph{intermediate} nodes $\im$ with incoming and outgoing edges, and \emph{output} nodes $\op$ with no outgoing edges. Intermediate and output nodes are labelled with logic gates and perform the labelled boolean operation. The computed value at an intermediate node is fed to other intermediate or output nodes, and the computed value at the output nodes are the output of the circuit. The fan-in of a gate is its indegree, which is the number of arguments for the boolean operation. Throughout the paper, we will consider only AND and OR gates with fan-in 2. An assignment $A:\ip \cup \op \cup \im \rightarrow \{0,1\}$ is called \emph{satisfying} if the boolean operation operation at each gate is satisfied by the assignment. That is, if $g$ is an AND gate with inputs $g_1$ and $g_2$, then $A(g) = A(g_1) \land A(g_2)$, and $A(g) = A(g_1) \lor A(g_2)$ if $g$ is an OR gate. 

\begin{figure}
	\begin{tikzpicture}[scale = .85]
	\node[style = {circle,thick,draw},label = right :{$y_1$}]  (a) at (0, 0) {\small $\land$};
	\node[style = {circle,thick,draw}
	,label = right :{$z_1$}
	]  (b) at ( 1, 1) {\small $\land$};
	\node[style = {circle,thick,draw}
	]  (c) at ( -1, 1) {$x_1$};
	\node[style = {circle,thick,draw}
	,label = above :{$z_2$}
	]  (d) at ( 0, 2) {\small $\lor$};
	\node[style = {circle,thick,draw}
	]  (e) at ( 2.5, 3) {$x_2$};
	\node[style = {circle,thick,draw}
	,label = right :{$y_2$}
	]  (f) at ( 1.5, 4) {\small $\lor$};
	\draw[->,>=stealth] (b) -- (a);
	\draw[->,>=stealth] (c) -- (a);
	\draw[->,>=stealth] (b) -- (d);
	\draw[->,>=stealth](c) -- (d);
	\draw [->,>=stealth] (d) to [out=30,in=100] (b);
	\draw[->,>=stealth] (e) -- (b);
	\draw[->,>=stealth] (d) -- (f);
	\draw[->,>=stealth] (e) -- (f);	
	\end{tikzpicture}	
	\caption{A cycle between $\land$ and $\lor$ gates. For input values $x_1 = 0, x_2 = 1$, both $z_1 = z_2 = y_1 = 0, y_2 = 1$ and $z_1 = z_2 = y_2 = 1, y_1 = 0$ are satisfying assignments. }
	\label{fig:cycle}
\end{figure}
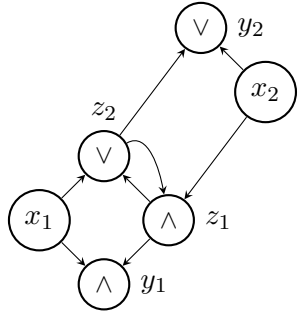
The boolean circuits we consider are constructed from  square certificates and can contain cycles. In a cyclic boolean circuit,  given values to the input nodes, there may be many satisfying assignments (see figure \ref{fig:cycle}).   The technical difficulty in proving second part of Lemma~\ref{submodular-lemma} is due to the presence of cycles in the constructed boolean circuit (the square certificate may correspondingly have analogous ``cycles'').  If the constructed boolean circuit does not have cycles, then the proof of the lemma follows trivially.

 We now define the notion of computation in cyclic boolean circuits. We use $\all$ to denote the set of all gates.
  We assume an ordering of the gates, with the input gates appearing first in this order, followed by the output gates and then the intermediate gates. An assignment to the input gates is denoted $(x_1,\dots,x_{|\ip|}) \in \{0,1\}^{|\ip|}$. An assignment to all gates will be denoted by $(x_1,\dots,x_{|\ip|},y_1,\dots,y_{|\op|},z_1,\dots,z_{|\im|}) \in \{0,1\}^{|\all|}$. 
We use $X_1$, $X_2$, $\ldots$, $X_{|\ip|}$ to refer to the input gates.


Recall that an assignment $(x_1,\dots,x_{|\ip|},y_1,\dots,y_{|\op|},z_1,\dots,z_{|\im|})$ is called \emph{satisfying} if the boolean operation operation at each gate is satisfied by the assignment. Given an assignment $(x_1,\dots,x_{|\ip|})$ to input gates,  we say a gate $g^*$ \emph{computes} a value $b^* \in \{0,1\}$ if the value at $g^*$ is $b^*$ in every satisfying assignment to the gates, with the input gates as specified.  We now formally define the notion of a gate being \emph{fixed} to a value $b \in \{0,1\}$. 
The equivalence of the two definitions is shown in Lemma~\ref{intutive-defn}.  It is possible that a gate does not compute any value.

 \begin{definition}
 	Given an assignment $(x_1,\dots,x_{|\ip|}) \in \{0,1\}^{|\ip|}$ to the  input gates, a gate $g^*$ is  \emph{fixed} to value $b^* \in \{0,1\}$ if there exists a subgraph $G' = (V' , E')$ of the boolean circuit with values $\val(g) \in \{0,1\}$ for all gates $g \in V'$, so that:
 	\begin{itemize}
 		\item $G'$ is a rooted tree with all edges directed towards the root.
 		\item $g^*$ is the root, and all leaves are input nodes.
 		\item $\val(g^*) = b^*$, and $\val(X_i) = x_i$ for each input node $X_i$ that is a leaf.
 		\item If gate $g$ has only one child $g'$ in the tree then either (i) $g$ is an AND gate and $\val(g) = \val(g') = 0$ or (ii) $g$ is an OR gate and $\val(g) = \val(g') = 1$.  
 		\item If gate $g$ has two children $g'$, $g''$ in the tree then either (i) $g$ is an AND gate and $\val(g) = \val(g') = \val(g'') = 1$ or (ii) $g$ is an OR gate and $\val(g) = \val(g') = \val(g'') = 0$. 
 	\end{itemize}
 	\label{def:fixing}
 \end{definition}
 We  call the above rooted tree and associated values a \emph{proof} of $g^*$ fixed to $b^*$ for the given assignment to the input gates. We make a few observations regarding the definition. First, since in the boolean circuit the fan-in for each gate is 2, the rooted tree is a binary tree. Second, since the value of every gate in the rooted tree must be the same as its children, in fact every gate in the rooted tree must have the same value.
 
 Note that a priori, a gate may be fixed to different values for the same input. We now show that in fact this cannot happen. 
 \begin{lemma}
 	\label{fixed-value}
 	If a gate $g$ is fixed to both $b'$ and $b''$ for a particular assignment to the input gates, then $b' = b''$.
 \end{lemma}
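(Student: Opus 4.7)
The plan is to argue by strong induction on $|T'| + |T''|$, where $T'$ is a proof that $g$ is fixed to $b'$ and $T''$ is a proof that $g$ is fixed to $b''$. Throughout, I will exploit the fact that, by Definition~\ref{def:fixing}, the structure of the proof tree at the root is rigidly determined by the gate type and by the value assigned at the root: an AND gate with value $0$ must have exactly one child (of value $0$), an AND gate with value $1$ must have exactly two children (both of value $1$), an OR gate with value $1$ must have exactly one child (of value $1$), and an OR gate with value $0$ must have exactly two children (both of value $0$).

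For the base case, suppose $g$ is an input gate. Then $g$ has no incoming edges in the circuit, so any proof tree rooted at $g$ consists of the single node $g$, and $\val(g)$ must equal the specified input value $x_i$. Hence $b' = x_i = b''$ and there is nothing more to prove. For the inductive step, assume $g$ is an intermediate or output gate and, for contradiction, that $b' \neq b''$; say $b' = 0$ and $b'' = 1$.

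Consider first the case where $g$ is an AND gate. Then $T'$ has $g$ as root with a single child, which I will call $h_0$, with $\val(h_0) = 0$; meanwhile $T''$ has $g$ as root with two children $h_1, h_1'$, both of value $1$. Every child of the root in a proof tree is an incoming neighbour of the root in the circuit, and since $g$ has fan-in exactly $2$, the child $h_0$ must coincide with either $h_1$ or $h_1'$; call this common child $h$. Now the subtree of $T'$ rooted at $h$ is a proof that $h$ is fixed to $0$, and the subtree of $T''$ rooted at $h$ is a proof that $h$ is fixed to $1$. The sum of sizes of these two subtrees is strictly smaller than $|T'| + |T''|$ (each drops $g$, and $T''$ also drops the entire subtree rooted at the other child). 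Strong induction then yields the contradiction $0 = 1$. The case where $g$ is an OR gate is entirely symmetric: now $T''$ contracts to a single child of $g$ of value $1$ while $T'$ has two children of $g$ of value $0$, and again both proofs share a common child to which the induction hypothesis applies.

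The only step that requires care is the identification of the ``common child'' $h$, which is what forces the fan-in~$2$ assumption to be used. Other than that, the argument is a clean structural induction driven entirely by the rigid shape of proof trees dictated by Definition~\ref{def:fixing}, so I do not anticipate further technical obstacles.
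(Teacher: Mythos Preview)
Your proof is correct and uses essentially the same idea as the paper's: at a gate with conflicting fixed values, one proof tree must branch and the other must not, and the fan-in~$2$ assumption forces a shared child on which the conflict persists. The only packaging difference is that you run an explicit strong induction on $|T'|+|T''|$, whereas the paper picks a deepest node $g_0$ in one tree that receives different values in the two proofs and derives an immediate contradiction at its child; both arguments hinge on the same structural observation and the same use of fan-in~$2$.
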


 Define $\fix^{(x_1,\dots,x_{|\ip|})}$ to be the set of all gates that gets fixed by the assignment $(x_1,\dots,x_{|\ip|})$ to the input gates. We say such gates are \emph{fixed}, with the inputs clear from the context. If $g$ is fixed, then Lemma \ref{fixed-value} allows us to define $\fix^{(x_1,\dots,x_{|\ip|})}(g)$ as the value that gate $g$ is fixed to.
 
 Lemma \ref{intutive-defn} says that the value of fixed gates should be consistent with any satisfying assignment and Lemma \ref{monotone} says that $\fix^{(x_1,\dots,x_{|\ip|})}(g)$ is monotone for fixed gate $g$ as a function of $(x_1,\dots,x_{|\ip|})$. These lemmas are intuitive and are proven by induction on the height of the rooted tree corresponding to any proof that fixes the gate $g$.
 
 \begin{lemma}
 	\label{intutive-defn}
 	Let $A : \ip \cup \op \cup \im \rightarrow \{0,1\}$ be a satisfying assignment to a boolean circuit, and gate $g$ be fixed by the assignment to the input gates. Then $g$ is fixed to $A(g)$. 
 \end{lemma}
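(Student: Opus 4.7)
The plan is to prove the lemma by induction on the height of the rooted tree $G'$ in a proof (in the sense of Definition~\ref{def:fixing}) witnessing that $g^*$ is fixed to $b^*$. Fix a satisfying assignment $A$ and such a proof $(G', \val)$; I want to show $A(g^*) = b^* = \val(g^*)$, and more strongly that $A(g) = \val(g)$ for every gate $g$ appearing in $G'$.

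For the base case, suppose $g^*$ is a leaf of $G'$. Then by the third bullet of Definition~\ref{def:fixing}, $g^*$ is an input gate $X_i$ and $\val(g^*) = x_i$. Since $A$ is a satisfying assignment consistent with the given input $(x_1,\dots,x_{|\ip|})$, we have $A(g^*) = x_i = \val(g^*) = b^*$.

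For the inductive step, I would split on the two configurations permitted by Definition~\ref{def:fixing}. If $g^*$ has a single child $g'$ in $G'$, then the tree rooted at $g'$ is a proof of smaller height fixing $g'$ to $\val(g')$, so by induction $A(g') = \val(g')$. In the subcase where $g^*$ is an AND gate with $\val(g^*) = \val(g') = 0$, the satisfying condition at $g^*$ in $A$ yields $A(g^*) = A(g') \wedge A(g'') = 0 \wedge A(g'') = 0 = \val(g^*)$ for the other input $g''$ of $g^*$ (whatever its value under $A$); the OR subcase with value $1$ is symmetric. If $g^*$ has two children $g', g''$ in $G'$, induction gives $A(g') = \val(g') = b^*$ and $A(g'') = \val(g'') = b^*$, and the AND/OR semantics at $g^*$ in the satisfying assignment $A$ immediately yield $A(g^*) = b^*$.

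I do not expect a real obstacle here: the definition of ``proof'' was crafted precisely so that at each internal node the children's values force the parent's value under any satisfying assignment, so the induction essentially just reads off these constraints. The only care needed is to verify that in the one-child case the unnamed second input $g''$ of $g^*$ (which is not in $G'$) does not cause trouble; this is handled because the ``dominant'' value $0$ for AND, respectively $1$ for OR, fixes the gate's output in $A$ regardless of $A(g'')$. This also explains why the subsequent Lemma~\ref{fixed-value} will follow as an immediate corollary: any two values $b', b''$ to which $g$ is fixed must each equal $A(g)$ for any satisfying extension $A$ of the input assignment (and such an $A$ exists, e.g.\ by propagating arbitrary choices through the circuit).
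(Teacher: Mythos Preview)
Your proof is correct and takes essentially the same approach as the paper: the paper's appendix argues by taking the gate $g_0$ in the proof tree at maximum distance from the root with $\val'(g_0)\neq A(g_0)$ and deriving a contradiction from the agreement at its children, which is exactly the contrapositive form of your induction on the height of the subtree. One small remark: your closing observation that Lemma~\ref{fixed-value} then follows as a corollary relies on the existence of some satisfying assignment extending the given inputs (true for monotone AND/OR circuits via a least-fixed-point argument, but not argued here); the paper instead proves Lemma~\ref{fixed-value} first and directly, by comparing two proof trees.
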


 For two vectors $v$, $v'$ we use the standard notation that $v \ge v'$ if this is true of each component.
 
 \begin{lemma}
 	\label{monotone}
 	Let $(x_1, \ldots, x_{|\ip|})  \ge (x_1', \ldots, x_{|\ip|}')$ be two assignments to the input gates, and consider a gate $g$. If $g$ is fixed to $0$ by the assignment $(x_1, \ldots, x_{|\ip|})$, then this is true for $(x_1', \ldots, x_{|\ip|}')$ as well. Conversely, if $g$ is fixed to $1$ by the assignment $(x_1', \ldots, x_{|\ip|}')$, then this is true for $(x_1, \ldots, x_{|\ip|})$ as well.
 \end{lemma}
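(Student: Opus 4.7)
The plan is to exploit an observation already highlighted in the paper: in any proof tree witnessing that a gate $g^*$ is fixed to a value $b^*$, every single node of the tree carries the same value $b^*$. This uniformity is immediate from Definition~\ref{def:fixing}: at an internal node with one child $g'$, the definition forces $\val(g) = \val(g')$; at an internal node with two children $g'$, $g''$, it forces $\val(g) = \val(g') = \val(g'')$. Propagating downward from the root, every node, including every leaf, takes the value $b^*$.

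Given this, the first half of the lemma is almost automatic. Suppose $g$ is fixed to $0$ under $(x_1, \ldots, x_{|\ip|})$, via a proof tree $T$. Each leaf of $T$ is some input gate $X_i$ with $\val(X_i) = 0$, i.e., $x_i = 0$. For any dominated assignment $(x_1', \ldots, x_{|\ip|}') \le (x_1, \ldots, x_{|\ip|})$, the corresponding coordinates satisfy $x_i' \le x_i = 0$, hence $x_i' = 0$. Thus the very same tree $T$ with the very same value labelling is still a valid proof under the input $(x_1', \ldots, x_{|\ip|}')$, certifying that $g$ is fixed to $0$ there as well.

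The converse direction is symmetric. If $g$ is fixed to $1$ under $(x_1', \ldots, x_{|\ip|}')$ via some proof tree $T'$, every leaf of $T'$ is an input gate with value $1$, i.e., $x_i' = 1$. For any dominating assignment $(x_1, \ldots, x_{|\ip|}) \ge (x_1', \ldots, x_{|\ip|}')$ we then have $x_i \ge 1$, so $x_i = 1$ at these leaves, and $T'$ again serves as a proof under the larger input.

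The only step that needs any care is confirming that Definition~\ref{def:fixing} really does force a uniform value throughout the tree; once that is noted, the lemma reduces to the trivial observation that $x_i' \le 0$ implies $x_i' = 0$ (respectively $x_i \ge 1$ implies $x_i = 1$) at the Boolean leaves. So, in contrast with Lemma~\ref{fixed-value} or Lemma~\ref{intutive-defn} (which require inducting on tree height to rule out inconsistent proofs in the presence of cycles), Lemma~\ref{monotone} needs no induction at all: the same witness tree is re-usable verbatim, and there is no genuine obstacle to surmount.
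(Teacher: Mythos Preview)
Your argument is correct and is essentially the same as the paper's own proof: both rely on the observation that every node in a proof tree carries the common value $b^*$, so when $b^*=0$ (respectively $b^*=1$) the leaf inputs are all $0$ (respectively $1$) and thus unchanged under a componentwise smaller (respectively larger) input, allowing the identical tree to serve as a witness. The paper just states this more tersely, citing the earlier observation rather than re-deriving it.
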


\paragraph*{Boolean Certificates.} Given a partial function $H = \{(T_1, f_1), \ldots, (T_n,f_n)\}$, a boolean certificate, similar to a square certificate, characterizes partial functions that cannot be extended to a submodular function. A boolean certificate consists of the following two parts. 

\begin{enumerate} 
	\item A boolean circuit with AND and OR gates, that satisfies two conditions: (i) Input gates and intermediate gates have outdegree two, and output and intermediate gates have indegree two. (ii) If $g$, $g'$ are inputs to an AND gate, then these are inputs to an OR gate as well. As before, $\ip$, $\op$, and $\im$ are the set of input, output, and intermediate gates, and $\all$ is the set of all gates. 
	\item A function $\mcC: \all \rightarrow 2^{[m]}$ called the creator function, that satisfies three conditions: (i) $\mcC(g) \in \mcD$ for all $g \in \ip \cup \op$, i.e., all input and output gates map to defined sets. (ii) For each $i \in [m]$, the assignment  $A_i(g) = (\mcC(g))_i$ (which assigns $1$ to the gate if the $i$th element is present in $\mcC(g)$, and $0$ otherwise) is a satisfying assignment. This gives $m$ satisfying assigments to the boolean circuit.
	(iii) For $T_i \in \mcD$, let $n_i^{\ip}$ be the number of input gates that $\mcC$ maps to $T_i$, and $n_i^{\op}$ be the number of output gates that $\mcC$ maps to $T_i$. Then $\sum_{i \in [n]} f_i \left( n_i^{\op} - n_i^{\ip} \right) > 0$.
\end{enumerate}

\begin{figure}[!ht]
	\begin{tikzpicture}[scale = 1]
	\node[style = {circle,thick,draw}]  (a) at ( 0, 0) {$Z$}; 
	\node[style = {circle,thick,draw}] (b) at ( 4,0) {$X_1$};
	\node[style = {circle,thick,draw}] (c) at ( 1,1) {$Y_1$}; 
	\node[style = {circle,thick,draw}] (d) at ( 1,-1) {$Y_2$};
	\node[style = {circle,thick,draw}]  (e) at (-0.5,2) {$X_2$}; 
	\node[style = {circle,thick,draw}] (f) at ( 0.5,2) {$X_3$};
	\node[style = {circle,thick,draw}] (g) at ( 0,3) {$Y_3$}; 
	\node[style = {circle,thick,draw}] (h) at ( -1,1) {$Y_4$};
	\node[style = {circle,thick,draw}] (i) at ( -4,0) {$X_4$};
	\node[style = {circle,thick,draw}] (j) at ( -1,-1) {$Y_5$};
	\node[style = {circle,thick,draw}] (k) at ( -0.5,-2) {$X_5$};
	\node[style = {circle,thick,draw}] (l) at ( 0.5,-2) {$X_6$};
	\node[style = {circle,thick,draw}] (m) at ( 0,-3) {$Y_6$};
	
	\draw (a) -- (c);
	\draw (b) -- (c);
	\draw (a) -- (d);
	\draw (b) -- (d);
	\draw (a) -- (h);
	\draw (a) -- (j);
	\draw (i) -- (h);
	\draw (i) -- (j);
	\draw (e) -- (a);
	\draw (f) -- (a);
	\draw  (f) -- (g);
	\draw  (e) -- (g);
	\draw  (k) -- (a);
	\draw  (l) -- (a);
	\draw  (k) -- (a);
	\draw  (l) -- (m);
	\draw  (k) -- (m);
	\end{tikzpicture}
	\begin{tikzpicture}[scale = 1]
	\node[style = {circle,thick,draw},label = $Z$]  (a) at ( 1, 0) {$\lor$}; 
	\node[style = {circle,thick,draw},label = $X_1$] (b) at ( 5,0) {$x_1$};
	\node[style = {circle,thick,draw},label = $Y_1$] (c) at ( 2,1) {$\lor$}; 
	\node[style = {circle,thick,draw},label = $Y_2$] (d) at ( 2,-1) {$\land$};
	
	\node[style = {circle,thick,draw},label = left: $X_5$] (k) at ( 0.5,-2) {$x_5$};
	\node[style = {circle,thick,draw},label =right: $X_6$] (l) at ( 1.5,-2) {$x_6$};
	\node[style = {circle,thick,draw},label = below: $Y_6$] (m) at ( 1,-3) {$\land$};

	\node[style = {circle,thick,draw},label = right:$Z$]  (a1) at ( -1, 0) {$\land$};
	
	\node[style = {circle,thick,draw},label = left:$X_2$]  (e) at (-1.5,2) {$x_2$}; 
	\node[style = {circle,thick,draw},label = right:$X_3$] (f) at ( -0.5,2) {$x_3$};
	\node[style = {circle,thick,draw},label = above:$Y_3$] (g) at ( -1,3) {$\lor$}; 
	\node[style = {circle,thick,draw},label = left:$Y_4$] (h) at ( -2,1) {$\lor$};
	\node[style = {circle,thick,draw},label = above:$X_4$] (i) at ( -5,0) {$x_4$};
	\node[style = {circle,thick,draw},label = right:$Y_5$] (j) at ( -2,-1) {$\land$};

	\draw[->,>=stealth] (a) -- (c);
	\draw[->,>=stealth](b) -- (c);
	\draw[->,>=stealth] (a) -- (d);
	\draw[->,>=stealth] (b) -- (d);
	\draw[->,>=stealth] (a1) -- (h);
	
	\draw[->,>=stealth] (a1) -- (j);
	\draw[->,>=stealth] (i) -- (h);
	\draw[->,>=stealth] (i) -- (j);
	\draw[->,>=stealth] (e) -- (a1);
	\draw[->,>=stealth] (f) -- (a1);
	\draw[->,>=stealth]  (f) -- (g);
	\draw[->,>=stealth]  (e) -- (g);
	\draw[->,>=stealth]  (k) -- (a);
	\draw[->,>=stealth] (l) -- (a);
	\draw[->,>=stealth]  (k) -- (a);
	\draw[->,>=stealth]  (l) -- (m);
	\draw[->,>=stealth]  (k) -- (m);
	\end{tikzpicture}
	\label{fig:sc}
	\caption{A square certificate with input sets $\{X_1,\dots,X_6\}$, output sets $\{Y_1,\dots,Y_6\}$ and intermediate set $\{Z\}$, and the corresponding boolean certificate with the creator function values shown outside the circles.	}
\end{figure}
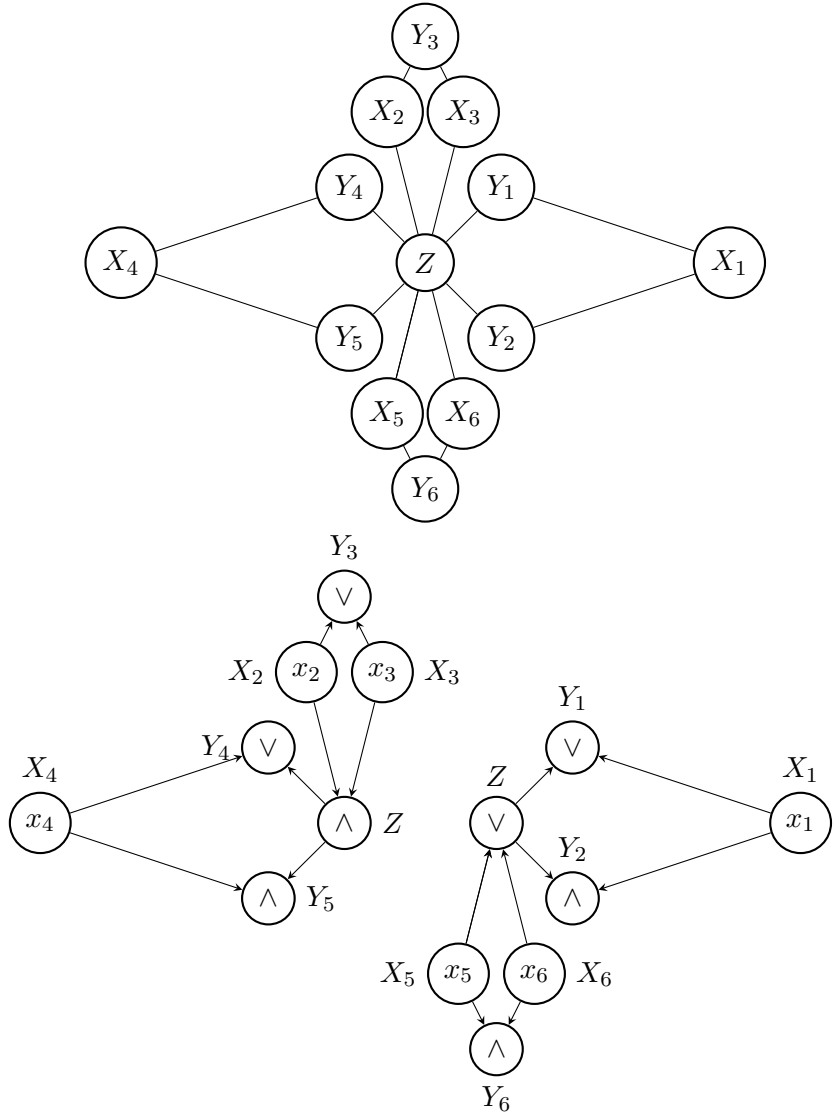

Figure 2 shows an example of a square certificate and the corresponding boolean certificate.

\begin{lemma}
	Given a partial function, there exists a square certificate iff there exists a boolean certificate. If there is a boolean certificate with creator function $\mcC$ then there is a square certificate  with  $\{\mcC(g)| g \in \all\}$ as the family of involved sets.
	\label{lem:squareboolean}
\end{lemma}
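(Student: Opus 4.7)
The plan is to give explicit constructions in each direction by identifying a square tuple $(\{A,B\}, A\cup B, A\cap B)$ with a pair of gates in a boolean circuit --- an AND gate labeled $A\cap B$ and an OR gate labeled $A\cup B$, both consuming two shared input wires labeled $A$ and $B$. The AND/OR-pair constraint in the boolean certificate is the syntactic mirror of producing both top and bottom of a square from the same middle pair; once this identification is set up, the three conditions of a boolean certificate will translate almost verbatim into the two properties of a square certificate.

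For the direction (boolean $\Rightarrow$ square), from each AND/OR pair with shared inputs $g_1, g_2$ and outputs $g_\wedge, g_\vee$ I would emit the tuple $(\{\mcC(g_1), \mcC(g_2)\}, \mcC(g_\vee), \mcC(g_\wedge))$. Condition (ii) applied coordinatewise for each $i \in [m]$ forces $\mcC(g_\wedge) = \mcC(g_1)\cap \mcC(g_2)$ and $\mcC(g_\vee) = \mcC(g_1)\cup \mcC(g_2)$, so every emitted tuple really is a square tuple; moreover every gate label is part of at least one such tuple, so the involved sets of the resulting multiset are exactly $\{\mcC(g) : g \in \all\}$, as the lemma demands. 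The main bookkeeping step is the identity $m(S) = n_S^{\ip} + n_S^{\im}$ and $tb(S) = n_S^{\op} + n_S^{\im}$ (with $n_S^{\ip}, n_S^{\op}, n_S^{\im}$ counting input/output/intermediate gates labeled $S$), which follows from the outdegree-$2$/indegree-$2$ constraints together with the AND/OR pairing. Property (P1) then drops out of condition (i), and (P2) drops out of condition (iii).

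For the reverse direction I would materialize each square tuple as its AND/OR pair on two freshly labeled input wires, and then for each set $S$ greedily match middle-slot occurrences of $S$ with top/bottom-slot occurrences of $S$. Each matched pair becomes an intermediate gate by plugging the producing square's output into the consuming square's input wire; the $|m(S) - tb(S)|$ unmatched occurrences become input gates (if middle slots dominate) or output gates (otherwise). Condition (ii) holds gate-by-gate because each AND/OR gate was labeled exactly as the intersection/union of its inputs' labels; conditions (i) and (iii) follow from (P1) and (P2) via the same identity $n_{T_i}^{\op} - n_{T_i}^{\ip} = tb(T_i) - m(T_i)$ derived above.

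The main obstacle is verifying that the matching step in the reverse direction is compatible with the AND/OR-pair structure --- plugging the output of one square into a ``middle wire'' must simultaneously route it as input to both the AND and the OR of the consuming pair. This works cleanly because every non-output gate has outdegree exactly $2$ aimed at a single AND/OR pair, and every middle slot in a square is likewise a single wire feeding both gates of a pair, so the matching transfers without conflict; once this is pinned down, the rest of the proof is routine bookkeeping following the identities above.
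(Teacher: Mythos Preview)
Your proposal is correct and follows essentially the same approach as the paper: both directions go through the identification of a square tuple with a four-gate gadget (two shared input wires feeding one AND and one OR), with the passage between square and boolean certificates handled by merging/splitting matched input--output occurrences of the same label into intermediate gates, and the key bookkeeping identity $tb(S)-m(S)=n_S^{\op}-n_S^{\ip}$ is exactly what the paper uses. The paper phrases the boolean$\to$square direction as first splitting every intermediate gate into an input/output pair to obtain a ``primary'' circuit of disjoint four-gate components, but this is just a repackaging of your direct reading of AND/OR pairs.
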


\begin{proof}
	Given a square certificate, we first show how to obtain a boolean certificate. The construction proceeds in two steps. In the first step, for each square tuple  $(\{A,B\}, A \cup B, A \cap B)$ in the square certificate, we create four gates $g_1$, $g_2$, $g_3$, $g_4$, and set $\mcC(g_1) = A$, $\mcC(g_2) = B$, $\mcC(g_3) = A \cup B$, and $\mcC(g_4) = A \cap B$. Gates $g_1$, $g_2$ are input gates, $g_3$ is an output OR gate, and $g_4$ is an output AND gate. We add edges from both $g_1$ and $g_2$ to $g_3$ and $g_4$. We do this for each square tuple in the square certificate. Thus, if there are $N$ square tuples, we obtain a boolean circuit with $N$ components and $4N$ gates. We call this the primary boolean circuit, and note its properties below.
	
	\begin{enumerate}
		\item The circuit only has input gates and output gates. Each input gate has fan-out 2, and each output gate has fan-in 2. Further, if $g$, $g'$ are inputs to an AND gate, these are inputs to an OR gate as well.
		\item Consider an involved set $S$ in the square certificate. For every square tuple that has $S$ as a top or bottom point, there is a component in the circuit that has $g$ as an output gate, and $\mcC(g) = S$. For every square tuple that has $S$ is a middle point, there is a square circuit that has $g$ as an input gate, and $\mcC(g) = S$. The converse also holds. Thus, if $n_S^{\ip}$ and $n_S^{\op}$ are the number of input and output gates respectively that the creator function maps to $S$, then $n_S^{\op} - n_S^{\ip} = tb(S) - m(S)$. Hence, $\sum_{i \in [n]} f_i \left( n_i^{\op} - n_i^{\ip} \right) > 0$.
		\item For each $i \in [m]$, the assignment $A_i(g) = (\mcC(g))_i$ is a satisfying assignment. This can be verified by checking each component.
		\item However, $g$ may be an input or output gate, but $\mcC(g)$ an intermediate set, and hence not in $\mcD$.
	\end{enumerate}
	
	Hence, the primary boolean circuit (with the creator function) satisfies all the conditions to be a boolean certificate, except that an input or output gate may be mapped to a set not in $\mcD$. In the second step, we fix this. Now we take two gates $g_1$, $g_2$ of which one (say $g_1$) is an input gate, one ($g_2$) is an output gate, and $\mcC(g_1) = \mcC(g_2)$. We replace these two by a single intermediate gate $g_3$ that takes inputs as did $g_2$, provides output as did $g_1$, has logical operation (AND or OR) as did $g_2$, and set $\mcC(g_3) = \mcC(g_1)$. Note that $g_3$ has fan-in and fan-out 2. We do this for all pairs of gates that satisfy these conditions, one pair at a time. We call the circuit thus obtained the secondary boolean circuit. It is easy to verify by induction on the number of gates replaced that the properties of a boolean certificate satisfied earlier are satisfied now as well. Finally, consider a gate $g \in \ip$, i.e., an input gate in the secondary boolean circuit. Let $\mcC(g) = S$. Since $g$ has not been replaced, there is no output gate that the creator function maps to set $S$. Then since $n_S^{\op} - n_S^{\ip} = tb(S) - m(S)$, it must be true that $m(S) > tb(S)$, and hence $S \in \mcD$ as required. Similarly, if $g$ is an output gate in the secondary boolean circuit, then $\mcC(g) \in \mcD$. This concludes the construction of the boolean certificate.
	
	Given a boolean certificate, we now want to construct a square certificate so that the involved sets are exactly $\{\mcC(g)| g \in \all\}$. For this, we first convert the boolean circuit to a primary boolean circuit by reversing the earlier procedure. That is, if $g$ is an intermediate gate (and hence has two inputs and two outputs), we replace it by an input gate $g_1$ that provides output as did $g$ , and an output gate $g_2$ that takes inputs as did $g$, with $\mcC(g_1) = \mcC(g_2) = \mcC(g)$, and the output gate has the same logical operation as $g$. Continuing this process, we obtain a primary boolean circuit consisting only of input and output gates. This maintains the property that if $g_1$, $g_2$ are inputs to an AND gate, then they are inputs to an OR gate as well. Thus, this primary boolean circuits must consist of components with four gates each: inputs $g_1$ and $g_2$, an output OR gate $g_3$, and an output AND gate $g_4$. We now construct our square certificate by including a square $(\{\mcC(g_1),\mcC(g_2)\},\mcC(g_3), \mcC(g_4))$ for each such component in the boolean circuit.  It is easy to see that for any input or output set $T_i \in \mcD$ in the constructed square certificate, $m(T_i) - tb(T_i) = n_i^{\ip} - n_i^{\op}$. Similarly, for any input or output gate $g$ in boolean certificate with $\mcC(g) = T_i \in \mcD$, we have $n_i^{\ip} - n_i^{\op} =  m(T_i) - tb(T_i)$. Thus $\sum_{i \in [n]} f_i \left( tb(T_i) - m(T_i) \right)  = \sum_{i \in [n]} f_i \left( n_i^{\op} - n_i^{\ip} \right) > 0$. Therefore, the multiset of squares thus obtained is indeed a square certificate. Further, if $S$ is an involved set, then $S = \mcC(g)$ for some gate $g$ in the boolean certificate, as required.
\end{proof}

Recall that in a cyclic boolean circuit, a gate may not get fixed to some value. 
The next lemma crucially shows that any assignment to the input gates of a boolean circuit (of a boolean certificate) always fixes the values at the output gates, and hence for any satisfying assignments $A(\cdot)$, $A'(\cdot)$ to the gates that assign the same values to the input sets, the values assigned to the output gates must be the same as well. 

The proof requires the special structure of the boolean circuit in our boolean certificate, namely that if $g$, $g'$ are inputs to an AND gate, they are also inputs to an OR gate. First, since intermediate gates in the circuit have fan-in = fan-out, and fan-out for input gates = fan-in for output gates = 2, $|\ip| = |\op|$. We will show that  $\op \subseteq \fix^{(x_1,\dots,x_{|\ip|})}$ for any assignment  $(x_1,\dots,x_{|\ip|})$ to the input gates. That is, all the output gates get fixed for any assignment to the input gates.

We show this by giving an algorithm (Algorithm~\ref{fixing}) that takes as input an assignment $(x_1,\dots,x_{|\ip|})$ to the input gates, and assigns value $0$ or $1$ to some subset of gates $\all' \supseteq \op$ by setting $\val(g)$ for gates $g \in \all'$. We will prove that all gates in $\all'$ are in $\fix^{(x_1,\dots,x_{|\ip|})}$ and that $\fix^{(x_1,\dots,x_{|\ip|})}(g) = \val(g)$.

Initially, $\all' = \ip$. The algorithm maintains the invariant that for each gate $g \in \all'$, there is a value $\val(g)$ and a proof in the sense of Definition~\ref{def:fixing} that $\fix^{(x_1,\dots,x_{|\ip|})}(g) = \val(g)$ using gates that were assigned values before $g$. It starts with a gate $X_i \in \all'$ as the current gate. If the current gate is not an output gate, then it is an input to an OR gate (say $g_2$) and an AND gate (say $g_3$). Further, let $g_1$ be the other input to these gates (such a gate must exist, by property of boolean certificates). Then it uses properties from Definition~\ref{def:fixing} to show that at least one of $g_2$, $g_3$ has not been assigned a value, and can be chosen as the current gate and assigned a value so that the invariant is maintained. If the (new) current gate is not an output gate, the algorithm continues by considering the AND and OR gates that the current gate is input to.

\begin{algorithm}
	\caption{Fixing Algorithm($x_1,\dots,x_{|\ip|}$)}\label{fixing}
	\begin{algorithmic}[1]
		\State $\val(g) \gets \bot$ for all gates $g \in \all$
		\For{$i=1$ to $n$}
		\State $\val(X_i) \gets x_i$; $CG \gets X_i$;
		\While{$CG$ is not an output gate}
		\State Let $g_2$, $g_3$ be the OR and AND gate that $CG$ is input to, and let $g_1$ be the other input to these gates 
		\If{$\val(g_2) \neq \bot$ and $\val(g_3) = \bot$}
		\State $\val(g_3) \gets \val(CG)$; $CG \gets g_3$;
		\ElsIf{$\val(g_2) = \bot$ and $\val(g_3) \neq \bot$} 
		\State $\val(g_2) \gets \val(CG)$; $CG \gets g_2$;
		\ElsIf{$\val(g_2) = \val(g_3) = \bot$ and $\val (CG) = 0$}
		\State $\val(g_3) \gets 0$; $CG \gets g_3$;
		\ElsIf{$\val(g_2) = \val(g_3) = \bot$ and $\val (CG) = 1$}
		\State $value(g_2) \gets 1$; $CG \gets g_2$;
		\EndIf
		\EndWhile\label{fixingendwhile}
		\EndFor
		
	\end{algorithmic}
\end{algorithm}

\begin{lemma}
	\label{fixed-output}
	Let $\all'$ be the set of all gates $g$ such that $\val(g)$ is set by the  Fixing Algorithm. Then  $\op \subseteq \all' $, and for all gates $g  \in \all'$, $g $ is in  $\fix^{(x_1,\dots,x_{|\ip|})}$ and $\fix^{(x_1,\dots,x_{|\ip|})}(g) = \val(g)$.
\end{lemma}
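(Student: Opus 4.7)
The plan is to prove the two claims (correctness of each assigned value and coverage of all output gates) together by strong induction on the order in which gates receive values. First I would establish a structural invariant that does most of the work: at the moment the inner loop considers a non-output gate $CG$ with the associated OR/AND gates $g_2$, $g_3$ and sibling input $g_1$, at most one of $g_2, g_3$ can already have a value. Since only $g_1$ and $CG$ feed $g_2$ and $g_3$, and $CG$ has not yet moved, any prior assignment on $g_2$ or $g_3$ must have been produced when $g_1$ was itself the current gate. But $g_1$ is the current gate at most once (the algorithm only steps $CG$ to $\bot$-valued gates and starts each outer iteration at an input), and that single processing can set only one of $g_2, g_3$. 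This rules out the only ``undefined'' branch of the algorithm, so every invocation of the while body assigns a new gate and the inner loop always makes progress.

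Given that invariant, correctness of each assignment follows by case analysis on which branch of the algorithm fires. In the two ``case A'' branches (both $g_2, g_3$ unset) we assign either $\val(g_3) = 0$ when $\val(CG) = 0$, or $\val(g_2) = 1$ when $\val(CG) = 1$; in each situation the fixing proof in the sense of Definition~\ref{def:fixing} is built by taking the new gate as root with the inductively constructed proof tree for $CG$ as its unique child, matching the one-child clauses (AND fixed to $0$, OR fixed to $1$). In the two ``case B'' branches (exactly one of $g_2, g_3$ preassigned) the preassigned value must have been produced by $g_1$'s processing when $g_2$ and $g_3$ were both still $\bot$, i.e.\ in a case A situation at $g_1$; tracing the algorithm's rule at that moment forces $\val(g_1) = 1$ when $g_2$ is the preassigned gate and $\val(g_1) = 0$ when $g_3$ is. Combining the inductive fixing proofs for $CG$ and $g_1$ under the new gate then gives a two-child proof tree matching the AND ``$1 \land 1$'' or OR ``$0 \lor 0$'' clauses, and the assigned value $\val(CG)$ coincides with the correct Boolean operation on $\val(CG), \val(g_1)$. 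This yields $h \in \fix^{(x_1,\dots,x_{|\ip|})}$ with $\fix^{(x_1,\dots,x_{|\ip|})}(h) = \val(h)$ for every gate $h$ that the algorithm touches.

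To conclude $\op \subseteq \all'$, I would first observe $|\ip| = |\op|$ by edge counting, using that intermediate gates have fan-in $=$ fan-out $= 2$ while input and output gates both have total degree $2$. By the progress argument above each outer iteration, starting at a fresh input gate, keeps moving $CG$ until $CG$ is an output gate; since the algorithm only steps to currently unassigned gates, the sequence of $CG$'s visited in a single iteration is repetition-free and must terminate at a previously-unassigned output gate. Hence each iteration contributes one new output gate to $\all'$, and after $|\ip| = |\op|$ iterations every output gate is covered. The main obstacle in the whole proof is the ``at most one of $g_2, g_3$ preassigned'' invariant; it is what makes the algorithm well-defined, and it rests on the specific circuit property from boolean certificates that the inputs to each AND gate are exactly the inputs to a paired OR gate. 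Without that pairing, $g_2$ and $g_3$ could have unrelated predecessors and the invariant could fail.
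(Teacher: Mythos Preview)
Your proof is correct and structurally parallels the paper's: both maintain the invariant that each while iteration assigns exactly one new gate carrying a valid fixing proof, and then use $|\ip|=|\op|$ to conclude that every output gate is reached. The real difference is in how the degenerate case ``both $g_2$ and $g_3$ already assigned'' is ruled out. The paper argues semantically: by its invariant, $g_2$ and $g_3$ would each have fixing proofs that avoid the fresh gate $CG$, forcing $g_1$ to be fixed to $1$ (to justify the OR gate $g_2$) and simultaneously to $0$ (to justify the AND gate $g_3$), contradicting Lemma~\ref{fixed-value}. You instead argue operationally by tracing the algorithm: only $g_1$ and $CG$ feed $g_2$ and $g_3$; $CG$ is freshly assigned, so any earlier write to $g_2$ or $g_3$ came from $g_1$'s single stint as current gate, and that stint writes to only one of the pair. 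Your route is more elementary in that it never invokes Lemma~\ref{fixed-value}, and as a bonus it directly pins down $\val(g_1)$ for use in the case-B proof-tree construction; the paper recovers that information separately from the shape of $g_2$'s or $g_3$'s fixing proof.

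One small imprecision to clean up: in your case-B analysis you describe the resulting proof tree as always two-child. That holds only in the $1\land 1$ and $0\lor 0$ subcases, i.e.\ when $\val(CG)=\val(g_1)$. When $\val(CG)\neq\val(g_1)$ (for instance $\val(g_2)\neq\bot$, so $\val(g_1)=1$, but $\val(CG)=0$), the assignment $\val(g_3)=\val(CG)=0$ is still correct, but the fixing proof for the AND gate $g_3$ at value $0$ uses only $CG$ as a child. This does not harm your conclusion, since in every subcase the assigned value agrees with the Boolean operation on $\val(CG),\val(g_1)$ and an appropriate one- or two-child proof tree exists; just adjust the wording.
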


\begin{proof}
	The algorithm starts from gate $X_1$, and picks a sequence of gates and assigns them a value. Assume first that if we enter the while loop, at least one of the conditions from lines 6, 8, 10, or 12 must be true. Then starting with an input gate $X_i$ (i.e., in the $i$th iteration of the for loop), the algorithm assigns values to distinct gates that did not earlier have a value, until it assigns a value to an output gate. Then for each $X_i$, a distinct output gate is assigned a value, and since $|\ip| = |\op|$, $\op \subseteq \all'$.

	We now show one of the conditions from lines 6, 8, 10, or 12 must be true at any step, and that for all gates $g$ in $\all'$, we have a proof of $g$ being fixed to $\val(g)$.  We claim that the following statement is an invariant: if gate $g$ is assigned a value, then $g$ has a proof of being fixed to this value consisting only of gates that were assigned values before $g$. To show this, note that after the execution of $3$, the invariant will continue to hold if it was true previously, as input gates are trivially fixed to their value.  We show that if we enter the while loop with the  invariant condition then one of the conditions in lines 6, 8, 10, or 12 must be true and the invariant must hold true after execution of lines $5$ to $14$. This will complete the proof. Let $g$ be the current gate $CG$. Then $g$ must be assigned a value in the previous iteration of the while loop. Since we now enter the while loop, $g$ is not an output gate. Let $g_2$, $g_3$ be the OR and AND gates for which $g$ is an input, and let $g_1$ be the other input to these gates. Suppose for a contradiction that $\val(g_2) \neq \bot$ and $\val(g_3) \neq \bot$. By the invariant condition, both $g_2$ and $g_3$ have a proof of being fixed that does not use gate $g$. But $g_2$ is an OR gate, and can have a proof without $g$ only if $g_1$ is fixed to $1$, and $g_3$ can have proof without $g$ only if $g_1$ is fixed to $0$. By Lemma~\ref{fixed-value}, this is a contradiction. Thus one of the conditions 6, 8, 10, or 12 must hold true.
	
	We now show that the invariant condition holds after execution of one of the lines 6, 8, 10, 12. Suppose condition $6$ holds (the case when condition 8 holds is very similar). Then $g_3$ is assigned a value in the current while loop. Since $g_2$ is an OR gate and has a value assigned previously, by the invariant condition $g_1$ must previously (before $g_2$ was assigned its value) been assigned $\val(g_1)=1$. We then get the proof of $g_3$ being fixed to $\val(g_3)$ by making  the root of proofs of $g$ and $g_1$ the children of $g_3$ when $\val(g) = 1$, or the root of the proof of $g$ as the child of $g_3$ when $\val(g) = 0$. 
	
	Suppose condition $10$ holds (the case when condition 12 holds is very similar). Then gate $g_3$ is assigned a value in this while loop. Since $g_3$ is an AND gate, we get a proof of its being fixed to $\val(g_3) = 0$ by making the root of proof of $g$ the only child of $g_3$. 
\end{proof}

A function $f:\{0,1\}^n \rightarrow \{0,1\}$ is a monotone function if for any two assignments $(x_1,\dots,x_n)$ and $(x'_1,\dots,x'_n)$ such that for all $i \in [n], x_i \le x'_i$, $f(x_1,\dots,x_n) \le f(x'_1,\dots,x'_n)$. It is well known that a function $f$ is monotone iff $f(x_1,\dots,x_n) = \sum_{S \subseteq [n]} \alpha_S \prod_{j \in S} x_j$ for some $\alpha_S \in \{0,1\}$ for all $S \subseteq [n]$, and hence $f(\cdot)$ can be written as the OR of the AND of the subsets $S$ with $\alpha_S = 1$.

Given a boolean certificate, we associate with each gate $g$ a function $f_g:\{0,1\}^{|\ip|} \rightarrow \{0,1\}$ as follows. $f_g(x_1, \ldots, x_{|\ip|}) = b \in \{0,1\}$ if the assignment $(x_1,\dots,x_{|\ip|})$ to the input gates fixes gate $g$ to $b$, else $f_g(x_1, \ldots, x_n) = 1$. The next two lemmas show that for each gate $g$, the function $f_g$ is a monotone function, and that for any $(x_1,\dots,x_{|\ip|}) \in \{0,1\}^{|\ip|}$, the assignment $A(g) = f_g(x_1,\dots,x_{|\ip|})$ is a satisfying assignment. These allow us to complete the proof of Lemma~\ref{submodular-lemma}.

\begin{lemma}
	\label{monotone functions}
	The function $f_g$ is a monotone  for each intermediate and output gate $g \in \op \cup \im$.
\end{lemma}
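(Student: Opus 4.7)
The plan is to derive the monotonicity of $f_g$ essentially as an immediate consequence of Lemma~\ref{monotone}, together with the convention that unfixed gates contribute the value $1$. Fix two input assignments $(x_1,\dots,x_{|\ip|}) \le (x_1',\dots,x_{|\ip|}')$ and a gate $g \in \op \cup \im$; since $f_g$ is $\{0,1\}$-valued, I only need to rule out the case $f_g(x) = 1$ and $f_g(x') = 0$, as the case $f_g(x) = 0$ gives the inequality trivially.

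Next I would split the hypothesis $f_g(x) = 1$ into the two situations that the definition of $f_g$ allows. In the first situation, $g$ is fixed to $1$ under the assignment $(x_1,\dots,x_{|\ip|})$; applying the second half of Lemma~\ref{monotone} to the pair $(x_1',\dots,x_{|\ip|}') \ge (x_1,\dots,x_{|\ip|})$ shows $g$ is also fixed to $1$ under $(x_1',\dots,x_{|\ip|}')$, so $f_g(x') = 1$. In the second situation, $g$ is simply unfixed under $(x_1,\dots,x_{|\ip|})$; I would argue by contradiction, supposing $f_g(x') = 0$. Then $g$ is fixed to $0$ under $(x_1',\dots,x_{|\ip|}')$, and the first half of Lemma~\ref{monotone}, applied again to $(x_1',\dots,x_{|\ip|}') \ge (x_1,\dots,x_{|\ip|})$, forces $g$ to be fixed to $0$ under $(x_1,\dots,x_{|\ip|})$ as well, contradicting the assumption that $g$ was unfixed under $(x_1,\dots,x_{|\ip|})$.

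I do not expect any real obstacle here: the statement is essentially a repackaging of Lemma~\ref{monotone}. The only conceptual point worth highlighting is that the asymmetric convention in the definition of $f_g$ (unfixed values are treated as $1$, not $0$) is precisely what is needed to make monotonicity hold, since Lemma~\ref{monotone} asserts that increasing the input assignment can only transform ``unfixed'' into ``fixed to $1$'', never into ``fixed to $0$''. Had the convention been reversed, monotonicity would fail, but as stated the result goes through cleanly for both intermediate and output gates without needing any additional structural property of the boolean certificate.
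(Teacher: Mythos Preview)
Your proposal is correct and essentially mirrors the paper's own proof: both arguments split the hypothesis $f_g(x)=1$ into the two cases ``$g$ is fixed to $1$'' and ``$g$ is unfixed,'' and then invoke Lemma~\ref{monotone} in each case exactly as you do. The only cosmetic difference is that the paper phrases it in terms of flipping a single input bit from $0$ to $1$, whereas you directly compare arbitrary assignments $x \le x'$; since Lemma~\ref{monotone} is already stated for arbitrary comparable inputs, your version is slightly more direct but not meaningfully different.
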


\begin{lemma}
	\label{a satisfying-assignment}
	For any assignment $(x_1,\dots,x_{|\ip|})$ to the input gates, the extension $A$ given by $A(X_i) = x_i$ for all $X_i \in \ip, A(g) = f_g(x_1,\dots,x_{|\ip|})$ for all $g \in \op \cup \im$ 
	is a satisfying assignment.
\end{lemma}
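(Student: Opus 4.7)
The plan is to verify, for every non-input gate $g$ of the boolean circuit, that the assignment $A$ satisfies the logical operation at $g$. I would split the argument by the gate type (AND vs.\ OR) and by whether $g$ is fixed under the given input assignment. The point is that when $g$ is fixed, the rooted proof tree that certifies this immediately supplies the values at the two inputs $g_1,g_2$ of $g$ that we need; and when $g$ is not fixed, violating the gate operation at $g$ would let us \emph{build} a proof tree that fixes $g$, contradicting the assumption.

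More concretely, fix a gate $g \in \op \cup \im$ with inputs $g_1,g_2$, and set $a_i := A(g_i) = f_{g_i}(x_1,\dots,x_{|\ip|})$ for $i=1,2$. Suppose first that $g$ is an AND gate. If $g$ is fixed to $1$, then by Definition~\ref{def:fixing} the proof tree branches at $g$ into two children with value $1$; these children are $g_1$ and $g_2$, so $g_1,g_2$ are fixed to $1$, giving $a_1=a_2=1$ and hence $a_1 \land a_2 = 1 = A(g)$. If $g$ is fixed to $0$, then the proof tree has $g$ with a single child (itself an input of $g$) also fixed to $0$, so $a_1=0$ or $a_2=0$, giving $a_1 \land a_2 = 0 = A(g)$. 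If $g$ is not fixed, then $A(g)=1$ by definition of $f_g$, and we must show $a_1=a_2=1$. Suppose instead $a_1=0$: then $g_1$ is fixed to $0$, and attaching the proof of $g_1$ as the unique child of $g$ (using the AND one-child rule of Definition~\ref{def:fixing}) yields a proof that fixes $g$ to $0$, contradicting $g$ being unfixed. By symmetry $a_2=1$ as well.

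The OR case is entirely dual. If $g$ is fixed to $0$, the proof tree branches into two children of value $0$, so $a_1=a_2=0$ and $a_1\lor a_2 = 0 = A(g)$. If $g$ is fixed to $1$, the proof tree has $g$ with one child of value $1$, so some $a_i=1$ and $a_1\lor a_2 = 1 = A(g)$. If $g$ is not fixed, then $A(g)=1$; and if both $a_1=a_2=0$ we could combine the proofs of $g_1,g_2$ being fixed to $0$ under $g$ using the OR two-child rule, fixing $g$ to $0$ and contradicting unfixedness. Throughout these contradictions I use Lemma~\ref{fixed-value} to ensure a gate cannot simultaneously be fixed to different values.

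The main obstacle I anticipate is subtle: because the circuit is cyclic, I must be careful that the proof trees I build are genuinely \emph{trees} satisfying Definition~\ref{def:fixing}, not just informal derivations — i.e., when I attach the proof of $g_i$ as a child of $g$, the resulting rooted structure is still a tree with leaves at input gates and all internal-node relationships respecting the AND/OR value constraints. This is immediate here because the proof of $g_i$ is itself a tree rooted at $g_i$, and the edge from $g_i$ to $g$ is just the standard wire in the circuit, so the concatenation is a valid tree rooted at $g$. With that observation, the case analysis above handles every non-input gate, establishing that $A$ is a satisfying assignment.
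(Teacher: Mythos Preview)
Your argument is correct and matches the paper's proof essentially step for step: the paper also splits on whether $g$ is fixed, reads off the values of $g_1,g_2$ from the proof tree when it is, and in the unfixed case derives a contradiction by attaching a child's proof under $g$ to fix it. Your write-up simply spells out the OR case and the tree-concatenation step that the paper leaves as ``similar'' and implicit.
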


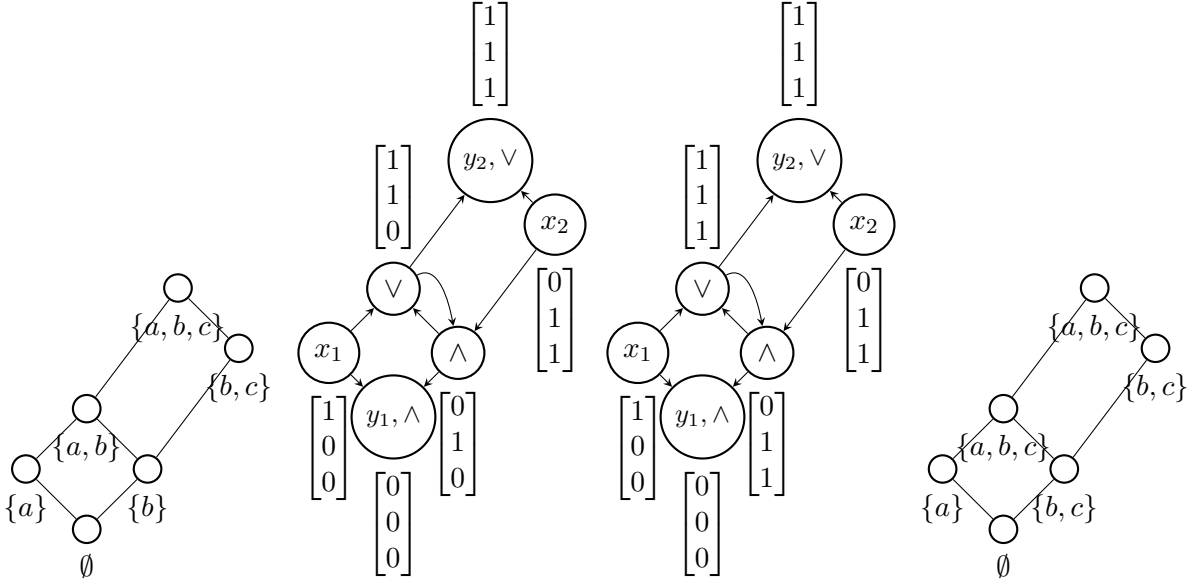
\begin{figure}[!ht]
	\begin{tikzpicture}[scale = .8]
	\node[style = {circle,thick,draw},label = below :$\emptyset$]  (a) at (0, 0) {};
	\node[style = {circle,thick,draw},label = below :$\{b\}$]  (b) at ( 1, 1) {};
	\node[style = {circle,thick,draw},label = below :$\{a\}$]  (c) at ( -1, 1) {};
	\node[style = {circle,thick,draw},label = below :${\{a,b\}}$]  (d) at ( 0, 2) {};
	\node[style = {circle,thick,draw},label = below :${\{b,c\}}$]  (e) at ( 2.5, 3) {};
	\node[style = {circle,thick,draw},label = below :${\{a,b,c\}}$]  (f) at ( 1.5, 4) {};
	\draw (a) -- (b);
	\draw (a) -- (c);
	\draw (b) -- (d);
	\draw (c) -- (d);
	\draw (b) -- (e);
	\draw (d) -- (f);
	\draw (e) -- (f);		
	\end{tikzpicture}
	\begin{tikzpicture}[scale = .85]
	\node[style = {circle,thick,draw},label = below :{$\begin{bmatrix} 0 \\ 0 \\0 \end{bmatrix}$}]  (a) at (0, 0) {\small $y_1,\land$};
	\node[style = {circle,thick,draw},label = below :{$\begin{bmatrix} 0 \\ 1 \\0 \end{bmatrix}$}]  (b) at ( 1, 1) {$\land$};
	\node[style = {circle,thick,draw},label = below :{$\begin{bmatrix} 1 \\ 0 \\0 \end{bmatrix}$}]  (c) at ( -1, 1) {$x_1$};
	\node[style = {circle,thick,draw},label = above :{$\begin{bmatrix} 1 \\ 1 \\0 \end{bmatrix}$}]  (d) at ( 0, 2) {$\lor$};
	\node[style = {circle,thick,draw},label = below :{$\begin{bmatrix} 0 \\ 1 \\1 \end{bmatrix}$}]  (e) at ( 2.5, 3) {$x_2$};
	\node[style = {circle,thick,draw},label = above :{$\begin{bmatrix} 1 \\ 1 \\1 \end{bmatrix}$}]  (f) at ( 1.5, 4) {\small $y_2,\lor$};
	\draw[->,>=stealth] (b) -- (a);
	\draw[->,>=stealth] (c) -- (a);
	\draw[->,>=stealth] (b) -- (d);
	\draw[->,>=stealth](c) -- (d);
	\draw [->,>=stealth] (d) to [out=30,in=100] (b);
	\draw[->,>=stealth] (e) -- (b);
	\draw[->,>=stealth] (d) -- (f);
	\draw[->,>=stealth] (e) -- (f);		
	\end{tikzpicture}
	\begin{tikzpicture}[scale = .85]
	\node[style = {circle,thick,draw},label = below :{$\begin{bmatrix} 0 \\ 0 \\0 \end{bmatrix}$}]  (a) at (0, 0) {\small $y_1,\land$};
	\node[style = {circle,thick,draw},label = below :{$\begin{bmatrix} 0 \\ 1 \\1 \end{bmatrix}$}]  (b) at ( 1, 1) {$\land$};
	\node[style = {circle,thick,draw},label = below :{$\begin{bmatrix} 1 \\ 0 \\0 \end{bmatrix}$}]  (c) at ( -1, 1) {$x_1$};
	\node[style = {circle,thick,draw},label = above :{$\begin{bmatrix} 1 \\ 1 \\1 \end{bmatrix}$}]  (d) at ( 0, 2) {$\lor$};
	\node[style = {circle,thick,draw},label = below :{$\begin{bmatrix} 0 \\ 1 \\1 \end{bmatrix}$}]  (e) at ( 2.5, 3) {$x_2$};
	\node[style = {circle,thick,draw},label = above :{$\begin{bmatrix} 1 \\ 1 \\1 \end{bmatrix}$}]  (f) at ( 1.5, 4) {\small $y_2,\lor$};
	\draw[->,>=stealth] (b) -- (a);
	\draw[->,>=stealth] (c) -- (a);
	\draw[->,>=stealth] (b) -- (d);
	\draw[->,>=stealth](c) -- (d);
	\draw [->,>=stealth] (d) to [out=30,in=100] (b);
	\draw[->,>=stealth] (e) -- (b);
	\draw[->,>=stealth] (d) -- (f);
	\draw[->,>=stealth] (e) -- (f);		
	\end{tikzpicture}
	\begin{tikzpicture}[scale = .8]
	\node[style = {circle,thick,draw},label = below :$\emptyset$]  (a) at (0, 0) {};
	\node[style = {circle,thick,draw},label = below :${\{b,c\}}$]  (b) at ( 1, 1) {};
	\node[style = {circle,thick,draw},label = below :$\{a\}$]  (c) at ( -1, 1) {};
	\node[style = {circle,thick,draw},label = below :${\{a,b,c\}}$]  (d) at ( 0, 2) {};
	\node[style = {circle,thick,draw},label = below :${\{b,c\}}$]  (e) at ( 2.5, 3) {};
	\node[style = {circle,thick,draw},label = below :${\{a,b,c\}}$]  (f) at ( 1.5, 4) {};
	\draw (a) -- (b);
	\draw (a) -- (c);
	\draw (b) -- (d);
	\draw (c) -- (d);
	\draw (b) -- (e);
	\draw (d) -- (f);
	\draw (e) -- (f);
	\end{tikzpicture}
	\caption{(i) A square certificate where involved sets $\{a,b\}$ and $\{b\}$ are not in the lattice closure of defined sets $\{\emptyset,\{a\},\{b,c\},\{a,b,c\}\}$. (ii) Corresponding boolean certificate, the labels are characteristic vectors of the creator function (and satisfying assignments). (iii) Boolean certificate with modified satisfying assignments. (iv) New square certificate.}
	\label{fig:conversion}
\end{figure}

\begin{proof}[Proof of Lemma~\ref{submodular-lemma}.] We prove the second property, since the first was proved earlier. If there is a square certificate for a partial function, then there exists a boolean certificate. From the definition of boolean certificates, for each $i \in [m]$, the assignment to all gates in which gate $g \in \all$ is assigned $ (\mcC(g))_i$  is a satisfying assignment. From Lemmas \ref{intutive-defn} and \ref{fixed-output}, we have $\fix^{((\mcC(X_1))_i,\dots,(\mcC(X_{|IP|})_i)}(g) = (\mcC(g))_i$ for all output gates $g$ and $i \in [m]$ (Recall that $\ip = \{X_1,\dots,X_{|IP|}\}$). Consider the assignment $B^i: \all \rightarrow \{0,1\}$, for each $i \in [m]$, given by $B^i(g) = (\mcC(g))_i$ for all $g \in \ip$ and $B^i(g) = f_g(((\mcC(X_1))_i,\dots,(\mcC(X_{|IP|})_i))$ for all $g \in \op \cup \im$. Hence, $B^i(g) = (\mcC(g))_i$ for all $g \in \ip \cup \op$ and $i \in [m]$.  Consider the function $\mcC':\all \rightarrow 2^{[m]}$ given by $\mcC'(g) = \{i \in [m]| B^i(g) = 1\}$. Note that $\mcC'(g) = \mcC(g)$ for all $g \in \ip \cup \op$ and hence $\mcC'$ satisfies the second property of boolean certificate. Also $(\mcC'(g))_i$ is a satisfying assignment for each $i \in [m]$ by Lemma \ref{a satisfying-assignment}. Therefore, there exists a boolean certificate with creator function $\mcC'$ instead of $\mcC$. By Lemma~\ref{lem:squareboolean}, there exists a square certificate with set of involved sets as $\{\mcC'(g) | g \in \all\}$. Since $f_g$ is a monotone function of the values at the input gates,  we have for all $i \in [m]$,  $B^i(g) = \sum_{S \subseteq \ip} \alpha_S \prod_{j \in S} (\mcC(X_j))_i$ for some $\alpha_S \in \{0,1\}$. Therefore, for all $g \in \all$, $\mcC'(g)$ can be obtained by union and intersection of sets $\mcC(X_j)$ for $X_j \in \ip$, i.e., of input sets.   
\end{proof}
		\section{Convex functions}
A function $f: K \rightarrow \mathbb{R}$ is convex ($K \subseteq \mathbb{R}^m$) iff for all $x,y \in K$ and $\lambda \in [0,1]$,
\[f(\lambda x + (1 - \lambda)y) \le \lambda f(x) + (1 - \lambda)f(y).\]
The problem of extending a convex function is extensively studied in convex analysis. We briefly discuss the work of Yan~\cite{Yan12} and Dragomirescu and Ivan~\cite{DragomirescuI92}, focusing on the presentation by Yan. Given a partial function $f$ that is convex on a non-convex domain $C$, Yan considers extending $f$ to a convex function both within and outside the convex hull $\conv(C)$. For a point $x \in \conv(C)$, he defines function $\hg$ using the well-known convex roof construction:
\begin{align}
\hg(x) = \inf \left\{ \sum_{y \in C} \lambda_y f(y) \, : \, \lambda \ge 0, \, \sum_{y \in C} \lambda_y =1, \, \mbox{ and } \sum_{y \in C} \lambda_y y = x \right\} \, . 
\label{eqn:hgx}
\end{align}

\noindent The infimum thus runs over all possible convex combinations of points $y \in C$ that evaluate to $x$. If the function $f$ is bounded below or the domain $C$ contains a point in the relative interior of $\conv(C)$,  then $\hg$ is a convex extension~\cite{Yan12}.  For a point $x$ outside the convex hull of $C$, assuming now that $f$ is defined  and convex inside  $\conv(C)$, Yan defines
\begin{align}
\tg(x) = \sup \left\{ \lambda f(y) + (1-\lambda) f(z) : x = \lambda y + (1-\lambda) z, ~ y,z \in \conv(C), \lambda \ge 1 \right\} \,. 
\label{eqn:tgx}
\end{align}

The function $\tg$ is a convex extension iff $f$ satisfies the Lipschitz property\footnote{A function $f$ has the Lipschitz property on $K$ if there exists a constant $L$ such that $|f(x) - f(y)| \le L ||x-y|| \quad \forall x,y \in K$} on $\conv(C)$~\cite{Yan12}. The extensions $\hg(x)$ and $\tg(x)$ are optimal in the following sense. Within the convex hull, $\hg(x)$ is maximal, i.e., for any convex function $g$ that extends $f$ to the convex hull of $C$, for any point $x$ inside the convex hull, $g(x) \le \hg(x)$. This is because for optimal $\lambda$ in the definition of $\hg$, $g(x) \le \sum_{y \in C} \lambda_y  g(y) = \sum_{y \in C} \lambda_y  f(y) = \hg(x)$.  Similarly, outside the convex hull, $\tg(x)$ is  minimal, i.e., for any convex function $g$ that extends $f$ (assuming $f$ is defined and convex on  the convex hull) to $\mathbb{R}^m$,  $g(x) \ge \tg(x)$ for any point $x$ outside the convex hull. This again can be easily seen as for optimal $\lambda$ and $y,z\in \conv(C)$ in the definition of $\tg$, $g(x) \ge \lambda g(y) + (1-\lambda) g(z) = \lambda f(y) + (1-\lambda) f(z) = \tg(x)$.

\paragraph*{Our results.} The set of points $\mathcal{D} = \{T_1, \dots, T_n\}$ in the given partial function $H$ corresponds to the domain $C$ described above. We assume for simplicity that $\conv(\mcD)$ has non-zero volume. We show  that Approximate Convex Extension (and hence Convex Extension) can be solved in polynomial time. We also give a unified construction for a convex function $\tf$ that equals $\hg(x)$ inside the convex hull $\conv(\mcD)$ and $\tg(x)$ outside, and show that if there exists a convex extension of $H$, then our construction is a convex extension. However, we show that evaluating $\tf$ for a point $x \in \conv(\mcD)$ is strongly NP-hard. Our results hold for concave functions as well, using the fact that $f$ is a convex function iff $-f$ is concave. Recall that $H = \{(T_1,f_1),\dots,(T_n,f_n)\}$ is our partial function, with each $T_i \in \mathbb{R}^m$ and $f_i \in \mathbb{R}.$

We now give the construction of our convex extension $\tf$. Since our partial function is defined on a finite set of points $\mcD$, the convex roof function $\hg(x)$ from~\eqref{eqn:hgx} is  the optimal value of the linear program Convex-P:

\[
\mbox{Convex-P: } \min \displaystyle\sum_{i = 1}^{n} \lambda_i f_i, \mbox{ s.t. } \displaystyle\sum_{i = 1 }^{n} \lambda_i T_i = x, \quad \displaystyle\sum_{i = 1 }^{n} \lambda_i  =  1, \mbox{ and }  \lambda_i \ge 0 \quad \forall i \in [n] \, .
\]
\[
\mbox{Convex-D: } \max \mu + \displaystyle\sum_{j = 1}^{m} x_j y_j \mbox{ s.t. } \displaystyle\sum_{j = 1 }^{m} (T_i)_j y_j  + \mu \le f_i \quad \forall i \in [n] \, .
\]

\noindent The linear program Convex-D is dual of  Convex-P and has variables $(y_1, \dots, y_m,\mu)$.
  Now let $Q = \{(y_1, \dots, y_m,\mu)|\sum_{j = 1 }^{m} (T_i)_j y_j  + \mu \le f_i \quad \forall i \in [n]\}$ denote the polyhedron of dual feasible solutions. By our assumption that $\conv(\mcD)$ has non-zero volume, $Q$ has at least one vertex (shown in Appendix \ref{sec:convexappendix}). Let $V = \{v_1,\dots,v_N\}$ be the set of vertices of the polyhedron $Q$, and let $v_i = (y^i,\mu^i)$ for all $i \in [N]$, with $y^i \in \mathbb{R}^m$ and $\mu^i \in \mathbb{R}$. We define $\tf$ as the maximum over all vertices of $Q$, of the objective of Convex-D:

\begin{align}
	\tf(x) = \max_{1 \le i \le N}  \left \{\displaystyle\sum_{j = 1}^{m} x_j y^i_j  + \mu^i \right \} \, .
	\label{def:tf}
\end{align}

\noindent Since $\tf(x)$ is the maximum of linear functions, it is convex. Also, $\tf(x) = \hg(x)$ if $x \in \conv(\mcD)$. This is because Convex-P is feasible for $x \in \conv(\mcD)$, and hence the dual is bounded and  there is an optimal  vertex.
\begin{lemma}
	\label{convexext-iff}
	A partial function $H$ can be extended to a convex function on $\mathbb{R}^m$ iff  $ \tf(T_i) = f_i$ for all $i\in [n]$.
\end{lemma}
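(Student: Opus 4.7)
The plan is to prove both directions via LP duality between Convex-P and Convex-D together with the basic convex-combination inequality for any extension.

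First I would record two elementary bounds that $T_i \in \mcD$ always satisfies, regardless of whether an extension exists. By feasibility of every vertex $(y^k,\mu^k) \in Q$, we have $\sum_{j} (T_i)_j y^k_j + \mu^k \le f_i$, so taking the maximum over $k$ gives $\tf(T_i) \le f_i$. Separately, the choice $\lambda_i = 1$ and $\lambda_k = 0$ for $k \neq i$ is primal-feasible for Convex-P at $x = T_i$, so the convex roof satisfies $\hg(T_i) \le f_i$.

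For the forward direction, suppose a convex extension $g$ exists. For any primal-feasible $\lambda$ at $x = T_i$, convexity yields
\[
f_i \;=\; g(T_i) \;=\; g\!\left(\textstyle\sum_k \lambda_k T_k\right) \;\le\; \sum_k \lambda_k g(T_k) \;=\; \sum_k \lambda_k f_k,
\]
so $\hg(T_i) \ge f_i$, and combined with the upper bound above, $\hg(T_i) = f_i$. Since $T_i \in \conv(\mcD)$ makes Convex-P feasible and bounded, LP strong duality together with the fact that $Q$ is pointed (a vertex exists by the assumption that $\conv(\mcD)$ has nonzero volume, deferred to Appendix~\ref{sec:convexappendix}) gives $\tf(T_i) = \hg(T_i) = f_i$.

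For the backward direction, suppose $\tf(T_i) = f_i$ for all $i \in [n]$. Then $\tf$ itself is the required extension: as the pointwise maximum of the finitely many affine functions associated to the vertices of $Q$, it is convex and finite on all of $\mathbb{R}^m$, and by hypothesis it agrees with $H$ on $\mcD$. The only real subtlety in the whole argument is the finiteness of $\tf$, which amounts to $Q$ having at least one vertex; this is the step I expect to be most delicate, and it is precisely the appendix fact that uses the nonzero-volume assumption on $\conv(\mcD)$.
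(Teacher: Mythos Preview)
Your argument is correct and follows essentially the same route as the paper. The paper phrases the forward direction by contrapositive and invokes the pre-established maximality of $\hg$ (any convex extension $g$ satisfies $g(x)\le \hg(x)$ on $\conv(\mcD)$), while you unwind that maximality via Jensen directly; the identity $\tf=\hg$ on $\conv(\mcD)$ and the pointedness of $Q$ that you cite are exactly the facts the paper records just before the lemma, so nothing new is needed.
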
	
\begin{proof}
	Since $\tf(x)$ is convex so if $\tf(T_i) = f_i$ for all $i \in [n]$ then clearly $\tf$ is a convex extension to $\mathbb{R}^m$.	
	
	Now suppose that $\tf(T_k) = \hg(T_k)$ is not equal to $f_k$ for some $k \in [n]$.
	For any $x = T_i$, it is clear that $\hg(x) \le f_i$ (if we set  $\lambda_i = 1$, then the objective value is $f_i$). Thus $\hg(T_k) < f_k$.  If there exists a convex extension $g$ then $g(T_k)  \le \hg(T_k) < f_i$, which is a contradiction.  The first inequality is due to the earlier observation that if $g$ is any convex extension of $H$ on $\conv(\mcD)$ then $g(x) \le \hat{g}(x)$ for all $x \in \conv(\mcD)$. 
\end{proof}	
\begin{theorem}
	\label{convexextinP}
	Convex Extension is in P, and if $H$ is extendible then $\tf$ is an extension.
\end{theorem}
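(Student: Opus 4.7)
The plan is to combine Lemma~\ref{convexext-iff} with the LP characterization of $\hg$ and $\tf$. The correctness claim (that $\tf$ is an extension when $H$ is extendible) follows almost immediately: by Lemma~\ref{convexext-iff}, extendibility gives $\tf(T_i) = f_i$ for all $i \in [n]$, and $\tf$ is convex because it is the pointwise maximum of a finite family of affine functions in~\eqref{def:tf}. Together these two observations yield the desired extension on all of $\mathbb{R}^m$.

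For the polynomial-time algorithm, I would again invoke Lemma~\ref{convexext-iff}: it suffices to test whether $\tf(T_i) = f_i$ for each $i \in [n]$. Since $T_i \in \conv(\mcD)$, the excerpt has already noted $\tf(T_i) = \hg(T_i)$, so the test reduces to checking $\hg(T_i) = f_i$. The value $\hg(T_i)$ is the optimum of the linear program Convex-P with $x := T_i$, which has $n$ variables and $m+1$ equality constraints, and is always feasible (take $\lambda_i = 1$ with all other $\lambda_j = 0$, giving objective value $f_i$; this also reproves the trivial bound $\hg(T_i) \le f_i$). Hence $\hg(T_i)$ can be computed in time polynomial in the input bit-length using any standard polynomial-time LP algorithm. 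Iterating over $i \in [n]$ requires $n$ LP solves, keeping the overall runtime polynomial in the size of $H$.

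The main subtlety to keep in mind is that the definition of $\tf$ in~\eqref{def:tf} ranges over all vertices $v_1, \ldots, v_N$ of the dual polyhedron $Q$, and $N$ can be exponentially large in $n$ and $m$. The proof must avoid ever enumerating these vertices; instead, strong LP duality is what silently does the work, collapsing the maximum over all dual vertices into a single primal LP solve at each defined point. This is precisely the juxtaposition the theorem emphasizes: at defined points (and more generally at any point inside $\conv(\mcD)$) duality trivializes evaluation, whereas at a general query point $x$ — where Convex-P may be infeasible and $\tg$ rather than $\hg$ governs $\tf$ — no such shortcut is available, which is what opens the door to the strong NP-hardness of evaluation established in the subsequent result.
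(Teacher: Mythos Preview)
Your proposal is correct and follows essentially the same route as the paper: invoke Lemma~\ref{convexext-iff} to reduce both claims to checking $\tf(T_i)=\hg(T_i)=f_i$ for each $i$, and compute $\hg(T_i)$ by solving the LP Convex-P. Your added remarks on feasibility of Convex-P at $T_i$ and on why vertex enumeration is avoided are helpful elaborations but do not deviate from the paper's argument.
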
	
\begin{proof}
	By Lemma \ref{convexext-iff}, determining the existence of extension boils down to checking  $ \tf(T_i) = \hg(T_i) = f_i$ for all $i\in [n]$. Since $\hg$ can be efficiently computed by solving Convex-P, Convex Extension is in P. If $H$ is extendible then we have by Lemma \ref{convexext-iff}, $ \tf(T_i) = f_i$ for all $i\in [n]$. Hence, $\tf$ is an extension. 
\end{proof}

\begin{theorem}
	\label{convexextension}
	Approximate Convex Extension is in P.
\end{theorem}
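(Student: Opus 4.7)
The plan is to reduce Approximate Convex Extension to a polynomial-size linear program by combining Lemma~\ref{convexext-iff} with LP duality applied to the convex roof function $\hg$. We seek the minimum $\alpha \ge 1$ such that there exist values $g_1,\dots,g_n$ with $f_i \le g_i \le \alpha f_i$ for all $i$ and the partial function $\{(T_i,g_i)\}_{i \in [n]}$ is convex-extendible to $\mathbb{R}^m$.

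By Lemma~\ref{convexext-iff}, the partial function $\{(T_i,g_i)\}$ is extendible iff $\hg(T_k) = g_k$ for every $k \in [n]$, where $\hg$ is computed with respect to the values $g_i$. The inequality $\hg(T_k) \le g_k$ is automatic (take $\lambda_k = 1$ in Convex-P), so the binding condition is $\hg(T_k) \ge g_k$ for every $k$. Since Convex-P is feasible at $T_k$ (again via $\lambda_k = 1$), strong LP duality gives $\hg(T_k) = \max\{\mu + \sum_j (T_k)_j y_j \; : \; \sum_j (T_i)_j y_j + \mu \le g_i \ \forall i \in [n]\}$, the maximum of Convex-D. Thus $\hg(T_k) \ge g_k$ holds iff there exist $y^k \in \mathbb{R}^m$ and $\mu^k \in \mathbb{R}$ with $\sum_j (T_i)_j y^k_j + \mu^k \le g_i$ for all $i$ and $\sum_j (T_k)_j y^k_j + \mu^k \ge g_k$ (the $i=k$ case of the first constraint forces this to be an equality).

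Assembling these conditions, the optimal $\alpha$ is the value of the following linear program, with variables $\alpha,\ g_1,\dots,g_n,\ \{y^k \in \mathbb{R}^m, \mu^k \in \mathbb{R}\}_{k \in [n]}$:
\begin{align*}
\min \ \alpha \quad \text{s.t.} \quad & \alpha \ge 1,\\
& f_i \le g_i \le \alpha f_i && \forall i \in [n],\\
& \textstyle\sum_{j=1}^{m} (T_i)_j y^k_j + \mu^k \le g_i && \forall i, k \in [n],\\
& \textstyle\sum_{j=1}^{m} (T_k)_j y^k_j + \mu^k \ge g_k && \forall k \in [n].
\end{align*}
The constraints $g_i \le \alpha f_i$ are linear in $(\alpha, g_i)$ since $f_i$ is a constant, so this is a genuine LP with $O(nm)$ variables and $O(n^2)$ constraints. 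The minimum $\alpha$ returned coincides with the optimal approximation ratio by the equivalence above; solving the LP in polynomial time (e.g., by the ellipsoid or interior point method) then gives the result.

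The main subtlety to verify is the strong duality step: we need Convex-P (at $T_k$, with values $g_i$) to be both feasible and bounded below so that its value equals the dual maximum. Feasibility holds trivially via $\lambda_k = 1$, and boundedness below is immediate once the $g_i$ are finite, so strong duality applies and the LP above faithfully encodes Approximate Convex Extension. Unlike the NP-hardness result for evaluating $\tf$ at a single point, here we only need to certify extendibility with a bound at the finitely many defined points, which is exactly what the dual variables $(y^k,\mu^k)$ accomplish.
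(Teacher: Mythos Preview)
Your proof is correct and slightly different from the paper's argument. The paper first shows (Lemma~\ref{gen-ext-convex} in the appendix) that, for any fixed $\alpha$, one can test whether some convex $f$ satisfies $f_i\le f(T_i)\le \alpha f_i$ by computing the convex roof $\hg$ with respect to the \emph{upper} values $d_i=\alpha f_i$ and checking $\hg(T_i)\in[f_i,\alpha f_i]$; it then performs a binary search over $\alpha$. You instead treat the values $g_i$ as variables and, via LP duality, introduce supporting affine functions $(y^k,\mu^k)$ that certify $\hg(T_k)\ge g_k$, yielding a single LP in $(\alpha,g_i,y^k,\mu^k)$. Your route is arguably cleaner: it avoids the precision/range bookkeeping implicit in binary search and directly returns the exact optimal~$\alpha$. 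The paper's route, on the other hand, isolates a reusable feasibility test (existence of a convex $f$ with $c_i\le f(T_i)\le d_i$) that is of independent interest. Both arguments ultimately rest on the same characterization --- a partial function is convex-extendible iff each defined point admits a supporting affine minorant, which is exactly what your dual variables encode.
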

We have shown that $\tf$ is the maximal extension  $\hg$  inside the $\conv(\mcD)$. Interestingly, $\tf(x)$  is also the minimal extension $\tg(x)$  outside the convex hull, where $\tg(x)$ is defined as follows. 
\begin{align*}
\tg(x) =  \sup \left\{ \lambda \hg(y) + (1-\lambda) \hg(z) : x = \lambda y + (1-\lambda) z, y,z \in \conv(\mathcal{D}), \lambda \ge 1 \right\} \, .
\end{align*}
\begin{lemma} \label{minimal-extension}
	For any partial function $H = \{(T_1,f_1),\dots,(T_n,f_n)\}$, we have $\tf(x) = \tg(x)$ for all $x$ outside the convex hull $\conv(\mcD)$. 
\end{lemma}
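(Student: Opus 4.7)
The plan is to prove the identity by showing both $\tg(x) \le \tf(x)$ and $\tg(x) \ge \tf(x)$ for every $x \notin \conv(\mcD)$. The key structural fact I would exploit is that, on $\conv(\mcD)$, the convex roof $\hg$ coincides with the pointwise maximum of the affine functions $v \mapsto v \cdot y^i + \mu^i$ indexed by vertices of $Q$; this is exactly the restriction of $\tf$ to $\conv(\mcD)$.

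For the direction $\tg(x) \le \tf(x)$, I would fix any admissible decomposition $x = \lambda y + (1-\lambda)z$ with $y,z \in \conv(\mcD)$ and $\lambda \ge 1$, and let $(y^{i_y}, \mu^{i_y})$ be a vertex of $Q$ achieving the max in the formula for $\hg(y)$. Because any dual-feasible point gives a lower bound on $\hg$, we also have $\hg(z) \ge z \cdot y^{i_y} + \mu^{i_y}$; multiplying by the nonpositive number $1-\lambda$ flips this to $(1-\lambda)\hg(z) \le (1-\lambda)(z\cdot y^{i_y}+\mu^{i_y})$. Adding the two inequalities collapses the right-hand side into the affine function evaluated at $x$, giving $\lambda \hg(y)+(1-\lambda)\hg(z) \le x\cdot y^{i_y}+\mu^{i_y}\le \tf(x)$, after which taking the sup over decompositions yields the bound.

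For the direction $\tg(x) \ge \tf(x)$, I would pick a vertex $(y^*,\mu^*)$ of $Q$ attaining the max in~\eqref{def:tf}. Vertex-ness in $\mathbb{R}^{m+1}$ forces at least $m+1$ tight constraints, i.e., defined points $T_{i_1},\dots,T_{i_k}$ with $(T_{i_j},1)$ linearly independent, which means the $T_{i_j}$ are affinely independent. Under the running assumption that $\conv(\mcD)$ has non-zero volume, these span a full-dimensional simplex $C^*\subseteq \conv(\mcD)$. The tight dual inequalities combined with the feasibility bound $\hg(\cdot) \ge (\cdot)\cdot y^* + \mu^*$ and convexity of $\hg$ force $\hg(v) = v\cdot y^* + \mu^*$ for every $v\in C^*$. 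I would then choose $z$ in the interior of $C^*$; since $x\notin C^*$, the ray from $z$ through $x$ meets $\partial C^*$ in a point $y\in \conv(\mcD)$, giving a representation $x = \lambda y + (1-\lambda)z$ with $\lambda\ge 1$. Plugging in, both $\hg(y)$ and $\hg(z)$ equal the affine function at the respective points, and the convex combination reassembles to $x\cdot y^* + \mu^* = \tf(x)$, certifying $\tg(x) \ge \tf(x)$.

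The main obstacle is this second direction: one must produce, for the specific dual vertex $(y^*,\mu^*)$ that optimizes $\tf(x)$, a full-dimensional subregion of $\conv(\mcD)$ on which $\hg$ actually equals the affine function $v\mapsto v\cdot y^*+\mu^*$, so that the decomposition $x=\lambda y + (1-\lambda)z$ can be carried out entirely inside this ``flat'' piece of $\hg$. This is where the vertex characterization of $Q$ and the non-degeneracy of $\conv(\mcD)$ enter in an essential way; the inequality direction is otherwise just bookkeeping once the affine identification is in hand.
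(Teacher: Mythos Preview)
Your proposal is correct and complete; both inequalities go through as you describe. It is, however, a genuinely different argument from the paper's.

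For $\tg(x)\le\tf(x)$, the paper first rewrites the supremum as $\tg(x)=\sup_{w\in\conv(\mcD)}L(w,x)$ with $L(w,x)=\lim_{z\to w}\bigl(\lambda\hg(w)+(1-\lambda)\hg(z)\bigr)$, then computes this limit explicitly using the piecewise-affine form of $\hg$ to get $L(w,x)=x\cdot y^{k}+\mu^{k}$ for the vertex $k$ active at $w$. Your direct bound via weak duality, $(1-\lambda)\hg(z)\le(1-\lambda)(z\cdot y^{i_y}+\mu^{i_y})$, accomplishes the same thing without any limit and is shorter.

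For $\tg(x)\ge\tf(x)$, the paper uses the \emph{separating-direction} characterization of the optimal vertex $(y^{t},\mu^{t})$: there is a linear functional $(c_1,\dots,c_m,c)$ strictly maximized over $Q$ at that vertex; one argues $c>0$, sets $w=(c_1/c,\dots,c_m/c)$, shows $w\in\conv(\mcD)$ (else Convex-D at $w$ would be unbounded), and then invokes the limit formula $L(w,x)=\tf(x)$. You instead use the \emph{tight-constraint} characterization: $m{+}1$ affinely independent $T_{i_j}$ whose dual constraints are tight at $(y^*,\mu^*)$ cut out a full-dimensional simplex $C^*\subseteq\conv(\mcD)$ on which $\hg$ coincides with the affine map $v\mapsto v\cdot y^*+\mu^*$, and then you exhibit an explicit decomposition $x=\lambda y+(1-\lambda)z$ with $y,z\in C^*$ achieving $\tf(x)$. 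A small bonus of your route is that it shows the supremum in the definition of $\tg$ is actually \emph{attained}, whereas the paper only realizes it as a limit. One cosmetic point: the full-dimensionality hypothesis is really used to guarantee that $Q$ has a vertex in the first place; once you have a vertex, the $m{+}1$ linearly independent tight rows automatically give a full-dimensional simplex, so your invocation of the assumption at that later step is not where it is doing the work.
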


The function $\tf(x)$ is thus a natural and canonical convex extension of a given partial function. In fact, we know of no other convex extensions that are widely studied. It is then natural to ask, given a partial function, to evaluate the convex extension $\tf(x)$ at a given point. Surprisingly, we show that this problem is strongly \nphard, and hence this canonical extension cannot be efficiently evaluated at a given point, unless P = NP. Since $\tf$ is the maximum over vertices $V = \{v_1,\dots,v_N\}$ of the polyhedron $Q$, one may wonder why it can not be computed by solving the linear program Convex-D.  This is because, for $x$ outside $\conv(\mcD)$, Convex-P is infeasible and  hence Convex-D is unbounded. So the optimal value is $\infty$, whereas $\tf(x)$ is a finite.

 We show a reduction from the \textsf{Optimal Vertex} problem, which is  strongly NP-hard~\cite{optimal-vertex}. In this problem, we are given an $n \times m$ rational matrix $A$, rational $n$-vector $b$, $m$-vector $c$ and a rational number $K$. Then the objective is to  decide if there exists a vertex $v$ of the polyhedron $A y \le b$ with $c^T v \ge K$.

In the proof of hardness of \textsf{Optimal Vertex} in \cite{optimal-vertex}, the instance $A$ and $b$ satisfy the property that the polyhedron $A y \le b$ has at least one vertex. We will use this property in our proof. Before we give the reduction, we state a  property of vertices of polyhedra.
\begin{lemma}[Lemma 8.2 of \cite{bertsimas}\footnote{The first part of this lemma is 8.2(a)  of \cite{bertsimas} whereas the second part   follows from the proof of 8.2(a)}]  
	\label{polytope-property}
	Suppose $A$ is a $n \times m$  matrix and  $b$ is an $n$-vector. Also assume all the entries of $A$ and $b$ are integers and $U$ is the largest absolute value of the entries in $A$ and $b$. Then every extreme point of the polyhedron $P = \{y \in \mathbb{R}^m | Ay \le b\} $ satisfies : 
	\begin{enumerate}
		\item {$|y_j| \le (mU)^m, j = 1,\dots,m$ }
		\item{If $y_j \neq 0$ then $|y_j| \ge \frac{1}{(mU)^m}, j = 1,\dots,m$}
	\end{enumerate}
\end{lemma}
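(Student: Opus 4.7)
The plan is to use the standard fact that an extreme point of $P = \{y \in \mathbb{R}^m : Ay \le b\}$ is uniquely determined by $m$ linearly independent active constraints, and then estimate the resulting solution via Cramer's rule. First I would show that for any vertex $v$ of $P$ there is an index set $I \subseteq [n]$ with $|I|=m$ so that the $m \times m$ submatrix $A_I$ formed by the corresponding rows of $A$ is invertible and $A_I v = b_I$, where $b_I$ is the subvector of $b$ restricted to $I$. This is the usual characterization of a basic feasible solution of a polyhedron in inequality form.

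Next I would apply Cramer's rule: for each $j \in [m]$,
\[
v_j \;=\; \frac{\det(M_j)}{\det(A_I)},
\]
where $M_j$ is the matrix obtained from $A_I$ by replacing its $j$-th column with $b_I$. Both $A_I$ and $M_j$ have integer entries whose absolute values are bounded by $U$. Using the Leibniz expansion, the absolute value of the determinant of any $m \times m$ matrix with entries bounded by $U$ is at most $m!\, U^m \le m^m U^m = (mU)^m$. Since $A_I$ is an integer matrix and is invertible, $|\det(A_I)| \ge 1$, and this immediately gives $|v_j| \le (mU)^m$, proving the first part.

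For the second part, suppose $v_j \ne 0$. Then $\det(M_j) \ne 0$, and because $M_j$ has integer entries, $|\det(M_j)| \ge 1$. Combined with the Hadamard-type upper bound $|\det(A_I)| \le (mU)^m$ from the previous paragraph, Cramer's rule yields
\[
|v_j| \;=\; \frac{|\det(M_j)|}{|\det(A_I)|} \;\ge\; \frac{1}{(mU)^m},
\]
as required.

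There is no real obstacle here beyond being careful that $A_I$ is indeed integer and invertible so that $|\det(A_I)|$ is a nonzero integer, and that the elementary determinant bound $m!\, U^m \le (mU)^m$ suffices (using $m! \le m^m$); one could instead invoke Hadamard's inequality to get the tighter $m^{m/2} U^m$ bound, but the weaker form is exactly what the statement asks for.
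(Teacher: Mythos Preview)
The paper does not prove this lemma itself but merely cites it from Bertsimas and Tsitsiklis, noting in a footnote that the second part follows from the same proof as the first. Your argument via active constraints and Cramer's rule is correct and is precisely the standard proof given in that reference, so there is nothing to add.
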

\begin{theorem}
	\label{convexcomputation}
	Given a partial function $H$, a point $x \in \mathbb{R}^m$ and rational $k$, it is strongly NP-hard  to determine if $\tf(x) \ge k$. However, if the dimension $m$ is constant, $\tilde{f}(x)$ can be computed in polynomial time.
\end{theorem}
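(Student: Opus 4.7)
The plan is to prove the two statements separately: the hardness by a polynomial-time reduction from Optimal Vertex, and the polynomial-time algorithm for constant $m$ by enumerating all candidate vertices of $Q$.

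For the reduction, let $(A, b, c, K)$ be an Optimal Vertex instance with $A \in \mathbb{Q}^{n \times d}$ and $P = \{y : Ay \le b\}$ having at least one vertex. I will construct the partial function $H$ on $\mathbb{R}^m$ with $m = d$ defined by $(T_i, f_i) = (A_i^T, b_i)$ for $i \in [n]$ together with one extra defined point $(T_{n+1}, f_{n+1}) = (\mathbf{0}, 0)$. The corresponding dual polyhedron is then $Q = \{(y, \mu) \in \mathbb{R}^{d+1} : Ay + \mu \mathbf{1} \le b, \, \mu \le 0\}$, and the assumption that $P$ has a vertex forces $\mathrm{rank}(A) = d$, so $\conv(\mcD)$ is full-dimensional and $Q$ has vertices as required. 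The vertices of $Q$ split into two types: \emph{Case 1} vertices $(v, 0)$ in which the added constraint $\mu \le 0$ is tight and $v$ is automatically a vertex of $P$, and \emph{Case 2} vertices $(y^\ast, \mu^\ast)$ with $\mu^\ast < 0$ in which $d+1$ of the original rows $(A_i, 1)$ are linearly independent and the corresponding inequalities are tight.

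Setting $x = c/M$ and $k = K/M$ for a scalar $M$ to be chosen, the question ``is $\tf(x) \ge k$?'' becomes ``does some vertex of $Q$ satisfy $c^T y + M\mu \ge K$?''. Case 1 vertices contribute exactly $c^T v$, so they certify $\tf(x) \ge k$ iff the OV instance answers yes. The main obstacle is ruling out spurious certificates from Case 2 vertices, and this is where Lemma~\ref{polytope-property} enters: every vertex of $Q$ has coordinates bounded in absolute value by $U_Q := ((d+1)U')^{d+1}$, where $U'$ is the largest absolute entry of the defining matrix of $Q$, and every nonzero coordinate is at least $1/U_Q$ in absolute value. In particular $|\mu^\ast| \ge 1/U_Q$ for any Case 2 vertex, so its objective is bounded above by $\|c\|_1 U_Q - M/U_Q$, which is strictly less than $K$ as soon as $M > U_Q(\|c\|_1 U_Q + |K|) + 1$. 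With this choice of $M$ (polynomial in the bit-length of the input), Case 2 vertices never certify $\tf(x) \ge k$, so the answer coincides with that of the OV instance, completing the reduction.

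For the second statement, when $m$ is a fixed constant, $Q \subset \mathbb{R}^{m+1}$ is defined by $n$ inequalities from $H$, and every vertex of $Q$ is the unique solution obtained by taking some $(m+1)$-subset of these inequalities as equalities with linearly independent normal vectors. Since there are at most $\binom{n}{m+1} = O(n^{m+1})$ such subsets, I can enumerate them all, for each subset solve the resulting $(m+1) \times (m+1)$ linear system, check linear independence and feasibility of the solution against the remaining constraints, and then evaluate the objective $x^T y + \mu$; returning the maximum over all valid vertices gives $\tf(x)$ in polynomial time when $m$ is constant.
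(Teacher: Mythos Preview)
Your proposal is correct and follows essentially the same route as the paper: reduce from \textsf{Optimal Vertex} by taking the rows of $A$ together with the origin as the defined points (so that $Q=\{(y,\mu):Ay+\mu\mathbf{1}\le b,\ \mu\le 0\}$), split the vertices of $Q$ into those with $\mu=0$ (in bijection with vertices of $P$) and those with $\mu<0$, and scale $x=c/M$ so that, via the Cramer-type bounds of Lemma~\ref{polytope-property}, the $\mu<0$ vertices can never achieve objective $\ge K$; for constant $m$, enumerate the $O(n^{m+1})$ candidate vertex-defining subsets of constraints. One cosmetic difference: the paper chooses its scaling so that in fact $\tf(x)=\tfrac{1}{L}\max_{v}c^Tv$ (i.e.\ the $\mu<0$ vertices are dominated by \emph{every} $\mu=0$ vertex), whereas you only argue they never exceed the threshold $K$; your weaker claim is all that is needed for the decision problem. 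Your application of Lemma~\ref{polytope-property} directly to $Q\subset\mathbb{R}^{d+1}$ (with bound $U_Q=((d{+}1)U')^{d+1}$) is in fact cleaner than the paper's, which tacitly reuses the bound $M=(mU)^m$ coming from $P$ for the $\mu<0$ vertices as well.
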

\begin{proof}
		First note that the number of vertices of the dual polyhedron $Q$ is bounded by $O(n^m)$, and can be enumerated~\cite{vertices}. If $m$ is constant, this gives a polynomial time algorithm to compute $\tf(x)$.
		
	Let the  instance of \textsf{Optimal Vertex} have parameters   $A$, $b$, $c$ and  $K$, and as noted  we  assume that the polyhedron $A y \le b$ has at least one vertex. We will also assume wlog that all entries in $A$ and $b$ are integers. Also let $U$ be the largest absolute value among all entries of $A$ and $b$. Let $A_i = [a_{i1} a_{i2} \dots a_{im}]^T$ be the $i$th row of matrix $A$.
	
	Let $M = (mU)^m$. We construct the instance for our problem as follows. We set the partial function $H = \{(A_1,b_1),\dots,(A_n,b_n),(\bm{0},0)\}$,  $x =\frac{c}{L}$ and $k = K/L$, where $L = 3 M^2 \sum_{j = 1}^{m} |c_j|$ (we need $L > 2 M^2 \sum_{j = 1}^{m} |c_j|$). Note that the size of $x$ is polynomial in the sizes of $A,b$ and $c$.  Our claim is that $\tilde{f}(x) = \frac{\max_{v} c^T v}{L}$ where maximum is taken over all vertices of polyhedron $A y \le b$. Thus  $\tilde{f}(x) \ge k $ iff there exists a vertex $v$ such that $c^T v \ge K$. To prove the claim, we observe that the polyhedron $Q$ associated with the definition of $\tilde{f}$ is $\{A y  +\bm{\mu} \le b, \mu \le 0\}$  (where $\bm{\mu}$ is an $n$-vector with each entry $\mu$).  Let $Q'$ be the polyhedron $A y \le b$ and $V'$ be the set all vertices of $Q'$ ($V'$ is non-empty). It is easy to see that $\hat{y}$ is a vertex of $Q'$ iff $(\hat{y},0)$ is a vertex of $Q$. Therefore, the polyhedron $Q$ also has at least one vertex. Let $V$  be the set of all vertices of $Q$. We denote any vertex in $Q$ as $(y,\mu)$.  Since $V$ is non-empty,  we have $\tilde{f}(x) = \tilde{f}(\frac{c}{L}) = \max_{(y,\mu) \in V}  (\frac{c}{L})^T y + \mu$ from ~\eqref{def:tf}. If we prove that this maximum can only be attained at the vertices with $\mu = 0$ then we will be done. Note that any vertex $(y,0) \in V$ has the property $  (\frac{c}{L})^T y + \mu \ge -\frac{M \sum_{j} |c_j|}{L}$ (because of first part of Lemma \ref{polytope-property}). Consider a vertex $(y,\mu) \in V$ with $\mu < 0$ (recall that $\mu \le 0$). For such a vertex $   (\frac{c}{L})^T y + \mu \le   \frac{M \sum_{j}|c_j|}{L} -\frac{1}{M}$ (Lemma \ref{polytope-property}). For our choice of $L$, $ -\frac{M \sum_{j} |c_j|}{L} >  \frac{M \sum_{j}|c_j|}{L} -\frac{1}{M} $. Hence,  $\tilde{f}(x) = \max_{v \in V'} \frac{c^T v}{L}$, completing the proof of the claim. 
\end{proof}

	\vspace{0.1in}	
	\noindent \textbf{Conclusion.} Our work is the first to formally study the complexity of partial function extension. We show that results can often be counterintuitive, and shed new light on problems previously studied. Our work also gives a number of new results for learning and property testing.  While there are clearly a large number of interesting open problems, one we particularly would like to highlight is the basic question of membership in NP (or coNP) of partial function extension. We are able to resolve this for XOS, subadditive, and convex functions, but leave it open for submodular functions. Resolving this problem may lead to further insights on the structure of these functions.

	\bibliographystyle{plain}
	\bibliography{bib_agt}

	\appendix
		\section{Subadditive and XOS functions}
\label{sec:subadditiveappendix}

\subsection*{Proof of Theorem \ref{gen-ext-xos} }
Let the given partial function be $H = \{(T_1,f_1),\dots,(T_n,f_n)\}$.
We claim that the optimal value of $\alpha$ for Aproximate XOS Extension (say $\hat{\alpha}$) is equal to the optimal value of $\alpha$ in  the following linear program (say $\alpha^*$), with variables $\alpha$ and  $w_{ij}$ for all $1 \le i \le n$ and $1\le j \le m$. Since the linear program can be  solved in polynomial time, this claim implies that Approximate XOS Extension can be efficiently solved.
\[ \min \alpha\] 
\[f_i \le w_{i}^T \chi(T_i) \le \alpha f_i \quad \forall i \in [n]\]
\[w_{i}^T \chi(T_i) \ge w_{j}^T \chi(T_i) \quad \forall i,j \in [n]\]
\[w_i \in \mathbb{R}^m_+\]
\[\alpha \ge 1 \]
Let the XOS function $g$ corresponding to the optimal solution $\hat{\alpha}$ for Approximate XOS Extension be given by linear functions $v_1,\dots,v_k \in \mathbb{R}^m_+$ for some $k \ge 1$, and let the linear functions be indexed so that  $g(T_i) = v_{i}^T \chi(T_i)$ for  $i \in [n]$ (the same linear function can appear with  multiple indices, i.e., $v_i = v_j$ for $i \neq j$). Then $f_i \le g(T_i) =  v_{i}^T \chi(T_i) \le \hat{\alpha}f_i$ for all $ i \in [n]$, and $v_{i}^T \chi(T_i) \ge v_{j}^T \chi(T_i)$ for all $i,j \in [n]$. It is clear that $\hat{\alpha},\langle v_i \rangle_{i \in [n]}$ are feasible for the linear program, hence $\alpha^* \le \hat{\alpha}$.  By definition, $\hat{\alpha} \le \alpha^*$, since the linear program produces an XOS function that has value within $\alpha^*$ factor at each $T_i$ for all $i\in [n]$. Hence $\hat{\alpha} = \alpha^*$.
	
 \subsection*{Upper bound for Approximate Extension}
We now show the upper bound for the Approximate Extension. 	Let $\mathcal{F}$ and $\mathcal{G}$ be two classes of functions. We say that   $\mathcal{G}$ $\theta$-approximates $\mathcal{F}$ if for all functions $f$ in $\mathcal{F}$, there exists a function $g$ in $\mathcal{G}$ such that $g(S) \le f(S) \le \theta g(S) $ for all $S \subseteq [m]$. We first prove the following lemma.
 \begin{lemma}
 	\label{class-approx}
 	Let $\mathcal{F}$ and $\mathcal{G}$ be two classes of functions so that $\mathcal{G}$ $\theta_1$-approximates $\mathcal{F}$  and  $\mathcal{F}$ $\theta_2$-approximates $\mathcal{G}$. If there is a  $\rho$-approximation algorithm for  Approximate Extension  for $\mathcal{F}$ then there is an $\rho \theta_1 \theta_2$- approximation algorithm for  Approximate Extension  for $\mathcal{G}$.  
 \end{lemma}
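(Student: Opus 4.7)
The plan is to reduce Approximate Extension for $\mathcal{G}$ to Approximate Extension for $\mathcal{F}$ by scaling down the values in the input partial function, running the $\mathcal{F}$-algorithm on the scaled instance, and then translating the returned factor back to $\mathcal{G}$. Throughout, I assume $\mathcal{G}$ is closed under positive scalar multiplication, which holds for every class considered in the paper (XOS, subadditive, submodular, convex).

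The first step is to relate the optima $\alpha^*_\mathcal{G}(H)$ and $\alpha^*_\mathcal{F}(H')$, where $H' := \{(T_i, f_i/\theta_2)\}_{i \in [n]}$. Let $g \in \mathcal{G}$ be an optimal extension of $H$, so $f_i \le g(T_i) \le \alpha^*_\mathcal{G}(H) f_i$. Since $\mathcal{F}$ $\theta_2$-approximates $\mathcal{G}$, there exists $h \in \mathcal{F}$ with $h(S) \le g(S) \le \theta_2 h(S)$ for all $S$. Sandwiching at each $T_i$ gives $f_i/\theta_2 \le h(T_i) \le \alpha^*_\mathcal{G}(H) f_i = (\theta_2 \alpha^*_\mathcal{G}(H)) \cdot (f_i/\theta_2)$, so $h$ is an $\mathcal{F}$-extension of $H'$ with factor $\theta_2 \alpha^*_\mathcal{G}(H)$. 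Hence $\alpha^*_\mathcal{F}(H') \le \theta_2 \alpha^*_\mathcal{G}(H)$.

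Next, I would run the $\rho$-approximation algorithm for $\mathcal{F}$ on input $H'$, obtaining a value $\beta_\mathcal{F} \le \rho \alpha^*_\mathcal{F}(H') \le \rho \theta_2 \alpha^*_\mathcal{G}(H)$ that is achievable, i.e., certified by some $\hat{h} \in \mathcal{F}$ with $f_i/\theta_2 \le \hat{h}(T_i) \le \beta_\mathcal{F} (f_i/\theta_2)$. Applying $\mathcal{G}$ $\theta_1$-approximates $\mathcal{F}$ to $\hat{h}$, there exists $\hat{g} \in \mathcal{G}$ with $\hat{g}(S) \le \hat{h}(S) \le \theta_1 \hat{g}(S)$, which places $\hat{g}(T_i) \in [f_i/(\theta_1\theta_2),\, \beta_\mathcal{F} f_i/\theta_2]$. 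Using the closure of $\mathcal{G}$ under positive scaling, the function $\tilde{g} := \theta_1 \theta_2 \hat{g}$ lies in $\mathcal{G}$ and satisfies $f_i \le \tilde{g}(T_i) \le \theta_1 \beta_\mathcal{F} f_i$, so $\tilde{g}$ extends $H$ in $\mathcal{G}$ with factor at most $\theta_1 \beta_\mathcal{F} \le \rho\theta_1\theta_2\, \alpha^*_\mathcal{G}(H)$. The algorithm for $\mathcal{G}$ therefore outputs $\theta_1 \beta_\mathcal{F}$.

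The one delicate point is the choice of scaling: the naive approach of running the $\mathcal{F}$-algorithm directly on $H$ and converting back would incur a $\theta_2^2$ loss. Scaling $H$ down by $\theta_2$ \emph{before} invoking the $\mathcal{F}$-algorithm ensures that the $\mathcal{F}$ $\theta_2$-approximates $\mathcal{G}$ factor is paid only once (in bounding $\alpha^*_\mathcal{F}(H')$ in terms of $\alpha^*_\mathcal{G}(H)$), while the final rescaling by $\theta_1 \theta_2$ converts the two-sided sandwich from $\mathcal{G}$ $\theta_1$-approximates $\mathcal{F}$ into a valid extension of $H$ itself, yielding exactly the claimed factor $\rho\theta_1\theta_2$.
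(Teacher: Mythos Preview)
Your argument is correct, but it is more elaborate than the paper's. The paper simply runs the $\mathcal{F}$-algorithm on $H$ itself (no rescaling), outputs $\theta_1\alpha$ where $\alpha$ is the returned value, and uses the two sandwich relations $\alpha^*_\mathcal{G}\le\theta_1\alpha^*_\mathcal{F}$ and $\alpha^*_\mathcal{F}\le\theta_2\alpha^*_\mathcal{G}$ (each of which, like your construction of $\tilde g$, implicitly uses closure under positive scaling) to conclude $\alpha^*_\mathcal{G}\le\theta_1\alpha\le\rho\theta_1\theta_2\,\alpha^*_\mathcal{G}$.

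Your extra step of passing to $H'=\{(T_i,f_i/\theta_2)\}$ is harmless but unnecessary: once you assume closure of the classes under positive scalars, the optimal approximation factor is scale-invariant, so $\alpha^*_\mathcal{F}(H')=\alpha^*_\mathcal{F}(H)$ and the two algorithms coincide. In particular, your claim in the last paragraph that the ``naive'' approach of running the $\mathcal{F}$-algorithm directly on $H$ would cost $\theta_2^2$ is mistaken: running on $H$, one gets $\hat h\in\mathcal{F}$ with $f_i\le\hat h(T_i)\le\beta_\mathcal{F} f_i$, passes to $\hat g\in\mathcal{G}$ with $\hat g(T_i)\in[f_i/\theta_1,\beta_\mathcal{F} f_i]$, and then $\theta_1\hat g$ already achieves factor $\theta_1\beta_\mathcal{F}\le\rho\theta_1\theta_2\,\alpha^*_\mathcal{G}(H)$ on $H$ --- the factor $\theta_2$ enters only once, via $\alpha^*_\mathcal{F}(H)\le\theta_2\alpha^*_\mathcal{G}(H)$.
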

 
\begin{proof}
For a given instance of partial function extension,  let $\alpha^*_\mathcal{F}$ and $\alpha^*_\mathcal{G}$ be the optimal value of $\alpha$ in the Approximate Extension problem for $\mathcal{F}$ and $\mathcal{G}$ respectively.  Let $A$ be the $\rho$-approximation algorithm  for $\mathcal{F}$. A $\rho \theta_1 \theta_2$-approximation algorithm for Approximate Extension  for $\mathcal{G}$ is as follows: given any partial function $H$, return $\theta_1 \alpha$ where $\alpha$ is the value returned by algorithm $A$ on $H$. We have $\alpha^*_\mathcal{F} \ge \frac{\alpha}{\rho}$ as $A$ is an $\rho$-approximation algorithm. Since $\mathcal{G}$ $\theta_1$-approximates  $\mathcal{F}$, we have  $\alpha^*_{\mathcal{G}} \le \theta_1 \alpha^*_{\mathcal{F}}$. As  $\alpha^*_{\mathcal{F}} \le  \alpha$ so we have $\alpha^*_{\mathcal{G}} \le \theta_1 \alpha$. Also $\mathcal{F}$ $\theta_2$-approximates  $\mathcal{G}$ so we have $\alpha^*_{\mathcal{F}} \le \theta_2 \alpha^*_{\mathcal{G}}$. Then $\alpha^*_{\mathcal{G}} \ge \frac{\alpha^*_{\mathcal{F}}}{\theta_2} \ge \frac{\alpha}{\rho \theta_2}$. Hence $\frac{\alpha}{\rho \theta_2} \le \alpha^*_{\mathcal{G}} \le \theta_1 \alpha$. This proves our result.	 
 \end{proof}
 
 Recall that  XOS functions are a subclass of subadditive functions. We will use the following result:

 \begin{theorem}[\cite{BhawalkarR11,Dobzinski07}]
 	For any subadditive function $f$, there exists an XOS function $g$ such $g(S) \le f(S) \le O(\log m)g(S)$ for all $S \subseteq [m]$.
 \end{theorem}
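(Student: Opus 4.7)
The plan is to exhibit, for every set $T \subseteq [m]$, a single nonnegative linear function $w^T : [m] \to \mathbb{R}_{\ge 0}$ that (i) is dominated by $f$ everywhere, and (ii) captures at least an $\Omega(1/\log m)$ fraction of $f(T)$ on $T$ itself. Taking $g(S) := \max_T \sum_{i \in S} w^T_i$ then yields an XOS function by construction, with $g(S) \le f(S)$ for all $S$ automatic from (i), and $g(T) \ge \sum_{i \in T} w^T_i \ge f(T)/O(\log m)$ from (ii).

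To build $w^T$, I would solve the linear program
\[
\max \sum_{i \in T} w_i \quad \text{s.t.} \quad \sum_{i \in S} w_i \le f(S) \ \forall S \subseteq [m], \quad w \ge 0,
\]
so that property (i) holds by feasibility. Let $\op(T)$ denote its optimum; I need $\op(T) \ge f(T)/O(\log m)$. By LP duality, $\op(T)$ equals the optimum of the fractional weighted set cover LP
\[
\min \sum_{S} y_S f(S) \quad \text{s.t.} \quad \sum_{S \ni i} y_S \ge 1 \ \forall i \in T, \quad y \ge 0,
\]
so it suffices to lower-bound this fractional cover value by $f(T)/O(\log m)$.

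The key step is to translate fractional covers into subadditivity. Given any fractional cover $y$, I would apply the standard $O(\log m)$-integrality gap for set cover (e.g.\ randomized rounding / greedy) to extract an integral family $T_1, \ldots, T_k$ with $T \subseteq \bigcup_i T_i$ and $\sum_{i=1}^{k} f(T_i) \le O(\log m) \cdot \sum_S y_S f(S)$. Monotone subadditivity of $f$ gives $f(T) \le f(\bigcup_i T_i) \le \sum_i f(T_i)$, so $f(T) \le O(\log m) \cdot \sum_S y_S f(S)$. Taking the minimum over $y$ yields $\op(T) \ge f(T)/O(\log m)$, which is exactly (ii). Assembling, $g(S) = \max_T w^T \cdot \chi(S)$ is the desired XOS function sandwiching $f$ within a factor $O(\log m)$.

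The main obstacle is really a matter of bookkeeping rather than an obstruction: I must make sure the set cover rounding and the subsequent use of subadditivity produce a cover whose $f$-cost truly bounds $f(T)$. Monotone subadditivity handles this in one line (covers of $T$ using arbitrary subsets of $[m]$), while without monotonicity one has to restrict to subsets of $T$ (replacing each $T_i$ by $T_i \cap T$) and argue separately that the LP optimum is unchanged; this is the only place where care is needed.
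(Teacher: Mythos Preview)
The paper does not prove this statement; it is quoted as a black-box result from \cite{BhawalkarR11,Dobzinski07} and used only to feed Lemma~\ref{class-approx}. So there is no ``paper's own proof'' to compare against.

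Your argument is essentially the standard one from the cited references, and it is correct in the monotone subadditive setting (which is the setting of this section). For each $T$ you take the best additive clause dominated by $f$, observe via LP duality that its value on $T$ is the optimum of the fractional set-cover LP with costs $f(\cdot)$, and then invoke the $O(\log|T|)\le O(\log m)$ integrality gap of set cover together with monotonicity and subadditivity to get $f(T)\le O(\log m)\cdot \op(T)$. Taking $g$ to be the pointwise maximum of all these clauses gives the XOS sandwich.

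One caveat on your closing remark: the fix you suggest for the nonmonotone case (replace $T_i$ by $T_i\cap T$) does not rescue the statement as written. XOS functions are automatically monotone, so $g\le f$ everywhere forces $g(S)\le \min_{S'\supseteq S} f(S')$, which for a genuinely nonmonotone $f$ can be arbitrarily small compared to $f(S)$; no $O(\log m)$ factor can repair this. The theorem is really about \emph{monotone} subadditive functions, and your main argument handles exactly that case.
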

 
 From the above results and Lemma \ref{class-approx}, the upper bound for Theorem \ref{gen-ext-subadditive-approx} follows, with $\theta_1 = 1, \theta_2 = O(\log m)$ and $\rho = 1$.
 \subsection*{Proof of Lemma \ref{learning-application}}
 	Consider the distribution $\mu$ that assigns probability mass uniformly to $\mathcal{D} = \{T_1,\dots,T_n\}$ and $0$ elsewhere. We restrict the target function to the family $\mathcal{F'}  = \{f \in \mathcal{F} | f(T_i) \in [1,r] \thinspace \forall i \in [n]\} \subseteq \mathcal{F}$. Suppose there is an  algorithm that PMAC-learns $\mathcal{F'}$ with approximation factor $< r$. Let $g$ be an arbitrary function  in $\mathcal{F'}$. Let the input to the algorithm be $\{(S_i,g(S_i))\}_{1 \le i \le l}$  where  $l$ is $poly(m)$ (let $\epsilon$ and $\delta$ be constant). Let $\mathcal{F}^*  = \{h \in \mathcal{F'} | h(S_i) = g(S_i) \thinspace \forall i \in [l]\}$. By our assumption, given any values in $[1,r]$ at the sets  $\mathcal{D}\setminus \{S_1,\dots,S_l\}$, there is a function in $\mathcal{F}^*$ that takes those values. Let  $S \sim \mu$ and  $v$ be the value returned by the algorithm at $S$. If $\alpha < r$ then  $f^*(S) \le v \le \alpha f^*(S)$ for all target function $f^* \in \mathcal{F}^*$ iff $S \in \{S_1,\dots,S_l\}$. Since $l$ is polynomial (whereas $n$ is superpolynomial), the algorithm then returns a value within $\alpha$ factor with only small probability. Hence  $\alpha$  must be at least $r$. Also above argument  and hence lower bound holds
even if the algorithm knows the distribution $\mu$, allowed unbounded computation and  choose samples $(\{S_i\}_{1 \le i \le l})$  adaptively. 
	
\subsection*{Proof of Lemma \ref{lemma-general-subadditive}}	
	One direction is trivial. If there exists $T_1,\dots,T_r,\cup_{i = 1}^{r} T_i \in \mathcal{D}$ such that  $\sum_{i = 1}^{r} f(T_i) < f(\cup_{i = 1}^{r} T_i)$ then partial function is not extendible. Now assume this is not the case. Let $\mathcal{D}^c := \{S| S = \cup_{i = 1}^{r} A_i \quad \text{for some} \quad A_1,\dots,A_r \in \mathcal{D}\}$ be the union-closure of $\mathcal{D}$. We now define  $\hf$ which is an extension of $f$ to $\mathcal{D}^c$. If $S \in \mathcal{D}$ then  $\hf(S) = f(S)$. If $S \not \in \mathcal{D}$ (and $S \in \mathcal{D}^c$) then $\hf(S)$ is the minimum value of $\sum_{i =1}^{k} f(S_i)$ over all families of sets $(S_1,\dots,S_k)$ such that  each $S_i \in \mathcal{D}$, and $\cup_{i =1}^{k} S_i = S$.  Let $M$ be the maximum value of $\hf$ on $\mathcal{D}^c$. We define an extension of $\hf$ to $2^{[m]}$ by assigning value $M$ to each set not in $\mathcal{D}^c$. Let this extension be $\tf$. We claim that $\tf$ is subadditive.
	
	Note that $M$ is the maximum value of $\tf$. Let $A$ and $B$ be any two sets. If any of $A$ or $B$ is not in $\mathcal{D}^c$ then $\tf(A)+\tf(B)$ is at least $M$ and thus $\tf(A) + \tf(B) \ge \tf(A \cup B)$. Therefore, we assume both $A$ and $B$ are in $\mathcal{D}^c$ which implies $A \cup B$ is also in $\mathcal{D}^c$. Let $A$ be the union of $A_1,\dots,A_r \in \mathcal{D}$ ($r \ge 1$) and $B$ be the union of $B_1,\dots,B_{r'} \in \mathcal{D}$ ($r' \ge 1$). Therefore, $A \cup B$ is union of $A_1,\dots,A_r,B_1,\dots,B_{r'}$. If $A \cup B$ is in $\mathcal{D}$ then by assumption and otherwise by definition of $\hf$, we have $\tf(A) + \tf(B) \ge \tf(A \cup B)$. 
		\section{Submodular functions}
\label{sec:submodularappendix}

\subsection*{Proof of Lemma~\ref{vs-in-sc}.}
Assume for a contradiction that there exists a square certificate and an extension $f(\cdot)$ without a square tuple as described in the lemma. Summing the inequalities $f(A)  + f(B) \ge  f(A \cup B) + f(A \cap B)$ for each square tuple $(A,B,A\cup B, A \cap B)$ in the square certificate, we observe that all the intermediate sets cancel out, since they appear an equal number of times on the left and right hand sides. We get $\sum_{i \in [n]} f_i \thinspace m(T_i) \ge  \sum_{i \in [n]} f_i \thinspace tb(T_i)$ which contradicts property (P2) of the square certificate.

 \subsection*{Proof of Lemma~\ref{fixed-value}.}
  If gate $g$ is fixed to both $b'$ and $b''$, there must be two proofs $G'=(V',E')$ with values $\val'(\cdot)$, and $G''=(V'', E'')$ with values $\val''(\cdot)$ as described above. In the rooted tree $G'$, let $g_0$ be a gate at maximum distance from the root $g$ that has different values in the two proofs (such a gate must exist, since $g$ is such a gate). Gate $g_0$ cannot be a leaf, since all leaves have the same values by definition. Assume that $g_0$ is an AND gate; a similar proof holds if $g_0$ is an OR gate. If $g_0$ has one child in both proofs, since $g_0$ is an AND gate, by definition it must have value $0$ in both proofs. Similarly if $g_0$ has two children in both proofs, it must have value $1$ in both proofs. Now suppose without loss of generality that $g_0$ has one child (say $h_0$) in $G'$, and two children in $G''$. Then $\val'(g_0) = \val'(h_0) = 0$. Since the fan-in for each gate is 2 in the boolean circuit, $h_0$ must be a child of $g_0$ in $G''$ as well, and by assumption, $\val'(h_0) = \val''(h_0) = 0$. But this gives us a contradiction, since if $g_0$ is an AND gate with two children in any proof, then both children must have value 1.

  \subsection*{Proof of Lemma~\ref{intutive-defn}.}
Let $G'=(V',E')$ and $\val'(\cdot)$ be a proof for gate $g$ for the given assignment to the input gates. Let $g_0$ be the gate at maximum distance from the root for which $\val'(g_0) \neq A(g_0)$. The children of $g_0$ by assumption have the same value as in the satisfying assignment $A(\cdot)$, and considering the cases in the construction of the proof gives us a contradiction.

  \subsection*{Proof of Lemma~\ref{monotone}.}
  Let $G'=(V',E')$ and $\val'(\cdot)$ be a proof that fixes $\val(g') = 0$ for the assignment $(x_1, \ldots, x_{|\ip|})$. Then as observed previously, every gate in the proof has value 0, including the input gates as the leaves. Since $(x_1, \ldots, x_{|\ip|})  \ge (x_1', \ldots, x_{|\ip|}')$, the values at the leaves of the rooted tree remain unchanged, and the proof $G'=(V',E')$ and $\val'(\cdot)$ is a proof with inputs $(x_1', \ldots, x_{|\ip|}')$ as well, that fixes gate $g$ to $0$. The proof of the converse holds similarly.

  \subsection*{Proof of Lemma~\ref{monotone functions}.}
  	Consider an assignment in which an input $x_i$ is set to $0$ and $f_g$ is $1$ for some $g \in \op \cup \im$. If the gate $g$ is fixed to $1$ by the assignment then it will continue to be fixed to $1$ after setting $x_i$ to 1 by Lemma \ref{monotone}. Suppose instead that the gate $g$ is not fixed by the initial assignment with $x_i = 0$. In this case also, after setting $x_i =1$, gate $g$ cannot get fixed to $0$ by Lemma~\ref{monotone}. So, flipping any input $x_i$ from $0$ to $1$ can not make the function $f_g$ evaluate from $1$ to $0$. Hence, $f_g$ is monotone.
  	\subsection*{Proof of Lemma~\ref{a satisfying-assignment}.}

 	We will show that the boolean operation at any gate $g$ is not violated by this assignment. Let $g$ be an AND gate with incoming edges from gates $g_1$ and gate $g_2$ (the case when $g$ is an OR gate is similar). Suppose first that $g$ is fixed by the assignment $(A(X_1),\dots,A(X_n))$ to the input gates. Since $g$ is fixed, by Definition~\ref{def:fixing}, either all of $g$, $g_1$ and $g_2$ are fixed to $1$ or $g$ and one of $g_1$ and $g_2$  is fixed to $0$. In either case, the boolean operation is satisfied. Now assume that $g$ is not fixed and hence it is assigned $1$. For a violation, either $g_1$ or $g_2$ must be assigned $0$. Say $g_1$ is assigned 0. But then $g_1$ is fixed, and since $g$ is an AND gate, $g$ must also be fixed to 0, giving a contradiction.

		\section{Convex functions}
\label{sec:convexappendix}

\subsection*{Proof that $Q$ has at least one vertex }

\begin{claim}
 If $\conv(\mcD)$ has non-zero volume then the polyhedron $Q$ has at least one vertex.
\end{claim}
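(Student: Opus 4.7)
The plan is to use the standard characterization that a non-empty polyhedron has at least one vertex if and only if it contains no line, i.e., its lineality space is trivial. So I would split the argument into two short steps: show $Q$ is non-empty, and show $Q$ contains no line.

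For non-emptiness, I would simply exhibit a feasible point. Taking $y = \mathbf{0} \in \mathbb{R}^m$ and $\mu = \min_{i \in [n]} f_i$ gives $\sum_j (T_i)_j \cdot 0 + \mu = \mu \le f_i$ for every $i \in [n]$, so $(y,\mu) \in Q$.

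For the lineality step, I would describe the lineality space of $Q$ explicitly. Since all constraints of $Q$ are of the form $\sum_j (T_i)_j y_j + \mu \le f_i$, the polyhedron contains a line iff there exists a nonzero direction $(d,\delta) \in \mathbb{R}^{m+1}$ such that $\sum_j (T_i)_j d_j + \delta = 0$ for every $i \in [n]$, equivalently $T_i^\top d = -\delta$ for every $i$. I would now use the volume hypothesis: since $\conv(\mcD)$ has non-zero $m$-dimensional volume, the points $T_1,\dots,T_n$ are affinely spanning in $\mathbb{R}^m$, i.e., they do not all lie in any common affine hyperplane. If $d \ne 0$, the equations $T_i^\top d = -\delta$ say precisely that all $T_i$ lie on the affine hyperplane $\{x \in \mathbb{R}^m : x^\top d = -\delta\}$, contradicting full-dimensionality. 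If instead $d = 0$, then the equations force $\delta = 0$ as well, so $(d,\delta) = 0$. Hence the lineality space is trivial.

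Combining the two steps, $Q$ is a non-empty polyhedron with trivial lineality space, so by the standard result it has at least one vertex. The only step requiring any care is the implication from non-zero volume to affine spanning, but that is immediate from the definition of $m$-dimensional volume, so I do not expect any real obstacle here.
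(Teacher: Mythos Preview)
Your proof is correct and follows essentially the same approach as the paper: both argue that the lineality space of $Q$ is trivial because the equations $T_i^\top d + \delta = 0$ for all $i$ would force every $T_i$ onto a common hyperplane, contradicting full-dimensionality of $\conv(\mcD)$, and then invoke the standard fact that a polyhedron with no line has a vertex. Your version is slightly more complete in that you explicitly verify $Q$ is non-empty, which the paper leaves implicit.
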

\begin{proof}
	 Since $\conv(\mcD)$ has non-zero volume, the  $T_i$'s do not lie on a hyperplane. Formally, this means that  $(y,\mu)( y \in \mathbb{R}^{m}, \mu \in \mathbb{R})$ satisfies the inequalities $(T_i)^T y + \mu = 0$ for all $i \in [n]$ iff $(y,\mu) = 0$. This assumption implies that there can not exist a line\footnote{A polyhedron $P \subseteq \mathbb{R}^m$ contains a line if there exists a vector $x \in P$ and a non-zero vector $d \in \mathbb{R}^m$ such that the vector $(x + \lambda d) \in P$ for all $\lambda \in \mathbb{R}$} in $Q$. This is because, for any polyhedron $P = \{Ax \le b\}$ to contain a line, there must exists a point $x \in P$ and a non-zero vector $d$ such that  $Ax + \lambda A d \le b$ for all $\lambda \in \mathbb{R}$ which is possible only if $A d = 0$. Since $Q$ does not contain a line, it must have at least one vertex (Theorem 2.6 of \cite{bertsimas}).
\end{proof}

\subsection*{Proof of Theorem \ref{convexextension} }
We will  show that for any $\alpha \ge 1$,  it can be efficiently determined whether there exists a convex function $f$ such that $f_i \le f(T_i) \le \alpha f_i$ for all $i \in [n]$.  Then by binary search  Approximate Extension can be solved efficiently.  By setting  $c_i = f_i$ and $d_i = \alpha f_i$ in the following Theorem, we can determine the same.
	  \begin{lemma}
	  	\label{gen-ext-convex}
	  	 Given a set $\mathcal{D} = \{T_1,\dots,T_n\}$ and  pairs $c_i,d_i$ ($c_i \le d_i$) associated with each $i \in [n]$,  it can be efficiently determined if there exists a convex function $f: \mathbb{R}^m \rightarrow \mathbb{R}$ such that $c_i \le f(T_i) \le d_i \quad \forall i \in [n]$.
	  \end{lemma}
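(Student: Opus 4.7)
The plan is to reduce the problem to the feasibility of a polynomial-size linear program built around the classical subgradient characterization of convexity. The key claim is the following: given prospective values $v_1,\dots,v_n$ with $v_i$ assigned to $T_i$, the partial function $\{(T_i,v_i)\}_{i\in[n]}$ extends to a convex function on $\mathbb{R}^m$ if and only if there exist vectors $g_1,\dots,g_n\in\mathbb{R}^m$ such that $v_j \ge v_i + g_i^\top(T_j - T_i)$ for all $i,j\in[n]$.

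For the ``if'' direction, given such $v_i$'s and $g_i$'s, I would define $f(x) = \max_{i\in[n]}\bigl(v_i + g_i^\top(x - T_i)\bigr)$, which is a maximum of affine functions and hence convex on all of $\mathbb{R}^m$. The $i=j$ term shows $f(T_j)\ge v_j$, while the posited inequalities give $v_i + g_i^\top(T_j-T_i)\le v_j$ for every $i$, so $f(T_j)=v_j$. For the ``only if'' direction, a convex function $f:\mathbb{R}^m\to\mathbb{R}$ admits a subgradient $g_i$ at every point $T_i$ by standard convex analysis, and the subgradient inequality $f(x)\ge f(T_i)+g_i^\top(x-T_i)$ evaluated at $x=T_j$ yields exactly $v_j\ge v_i+g_i^\top(T_j-T_i)$.

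Given this characterization, I would pose the following LP with variables $v_1,\dots,v_n\in\mathbb{R}$ and $g_1,\dots,g_n\in\mathbb{R}^m$: minimize $0$ subject to $c_i\le v_i\le d_i$ for every $i\in[n]$ and $v_j - v_i \ge g_i^\top(T_j - T_i)$ for every pair $i,j\in[n]$. This LP has $n(m+1)$ variables and $O(n^2)$ constraints, so its feasibility is decidable in time polynomial in the input size. A feasible solution immediately yields a convex $f$ via the max-of-affines construction above with $c_i\le f(T_i)=v_i\le d_i$; infeasibility rules out any such $f$.

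There is no substantive obstacle once the subgradient characterization is in hand. The only conceptual point worth stressing is why the \emph{a priori} infinite family of convexity constraints $v_i \le \sum_j \lambda_j v_j$ (one for each convex combination $\sum_j \lambda_j T_j = T_i$, $\sum_j\lambda_j=1$, $\lambda\ge 0$) collapses to a finite system: by LP duality applied to Convex-P (from the preceding section) for each point $T_i$, the existence of a dual-feasible solution certifying $\hat{g}(T_i)\ge v_i$ is equivalent to the existence of a single subgradient pair $(g_i,\mu_i)$, and eliminating $\mu_i$ recovers exactly the inequalities above. This is the step that makes the reduction to a finite LP work, and it is why the same technique does \emph{not} naively apply to, say, submodular extension, where no such compact dual certificate is available.
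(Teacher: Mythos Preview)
Your argument is correct, and it takes a genuinely different route from the paper. The paper does not introduce subgradients or a single joint LP; instead it fixes the values at the upper bounds $d_i$, computes the convex roof $\hat g$ (via the LP Convex-P) at each point $T_i$, and shows that a convex $f$ with $c_i\le f(T_i)\le d_i$ exists iff $c_i\le \hat g(T_i)\le d_i$ for every $i$. The proof of this equivalence uses maximality of $\hat g$ among convex functions not exceeding the $d_i$'s, together with monotonicity of the convex roof in the prescribed values. So the paper solves $n$ separate LPs and appeals to the machinery already built in the section, whereas you solve one LP in $n(m+1)$ variables with $O(n^2)$ constraints using the subgradient characterization. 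Your route is more self-contained and makes no reference to the convex roof; the paper's route is more integrated with the surrounding development (in particular, it reuses Lemma~\ref{convexext-iff} and the function $\tilde f$). Your final paragraph correctly identifies that the two are linked by LP duality: the dual variables $(y,\mu)$ of Convex-D at a point $T_i$ are precisely your $(g_i,\mu_i)$, which is why both reductions to a polynomial-size LP go through.
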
  
	  \begin{proof}
	   Let $\hat{g}$ be the convex roof function corresponding to the partial function $\{(T_1,d_1),\dots,(T_n,d_n)\}$. Our claim is that there exists a  convex function $f: \mathbb{R}^m \rightarrow \mathbb{R}$ such that $c_i \le f(T_i) \le d_i \quad \forall i \in [n]$ iff $c_i \le \hat{g}(T_i) \le d_i \quad \forall i \in [n]$.  Suppose $\hat{g}(T_i) \in [c_i,d_i]$ for all $i \in [n]$. Consider the function $\tf$ defined by (\ref{def:tf}). We know that $\tf$ is convex and  $\tf(T_i) = \hg(T_i) \in [c_i,d_i]$ for all $i \in [n]$. Thus $\tf$  is the required convex function. For the other direction, suppose for a contradiction that  
  $\hat{g}(T_i) \not \in [c_i,d_i]$ for some $i = i'$ and there exists a required convex function $f$. Therefore, we have $\hat{g}(T_{i'}) < c_{i'}$ as $\hat{g}(T_i) \le d_i$ for all $i \in [n]$ (since the partial function is $\{(T_1,d_1),\dots,(T_n,d_n)\}$).  Now consider a partial function $H' = \{(T_1,f(T_1)),\dots,(T_n,f(T_n))\}$ and its convex roof function $\hat{g'}$. Since $H'$ is clearly extendible,  we have $\hat{g'}(T_i) = f(T_i)$ for all $i \in [n]$ (Lemma \ref{convexext-iff}). Further from the definition of $\hat{g}$ and since $f(T_i) \le d_i$ for all $i$, it follows that $\hat{g'}(T_i) \le \hat{g}(T_i)$ for all $ i \in [n]$. Therefore we have $\hat{g'}(T_{i'}) \le \hat{g}(T_{i'}) < c_{i'} \le f(T_{i'})$.  This is a contradiction since as noted, $\hat{g'}(T_i) = f(T_i)$ for all $i \in [n]$. 
  
  This proves the theorem as the conditions  $c_i \le \hat{g}(T_i) \le d_i $ for all $i \in [n]$ can be checked efficiently.
	  \end{proof}

\subsection*{Proof of Lemma \ref{minimal-extension}}
Fix $x$ outside  and $w$ inside  $\conv(\mathcal{D})$. Let $z \in \conv(\mathcal{D})$ and $ \lambda \ge 1$ be such that $x = \lambda w + (1-\lambda) z$. In the proof we vary $z$, and assume that $\lambda$ changes accordingly so that $x = \lambda w + (1-\lambda) z$.  Since $\hg$ is convex, the value of  $\lambda \hg(w) + (1-\lambda) \hg(z)$ increases as $z$ gets close to $w$. Therefore,  $\tg(x) = \sup_{w \in \conv(\mathcal{D})} L(w,x)$ where
\begin{align}
 L(w,x) = \lim_{z\to w} (\lambda \hg(w) + (1-\lambda)\hg(z)).
 \end{align}
  We know that $\hg(s)  = \max_{1 \le i \le N}   \{\sum_{j = 1}^{m} s_j y^i_j  + \mu^i\}$ (where $(y^i,\mu^i)$ is the vertex of the polyhedron $Q$ for all $i \in [N]$, as defined in main section). Let $1 \le k \le N$ be the index of the vertex that maximizes $\hg(w)$.
Therefore,  $\hg(s)  =   \{\sum_{j = 1}^{m} s_j y^k_j  + \mu^k\}$ at $s = w$ and in vicinity of  $w$ in line segment joining $z$ and $w$. Therefore,
\begin{multline*}
 L(w,x) = \lim_{z\to w}\lambda (\hg(w)-\hg(z)) + \hg(z) = \hg(w) + \lim_{z\to w} \lambda \sum_{j = 1}^{m} (w_j-z_j) y^k_j. 
 \end{multline*}
  Since $x - z = \lambda (w-z)$ so  we have $\lambda =   \frac{x_j-z_j}{w_j-z_j}$ for all $j \in [m]$. Now 
  \begin{multline*}
  L(w,x) = \hg(w) + \lim_{z\to w}\lambda \sum_{j = 1}^{m} (w_j-z_j) y^k_j =  \hg(w) + \lim_{z\to w} \sum_{j = 1}^{m} (x_j-z_j) y^k_j = \\ \hg(w) + \sum_{j = 1}^{m} (x_j-w_j) y^k_j =
  \sum_{j = 1}^{m} x_j y^k_j + \mu^k.
  \end{multline*}
   Thus for any $w \in \conv(\mcD)$, $L(w,x)$ is equal to $\sum_{j = 1}^{m} x_j y^k_j + \mu^k$ where  $1 \le k \le N$ is the index of the vertex that maximizes $\hg(w) = \{\sum_{j = 1}^{m} w_j y^k_j  + \mu^k\}$. This implies that $\tg(x) = \sup_{w \in \conv(\mathcal{D})} L(w,x)$ is at most $\tf(x) = \max_{1 \le i \le N}   \{\sum_{j = 1}^{m} x_j y^i_j  + \mu^i\}$.

Now we will show $\tg(x) \ge \tf(x)$. Let $1 \le t \le N$ be the index of the vertex that maximizes $\tf(x)$. Consider the vertex $(y^t,\mu^t)$. By definition of a vertex, there exists $c_1,\dots,c_m,c$ such that $\sum_{j=1}^{m} c_j y^t_j + c \mu^t > \sum_{j=1}^{m} c_j y_j + c \mu $ for all $(y,\mu) \in Q$. We claim that $c > 0$. First note that the left side of the above inequality is a finite quantity. Also  for $\mu = -\infty$ (sufficiently small),  the point $(y,\mu)$ is in $Q$ for any $y \in \mathbb{R}^m$. If $c < 0$ then the right side can be made  arbritray large by setting $\mu = -\infty$.
 For the case $c = 0$,  at least one of the $c_j$'s must be non zero. Therefore, by setting  $\mu = -\infty$  and $y_j < 0$ if $c_j < 0$ (and $y_j > 0$ if $c_j > 0$) the quantity $\sum_{j=1}^{m} c_j y_j$ can be made arbitrary large.
   Therefore, we assume $c > 0$. 
   Now the above inequality can be written as  $\sum_{j=1}^{m} \frac{c_j}{c} y^t_j +  \mu^t > \sum_{j=1}^{m} \frac{c_j}{c} y_j +  \mu $ for all $(y,\mu) \in Q$. Let  $y = (\frac{c_1}{c},\dots,\frac{c_m}{c})$. The vector $y$ must be in $\conv(\mathcal{D})$  otherwise Convex-D should be unbounded (and hence right side should be unbounded), which is a contradiction. Now $\tg(x) \ge L(y,x)$ as $y \in \conv(\mathcal{D})$. Since  $\sum_{j=1}^{m} \frac{c_j}{c} y^t_j +  \mu^t > \sum_{j=1}^{m} \frac{c_j}{c} y_j +  \mu $ for all $(y,\mu) \in Q$, the function $\hg(s)  = \max_{1 \le i \le N}   \{\sum_{j = 1}^{m} s_j y^i_j  + \mu^i\}$ is given by $\hg(s)  =   \{\sum_{j = 1}^{m} s_j y^t_j  + \mu^t\}$ in a sufficiently small neighbourhood of $y$. Therefore, as before  $L(y,x) = \sum_{j = 1}^{m} x_j y^t_j + \mu^t$ and hence $\tg(x) \ge \tf(x)$.

\end{document}